\mathchardef\hy="2D
\newcommand{\stkout}[1]{\ifmmode\text{\sout{\ensuremath{#1}}}\else\sout{#1}\fi}
\definecolor{deeplilac}{rgb}{0.6, 0.33, 0.73}
\newcommand{\ar}[1]{\textcolor{blue}{\ifmmode \text{[AR: #1]}\else [AR: #1] \fi}}
\newcommand{\tv}[1]{\textcolor{magenta}{\ifmmode \text{[TV: #1]}\else [TV: #1] \fi}}
\newcommand{\lh}[1]{\textcolor{orange}{\ifmmode \text{[LH: #1]}\else [LH: #1] \fi}}
\newcommand{\fz}[1]{\textcolor{green}{\ifmmode \text{[FZ: #1]}\else [FZ: #1] \fi}}
\newcommand{\vs}[1]{\textcolor{deeplilac}{\ifmmode \text{[VS: #1]}\else [VS: #1] \fi}}
\renewcommand{\phi}{\varphi}
\newcommand{\set}[1]{\left\{#1\right\}}
\newcommand{\N}{\mathbb{N}}
\newcommand{\mw}{{-\!\!\ast}}
\DeclareMathOperator{\dom}{\mathsf{dom}}
\DeclareMathOperator{\assign}{\mathtt{assign}}
\newcommand{\evalop}[1]{\mathtt{assign}_{#1}}
\DeclareMathOperator{\load}{\mathtt{load}}
\DeclareMathOperator{\store}{\mathtt{store}}
\DeclareMathOperator{\const}{\mathtt{const}}
\DeclareMathOperator{\malloc}{\mathtt{malloc}}
\DeclareMathOperator{\realloc}{\mathtt{realloc}}
\DeclareMathOperator{\free}{\mathtt{free}}
\DeclareMathOperator{\assert}{\mathtt{assert}}
\DeclareMathOperator{\assume}{\mathtt{assume}}
\DeclareMathOperator{\memcpy}{\mathtt{memcpy}}
\DeclareMathOperator{\memmove}{\mathtt{memmove}}
\DeclareMathOperator{\memset}{\mathtt{memset}}
\DeclareMathOperator{\ls}{\mathsf{ls}}
\DeclareMathOperator{\dls}{\mathsf{dls}}
\newcommand{\substitution}[3]{#1[#2,#3)}
\newcommand{\pure}{\pi}
\newcommand{\True}{true}
\newcommand{\Loc}{\N}
\newcommand{\EMA}{\triangleright} % Entailed modulo abduction
\newcommand{\ABD}[3]{ #1 * \left[ #2 \right] ~\EMA~ #3 }
\newcommand{\size}{\mathsf{size}}
\newcommand{\calloc}{\mathsf{calloc}}
\newcommand{\NULL}{\mathsf{null}}
\newcommand{\emp}{\mathsf{emp}}
\DeclareMathOperator{\ptrplus}{\mathsf{ptrplus}}
\DeclareMathOperator{\ptrminus}{\mathsf{ptrsub}}
\newcommand{\SPEC}[3]{ \{ #1 \}~#2~\{ #3 \}}
\newcommand{\true}{\mathsf{true}}
\newcommand{\biabduct}{\mathit{biabduct}}
\newcommand{\eentry}{\mathit{entry}}
\newcommand{\eexit}{\mathit{exit}}
\newcommand{\close}{\mathit{cut}}
\newcommand{\iinit}{\mathit{init}}
\newcommand{\symb}{\mathit{symb}}
\newcommand{\contracts}{\mathit{Contract}}
\newcommand{\after}{\mathit{after}}
\newcommand{\Byte}{\mathit{Byte}}
\newcommand{\Val}{\mathit{Val}}
\newcommand{\val}{k}
\newcommand{\Var}{\mathit{Var}}
\newcommand{\PVar}{\mathit{PVar}}
\newcommand{\LVar}{\mathit{LVar}}
\renewcommand{\Loc}{\mathit{Loc}}
\newcommand{\Block}{\mathit{Block}}
\newcommand{\Stack}{\mathit{Stack}}
\newcommand{\Interval}{\mathit{Interval}}
\newcommand{\lowerB}{l}
\newcommand{\upperB}{u}
\newcommand{\Blocks}{\mathit{Blocks}}
\newcommand{\Mem}{\mathit{Mem}}
\renewcommand{\stack}{S}
\newcommand{\blocks}{B}
\newcommand{\mem}{M}
\newcommand{\varIn}[1]{\mathit{var}(#1)}
\newcommand{\varX}{x}
\newcommand{\varXX}{X}
\newcommand{\varY}{y}
\newcommand{\varYY}{Y}
\newcommand{\varZ}{z}
\newcommand{\varZZ}{Z}
\newcommand{\varA}{a}
\newcommand{\varB}{b}
\newcommand{\varBB}{B}
\newcommand{\varC}{c}
\newcommand{\varU}{u}
\newcommand{\varV}{v}
\newcommand{\varW}{w}
\newcommand{\pre}{P}
\newcommand{\post}{Q}
\newcommand{\Frame}{F}
\newcommand{\FrameAlt}{G}
\newcommand{\prefix}{U}
\newcommand{\postFormulaSubscript}{\mathit{free}}
\newcommand{\postEQSubscript}{\mathit{eq}}
\newcommand{\woProgVar}{\post_\postFormulaSubscript}
\newcommand{\ProgVarEQ}{\post_\postEQSubscript}
\newcommand{\varUps}{\Upsilon}
\newcommand{\expression}{\varepsilon}
\newcommand{\expressionAlt}{\zeta}
\newcommand{\expressionSize}{\kappa}
\newcommand{\segment}{\Lambda}
\newcommand{\fbody}{\mathcal{B}}
\newcommand{\bBlock}{\mathfrak{b}}
\newcommand{\eBlock}{\mathfrak{e}}
\newcommand{\loc}{\ell}
\newcommand{\cloc}{v}
\newcommand{\Config}{\mathit{Config}}
\newcommand{\denot}[3]{\llbracket #1 \rrbracket_{#2,#3}}
\newcommand{\error}{\mathit{error}}
\newcommand{\fmapsto}{\hookrightarrow}
\newcommand{\ptsto}[2]{#1 \mapsto #2}
\newcommand{\ptstobyterhs}[2]{#1[#2]}
\newcommand{\ptstobyte}[3]{#1 \mapsto \ptstobyterhs{#2}{#3}}
\newcommand{\sll}[1]{\ls_{#1}}
\newcommand{\dll}[1]{\dls_{#1}}
\DeclareMathOperator{\uop}{\mathsf{uop}}
\DeclareMathOperator{\bop}{\mathsf{bop}}
\newcommand{\stmt}{\mathit{stmt}}
\newcommand{\BAP}{\mathtt{BAP}}
\newcommand{\elim}{\mathtt{eliminate}}
\newcommand{\state}[1]{{\footnotesize \begin{enumerate}[label=,topsep=0mm,itemsep=0mm,partopsep=0mm] \item {\color{brown}$#1$}\end{enumerate}}}
\newcommand{\discardstate}[1]{{\footnotesize \begin{enumerate}[label=,topsep=0mm,itemsep=0mm,partopsep=0mm] \item {\color{red}discard state: $#1$}\end{enumerate}}}
\newcommand{\finalstate}[1]{{\footnotesize{\color{blue} summary:} \begin{enumerate}[label=,topsep=0mm,itemsep=0mm,partopsep=0mm] \item {\color{blue}$#1$}\end{enumerate}}}
\newcommand{\candidate}[1]{{\footnotesize{\color{blue} candidate preconditions:} \begin{enumerate}[label=,topsep=0mm,itemsep=0mm,partopsep=0mm] \item {\color{blue}$#1$}\end{enumerate}}}
\newenvironment{cfunction}[2]
  {\vspace{1em}{\fontfamily{lmr}\selectfont #1 \{}\begin{enumerate}[label=#2,topsep=0mm,itemsep=0mm,partopsep=0mm] \item[] }
  {\end{enumerate}{\fontfamily{lmr}\selectfont \}}}
\newcommand{\LINE}[1]{\item {\fontfamily{lmr}\selectfont #1}}
\title{Low-Level Bi-Abduction}
\author{Luk\'{a}\v{s} Hol\'{\i}k}
  {Brno University of Technology, FIT, Czechia} % \url{https://www.fit.vut.cz/~holik/} }
  {holik@fit.vut.cz}
  {https://orcid.org/0000-0001-6957-1651}
  {} % (Optional) author-specific funding acknowledgements
\author{Petr Peringer}
  {Brno University of Technology, FIT, Czechia} % \url{https://www.fit.vut.cz/~peringer/} }
  {peringer@fit.vut.cz}
  {https://orcid.org/0000-0002-8264-8307}
  {} % (Optional) author-specific funding acknowledgements
\author{Adam Rogalewicz}
  {Brno University of Technology, FIT, Czechia} % \url{https://www.fit.vut.cz/~rogalew/} }
  {rogalew@fit.vut.cz}
  {https://orcid.org/0000-0002-7911-0549}
  {} % (Optional) author-specific funding acknowledgements
\author{Veronika \v{S}okov\'{a}}
  {Brno University of Technology, FIT, Czechia} % \url{https://www.fit.vut.cz/~isokova/} }
  {isokova@fit.vut.cz}
  {https://orcid.org/0000-0003-1980-7245}
  {} % (Optional) author-specific funding acknowledgements
\author{Tom\'{a}\v{s} Vojnar}
  {Brno University of Technology, FIT, Czechia} % \url{https://www.fit.vut.cz/~vojnar/} }
  {vojnar@fit.vut.cz}
  {https://orcid.org/0000-0002-2746-8792}
  {} % (Optional) author-specific funding acknowledgements
\author{Florian Zuleger}
  {TU Wien, Informatics,
  Austria} % \url{https://informatics.tuwien.ac.at/people/florian-zuleger} }
  {florian.zuleger@tuwien.ac.at}
  {https://orcid.org/0000-0003-1468-8398}
  {} % (Optional) author-specific funding acknowledgements
\authorrunning{L. Hol\'{\i}k \and P. Peringer \and A. Rogalewicz \and V.
\v{S}okov\'{a} \and T. Vojnar \and F. Zuleger}
\keywords{programs with dynamic linked data structures, programs with pointers,
low-level pointer operations, static analysis, shape analysis, separation logic,
bi-abduction}
\begin{document}
%%%%%%%%%%%%%%%%%%%%%%%%%%%%%%%%%%%%%%%%%%%%%%%%%%%%%%%%%%%%%%%%%%%%%%%%%%%%%%%%

\maketitle

\begin{abstract} The paper proposes a new static analysis designed to handle
open programs, i.e., fragments of programs, with dynamic pointer-linked data
structures---in particular, various kinds of lists---that employ advanced
low-level pointer operations.
The goal is to allow such programs be analysed without a need of writing
analysis harnesses that would first initialise the structures being handled.
The approach builds on a special flavour of separation logic and the approach of
bi-abduction.
The code of interest is analyzed along the call tree, starting from its leaves,
with each function analysed just once without any call context, leading to a set
of contracts summarizing the behaviour of the analysed functions.
In order to handle the considered programs, methods of abduction existing in the
literature are significantly modified and extended in the paper.
The proposed approach has been implemented in a tool prototype and successfully
evaluated on not large but complex programs. \end{abstract}

%%%%%%%%%%%%%%%%%%%%%%%%%%%%%%%%%%%%%%%%%%%%%%%%%%%%%%%%%%%%%%%%%%%%%%%%%%%%%%%%
\section{Introduction}
%%%%%%%%%%%%%%%%%%%%%%%%%%%%%%%%%%%%%%%%%%%%%%%%%%%%%%%%%%%%%%%%%%%%%%%%%%%%%%%%

Programs with complex dynamic data structures and pointer operations are
notoriously difficult to write and understand.
This holds twice when a need to achieve the best possible performance drives
programmers, especially those working in the C language on which we concentrate,
to start using advanced low-level pointer operations such as pointer arithmetic,
bit-masking information on pointers, address alignment, block operations with
blocks that are split to differently sized fields (of size not known in
advance), which can then be merged again, and reinterpreted differently, and so
on.
It may then easily happen that the resulting programs contain nasty errors, such
as null-pointer dereferences, out-of-bound references, double free operations,
or memory leaks, which can manifest only under some rare circumstances, may
escape traditional testing, and be difficult to discover once the program is in
production.

To help discover such problems (or show their absence), suitable static analyses
with formal roots may help.
However, the problem of analysing programs with dynamic pointer-linked data
structures, sometimes referred to as \emph{shape analysis}, belongs among the
most difficult analysis problems, which is related to a need of efficiently
encoding and handling potentially infinite sets of graph structures of
in-advance unknown shape and unbounded size, corresponding to the possible
memory configurations.

Moreover, the problem becomes even harder when one needs to analyse not entire
programs, equipped with some analysis harness generating instances of the data
structures to be handled, but just \emph{fragments of code}, which simply start
handling some dynamic data structures through pointers without the structures
being initialised first.
At the same time, in practice, the possibility of analysing code fragments is
highly preferred since programmers do not like writing specialised analysis
harnesses for initialising data structures of the code to be analysed (not
speaking about that writing such harnesses is error-prone too).
Moreover, the possibility of analysing code fragments can also help scalability
of the analysis since it can then be performed in a modular way.

In this paper, we propose a new analysis designed to analyse programs and even
\emph{fragments} of programs with \emph{dynamic pointer-linked data structures}
that can use advanced \emph{low-level pointer-manipulating operations} of the
form mentioned above.
In particular, we concentrate on sequential C programs without recursion and
without function pointers manipulating various forms of \emph{lists}---singly-linked,
doubly-linked, circular, nested, and/or intrusive, which are
perhaps the most common kind of dynamic linked data structures in practice.

Our approach uses a special flavor of \emph{separation logic} (SL)
\cite{Reynolds:SepLogic:02, OHearn:BI:01} with \emph{inductive list
predicates}~\cite{SLNestedLists07} to characterize sets of program
configurations.
To be able to handle code fragments, we adopt the principle of \emph{bi-abductive
analysis} proposed over SL for analysing programs without low-level pointer
operations in \cite{BiAbd09, BiAbd11}.
Our work can thus be viewed as an extension of the approach of \cite{BiAbd09,
BiAbd11} to programs with truly low-level operations (i.e., pointer arithmetic,
bit-masking on pointers, block operations with blocks of variable size, their
splitting to fields of in-advance-not-fixed size, merging such fields back, and
reinterpreting them differently, etc.).
As will become clear, handling such programs requires rather non-trivial changes
to the abduction procedure used in \cite{BiAbd09, BiAbd11}---intuitively, one
needs new analysis rules for block splitting and merging, new support for
operations such as pointer plus, pointer minus, or block operations (like
$\memcpy$), and also modified support for operations like memory allocation or
deallocation (to avoid deallocation of parts of blocks).
Moreover, to support splitting of memory blocks to parts, gradually learning
their bounds and fields, and to allow for embedding data structures into other
data structures not known in advance (as commonly done, e.g., in the so-called
intrusive lists), we even switch from using the traditional \emph{per-object
separating conjunction} in our SL to a \emph{per-field separating conjunction}
(as used, e.g., in \cite{Mihaela:WeakSep:13} in the context of analysing
so-called overlaid data structures), requiring separation not on the level of
allocated memory blocks but their fields.
As an additional benefit, our usage of per-field separating conjunction then
allows us to represent more compactly even some operations on traditional data
structures (without low-level pointer manipulation).

As common in bi-abductive analyses, we analyse programs, or their fragments,
along their \emph{call tree}, starting from the \emph{leaves} of the call tree
(for the time being, we assume working with non-recursive programs only).
Each function is analysed just once, without any knowledge about its possible
call contexts.
For each function, the analysis derives a set of so-called \emph{contracts},
which can then be used when this function is called from some other function
higher up in the call hierarchy.
A contract for a function $f$ is a pair $(\pre,\post)$ where $\pre$ is a
precondition under which $f$ can be safely executed (without a risk of running
into some memory error such as a null-dereference), and $\post$ is a
postcondition that is guaranteed to be satisfied upon exit from $f$ provided it
was called under the given precondition.
Both $P$ and $Q$ are described using our flavor of SL.
In fact, as also done in \cite{BiAbd09, BiAbd11}, our analysis runs in two
phases: the first phase derives the preconditions, while the second phase
computes the postconditions.
Like in \cite{BiAbd09, BiAbd11}, the computed set of contracts may
\emph{under-approximate} the set of all possible safe preconditions of $f$
(e.g., some extreme but still safe preconditions need not be discovered).
However, for each computed contract $(\pre,\post)$, the post-condition $\post$
is guaranteed to \emph{over-approximate} all configurations that result from
calling the function under the pre-condition $\pre$.

%!!!!!!!!!!!!!!!!!!!!!!!!!!!!!!!!!!!!
\enlargethispage{4mm}
%!!!!!!!!!!!!!!!!!!!!!!!!!!!!!!!!!!!! 

We have implemented our approach in a prototype tool called \emph{Broom}.
We have applied the tool to a selection of code fragments dealing with various
kinds of lists, including very advanced implementations taken from the Linux
kernel as well as the intrusive list library (for a
reference, see our experimental section).
Although the code is not large in the number of lines of code, it contains very
advanced pointer operations, and, to the best of our knowledge, Broom is
currently the only analyser that is capable of analysing many of the involved
functions.

%-------------------------------------------------------------------------------
\paragraph*{Related work}
%-------------------------------------------------------------------------------

In the past (at least) 25 years there have appeared numerous approaches to
automated \emph{shape analysis} or, more generally, analysis of programs with
unbounded dynamically-linked data structures.
These approaches differ in the formalisms used for encoding sets of
configurations of programs with such data structures, in their level of
automation, classes of supported data structures, and/or properties of programs
that are targeted by the analysis: see, e.g., \cite{pale97, Sagiv02,
SLNestedLists07, WiesVerComplexPropSymbShAnal07, sas07:chang_rival_necula,
KuncakFullFVLinkStr08, InvaderCAV08, juggrnaut10, sleek12, dragoi:atva12,
PredatorSAS13, Forester13, 2LS18}.

Not many of the existing approaches offer a reasonably general support of
\emph{low-level pointer operations} (such as pointer arithmetic, address
alignment, masking information on pointers, block operations, etc.).
Some support of low-level pointer operations appears in multiple of these
approaches, but it is often not much documented.
In fact, such a support often appears in some \emph{ad hoc} extension of the
tool implementing the given approach only, without any description whatsoever.
According to the best of our knowledge, the approach of \cite{PredatorSAS13},
based on so-called \emph{symbolic memory graphs (SMGs)}, currently provides
probably the most systematic and generic solution for the case of programs with
low-level pointer operations and various kinds of linked lists (including
advanced list implementations such as those used in the Linux kernel).
Specialised approaches to certain classes of low-level programs, namely,
\emph{memory allocators}, then appear, e.g., in \cite{Dino:BeyondReachShAbs:06,
Mihaela:FreeListMemAlloc:16}.

In this work, we get inspired by some of the analysis capabilities of
\cite{PredatorSAS13}, but we aim at removing one of its main
limitations---namely, the fact that it cannot be applied to a \emph{fragment of
code}.
Indeed, \cite{PredatorSAS13} expects the analysed program to be \emph{closed},
i.e., the analysed functions must be complemented by a \emph{harness} that
initializes all the involved data structures, which severely limits
applicability of the approach in practice (since programmers are often reluctant
to write specialised analysis harnesses).

Approaches allowing one to analyse \emph{open code}, i.e., \emph{code
fragments}, with dynamic linked data structures are not frequent in the
literature.
Perhaps the best known of these works is the approach of \emph{bi-abduction}
based on \emph{separation logic} with (possibly nested) list predicates proposed
in \cite{BiAbd09, BiAbd11} and currently available in the Infer
analyser \cite{Dino:Infer:16}.\footnote{The approach \cite{BiAbd09, BiAbd11}
mentions a generalisation to other classes of data structures, but---to
the best of our knowledge---this extension has not been implemented and
evaluated, and so it is not clear how well it would work in practice.}
This approach is another of the approaches that inspired our work, and we will
be referring to various technical details of that paper later on.
However, despite Infer contains some support of pointer arithmetic, it
is not very complete (as our experiments will show), and the approach presented
in \cite{BiAbd09, BiAbd11} does not at all study low-level pointer operations of
the form that we aim at in this paper.
Moreover, it turns out that adding a support of such operations (e.g., dealing
with blocks of memory of possibly variable size, splitting them to fields of
variable size, merging such fields back and reinterpreting their contents
differently, having pointers with variable offsets, supporting rich pointer
arithmetic, etc.) requires rather non-trivial changes and extensions to the
bi-abduction mechanisms used in~\cite{BiAbd09, BiAbd11}.

An approach of \emph{second-order bi-abduction} based also on separation logic
was proposed in \cite{SecOrderBiAbd14} and several follow-up papers such as
\cite{BiAbdShOrd19}.
The authors consider recursive programs with pointers and propose a calculus for
automatic derivation of sets of equations describing the behaviour of particular
functions.
A solution of such a set of equations leads to a set of contracts for the
considered functions.
The technique is in some sense quite general---unlike \cite{BiAbd09, BiAbd11}
and unlike our approach, it can even automatically learn \emph{recursive
predicates} describing the involved data structures, including trees, skip
lists, etc.
Moreover, the derivation of the equations is a cheap procedure, and no widening
is needed, again unlike in  \cite{BiAbd09, BiAbd11} and unlike in our approach.
On the other hand, finding a solution of the generated equations is a hard
problem, and the authors provide a simple heuristic designed for a~specific
shape of the equations only, which fails in various other cases.

%!!!!!!!!!!!!!!!!!!!!!!!!!!!!!!!!!!!!
\enlargethispage{4mm}
%!!!!!!!!!!!!!!!!!!!!!!!!!!!!!!!!!!!! 

Finally, we mention the Gillian project, a \emph{language-independent framework}
based on separation logic for the development of \emph{compositional symbolic
analysis} tools, including tools for whole-program symbolic execution,
verification of annotated code, as well as bi-abduction~\cite{GilPartI20,
GilSymExecForAll20, GilMultiLangPlatf21, GilPartII21}.
The works on Gillian concentrate on the generic framework it develops, and the
published description of the supported bi-abductive analysis, perhaps most
discussed in \cite{GilSymExecForAll20}, is unfortunately not very detailed.
In particular, it is not clear whether and how much the approach supports the
low-level features of pointer manipulation that we are aiming at here (e.g.,
pointer arithmetic, bit-masking on addresses, etc.).
According to the source code that we were able to find in the Gillian
repository, the examples mentioned in the part of \cite{GilSymExecForAll20}
devoted to bi-abduction do not use low-level pointer manipulation features such
as pointer arithmetic.
It is also mentioned in~\cite{GilSymExecForAll20} that Gillian supports
bi-abduction up to a predefined bound only, whereas we do not require such
a~bound.
Further, in contrast to the present work, \cite{GilSymExecForAll20} assumes that
the size of memory chunks being dynamically allocated is known, and the complex
reasoning needed to resolve this issue is left for the future.

We also note that there is a vast body of work on \emph{automated decision
procedures} for various fragments of separation logic and problems such as
satisfiability and entailment---see, e.g., \cite{EneaLSV14, IRV14,
JensFlorian:BeyondSymHeap:20, LocLe:VMCAI21, Radu:CADE21}.
However, it is not immediate how to apply these logics inside a program analysis
tool.
%
%, in particular, one for low-level code.
%
For example, the results on the cited separation logics cannot be directly
applied to the \mbox{(bi-)abduction} problem, which is the central operation
needed for a compositional program analysis.
This is because the best (i.e., logically weakest) solution to the abduction
problem $\phi * [?] \models \psi$, which is a central problem for compositional
program analyses, with $*$ being the separating conjunction, is given by the
formula $\phi \mw \psi$, which makes use of the magic wand operator $\mw$, and
the cited logics do not provide support for the magic wand.
This is for principle reasons: it has been observed in the literature that magic
wand operators are ``difficult to eliminate''~\cite{books/daglib/0034962};
further, it has been shown that adding only the singly-linked list-segment
predicate to a propositional separation logic that includes the magic wand
already leads to undecidability of the satisfiability
problem~\cite{conf/fossacs/DemriLM18}.
A notable exception is the recent work~\cite{conf/esop/PagelZ21} on a new
semantics for separation logic, which enables decidability of a propositional
separation logic that includes the magic wand and the singly-linked list-segment
predicate (and also discusses applications to the abduction problem); however,
the fragment considered in~\cite{conf/esop/PagelZ21} is not expressive enough to
cover the low-level features considered in this work such as, pointer
arithmetic, memory blocks, etc., and, at present, it is unclear whether the
decidability result can be extended to a richer logic.
For the above reasons, we will in this paper  not target a complete procedure
for the (bi-)abduction problem, but rather, following~\cite{BiAbd09, BiAbd11},
develop approximate procedures and evaluate their usefulness in our case
studies.

%-------------------------------------------------------------------------------
\paragraph*{Main contributions of the paper}
%-------------------------------------------------------------------------------

The paper proposes a new approach for automated bi-abductive analysis of
programs and fragments of programs with pointers, different kinds of linked
lists, and low-level memory operations.  The approach is formalised, implemented
in a prototype tool, and experimentally evaluated. In summary, we make the
following contributions:\begin{itemize}

  \item A specialised dialect of separation logic suitable for automated
  abductive analysis of programs with lists and low-level memory operations (we
  use a separating conjunction between single fields and not whole memory blocks
  as in related approaches, and support fields of unknown and even variable size
  as well as unknown block boundaries).

  \item Contracts for basic programming statements that reflect our low-level
  memory model (see, e.g., the contracts of the \texttt{malloc} and
  \texttt{free} statements), and support for specific statements that permit
  low-level pointer manipulation (e.g., pointer addition).

  \item A set of rules for automated abductive analysis, which not only includes
  variants of rules from related approaches, but also new kinds of rules
  required for handling low-level memory operations (e.g., block splitting).

  \item  A prototype implementation that supports bit-precise reasoning based on
  a reduction of (un\nobreakdash-)sat\-is\-fi\-a\-bil\-i\-ty  of separation
  logic to (un-)satisfiability of SMT over the bit-vectors.

  \item An experimental evaluation of the approach on a number of challenging
  programs.

\end{itemize}

%%%%%%%%%%%%%%%%%%%%%%%%%%%%%%%%%%%%%%%%%%%%%%%%%%%%%%%%%%%%%%%%%%%%%%%%%%%%%%%%
\section{An Illustration of the Approach on an Example}\label{sec:illustration}
%%%%%%%%%%%%%%%%%%%%%%%%%%%%%%%%%%%%%%%%%%%%%%%%%%%%%%%%%%%%%%%%%%%%%%%%%%%%%%%%

Before we start with a systematic description of our approach, we present its
core ideas on an example.
We will consider the code manipulating cyclic doubly-linked lists shown in
Fig.~\ref{fig:example1}.\footnote{The code is written in C. Our later presented
low-level programming language for which we will formalise our approach is not C
but rather close to some of the intermediate languages used when compiling C.
We, however, feel that describing the example in such a language would not be
very understandable. Moreover, all constructions used in our example can be
translated to the later considered language.}
The example is inspired by the principle of \emph{intrusive lists} (as used,
e.g., in Linux kernel lists) where all list operations are defined on some
simple list-linking structure that is then nested into user-defined structures.
It is these user-defined structures that carry the data actually stored in the
lists.
The list manipulating functions, however, know nothing about these larger
structures.
However, the fact that contracts (summaries) derived for functions dealing with
the small linking structures are later to be applied on the larger, user-defined
structures is already problematic for some existing analyses.
When providing an introduction to the approach, we will try to informally
explain the involved notions, yet, due to the complexity of the issues, some
prior knowledge of separation logic  with
\emph{inductive list predicates}, e.g., \cite{SLNestedLists07}, and ideally also
bi-abduction analysis \cite{BiAbd09, BiAbd11} is helpful.

%~~~~~~~~~~~~~~~~~~~~~~~~~~~~~~~~~~~~~~~~~~~~~~~~~~~~~~~~~~~~~~~~~~~~~~~~~~~~~~~
\begin{figure}
\begin{minipage}{\textwidth-1em}
\small
{\fontfamily{lmr}\selectfont
struct dll \{ struct dll *next, *prev; \};\\
struct emb\_dll \{int value; struct dll link; \}; }

\begin{cfunction}{void \textbf{init\_dll}(struct dll *x)}{}
    \state{\pre\equiv x=X, \quad\post\equiv x=X}
  \LINE{x-->next = x;}
    \state{\pre\equiv \ptsto{X}{L_1} * x=X, \quad\post\equiv \ptsto{X}{X} * x=X}
  \LINE{x-->prev = x;}
    \state{\pre\equiv \ptsto{X}{L_1} * \ptsto{X+8}{L_2} * \bBlock(X)=\bBlock(X+8) * x=X, \\
      \post\equiv \ptsto{X}{X} * \ptsto{X+8}{X} *
      \bBlock(X)=\bBlock(X+8) * x=X}
\end{cfunction}
\finalstate{\pre\equiv \ptsto{X}{L_1} * \ptsto{X+8}{L_2} * \bBlock(X)=\bBlock(X+8) *
    x=X, \\
    \post\equiv \ptsto{X}{X}
	* \ptsto{X+8}{X} * \bBlock(X)=\bBlock(X+8) * x=X}
%%%%%%%%%%%%% insert_after
\begin{cfunction}{void \textbf{insert\_after}(struct dll *l, *j)}{(\roman*)}
    \state{\pre\equiv l=L*j=J, \quad\post\equiv l=L * j=J}
  \LINE{struct dll *n = l-->next;}
    \state{\pre\equiv \ptsto{L}{N} * l=L * j=J, \quad\post\equiv \ptsto{L}{N} * l=L * j=J * n=N}
  \LINE{j-->next = n;}
    \state{\pre\equiv \ptsto{L}{N} * \ptsto{J}{B_1} * l=L*j=J, \quad\post\equiv  \ptsto{L}{N} * \ptsto{J}{N} * l=L * j=J * n=N}

  \LINE{j-->prev = l;}
    \state{\pre\equiv \ptsto{L}{N} * \ptsto{J}{B_1} * \ptsto{J+8}{B_2} * \bBlock(J)=\bBlock(J+8) *  l=L*j=J, \\
      \post\equiv \ptsto{L}{N} * \ptsto{J}{N} * \ptsto{J+8}{L} * \bBlock(J)=\bBlock(J+8) * l=L * j=J * n=N}

  \LINE{l-->next = j;}
    \state{\pre\equiv \ptsto{L}{N} * \ptsto{J}{B_1} * \ptsto{J+8}{B_2} * \bBlock(J)=\bBlock(J+8) *  l=L*j=J, \\
       \post\equiv \ptsto{L}{J} * \ptsto{J}{N} * \ptsto{J+8}{L} * \bBlock(J)=\bBlock(J+8) *l=L*j=J*n=N}

  \LINE{n-->prev = j;}
    \state{\pre\equiv \ptsto{L}{N} * \ptsto{J}{B_1} * \ptsto{J+8}{B_2} * \ptsto{N+8}{B_3} * \bBlock(J)=\bBlock(J+8) *
       \bBlock(N)=\bBlock(N+8)
       *  l=L*j=J, \\
       \post\equiv \ptsto{L}{J} * \ptsto{J}{N} * \ptsto{J+8}{L} * \ptsto{N+8}{J} * \bBlock(J)=\bBlock(J+8) *
	\bBlock(N)=\bBlock(N+8)
     * l=L *\\ j=J * n=N}

\end{cfunction}
\finalstate{\pre\equiv \ptsto{L}{N} * \ptsto{J}{B_1} * \ptsto{J+8}{B_2} * \ptsto{N+8}{B_3} *
	\bBlock(J)=\bBlock(J+8) * \bBlock(N)=\bBlock(N+8)
      * l=L * j=J, \\
       \post\equiv \ptsto{L}{J} * \ptsto{J}{N} * \ptsto{J+8}{L} * \ptsto{N+8}{J} *  \bBlock(J)=\bBlock(J+8) *
	\bBlock(N)=\bBlock(N+8)
    * l=L * j=J}

%%%%%%%%%%%%% MAIN
\begin{cfunction}{int \textbf{main}()}{(\alph*)}
    \state{\pre\equiv \emp, \quad\post\equiv \emp}
  \LINE{struct emb\_dll *x = malloc(sizeof(struct emb\_dll));}
    \state{\pre\equiv \emp, \quad\post\equiv \exists X. \  \ptstobyte{X}{\top}{24} * X=\bBlock(X)* x=X}

  \LINE{init\_dll(\&(x-->link));}
    \state{\pre\equiv \emp, \quad\post\equiv \exists X,L_1. \ \ptstobyte{X}{\top}{8} * \ptsto{L_1}{L_1} * \ptsto{L_1+8}{L_1} * X=\bBlock(X)=\bBlock(L_1)=\bBlock(L_1+8)
       * \\ L_1=X+8 * x=X}

  \LINE{struct emb\_dll *i = malloc(sizeof(struct emb\_dll));}
    \state{\pre\equiv \emp, \quad\post\equiv \exists I,X,L_1. \ \ptstobyte{I}{\top}{24} * \ptstobyte{X}{\top}{8} * \ptsto{L_1}{L_1} * \ptsto{L_1+8}{L_1}
       * L_1=X+8 * X=\bBlock(X)=\bBlock(L_1)=\bBlock(L_1+8)  * I=\bBlock(I)* x=X * i=I}

  \LINE{init\_dll(\&(i-->link));}
    \state{\pre\equiv \emp, \quad\post\equiv \exists I,X,L_1,L_2. \ \ptstobyte{i}{\top}{8} * \ptsto{L_2}{L_2} * \ptsto{L_2+8}{L_2} * X\rightarrow \top[8] * \ptsto{L_1}{L_1} * \ptsto{L_1+8}{L_1} *
       L_2=I+8 * L_1=X+8 * X=\bBlock(X)=\bBlock(L_1)=\bBlock(L_1+8)  *
	I=b(I)=b(L_2)=b(L_2+8)
      * x=X * i=I}

  \LINE{insert\_after(\&(x-->link), \&(i-->link));}
    \state{\pre\equiv \emp, \quad\post\equiv \exists I,X,L_1,L_2. \ \ptstobyte{I}{\top}{8} * \ptsto{L_2}{L_1} * \ptsto{L_2+8}{L_1} * \ptstobyte{X}{\top}{8} * \ptsto{L_1}{L_2} * \ptsto{L_1+8}{L_2} *
     L_2=I+8 * L_1=X+8 * X=\bBlock(X)=b(L_1)=\bBlock(L_1+8)  *
	I=\bBlock(I)=\bBlock(L_2)=\bBlock(L_2+8) * \\
        x=X * i=I}
  \item[] \dots
\end{cfunction}
% \>\> {\color{blue}final: M: $\emp$ C: $\emp$ -- memory leak (blocks pointed by x and i are not
   %	reachable from outside).} \\

\end{minipage}

  \caption{An illustrative example of a code working with cyclic doubly-linked
  lists and its analysis.
  The C expressions like \texttt{ptr->field} can be seen as syntactic sugar for
  expressions using pointer arithmetic of the form \texttt{*(ptr +
  offsetof(field))}.
  The $\eBlock(X)$ predicates representing \emph{the end of the block pointed by
  X} are dropped from the $(\pre,\post)$ pairs for simplicity.}

  \label{fig:example1}
\end{figure}
%~~~~~~~~~~~~~~~~~~~~~~~~~~~~~~~~~~~~~~~~~~~~~~~~~~~~~~~~~~~~~~~~~~~~~~~~~~~~~~~

In the code of our illustrative example, the function \texttt{init\_dll} creates
an initial cyclic doubly-linked list consisting of a single node.
The function \texttt{insert\_after} can then insert a new element into the list
after its item pointed by $l$.

Let us note that while the code of the example in Fig.~\ref{fig:example1} may
seem to not use pointer arithmetic,
the code in fact uses pointer arithmetic
on the level of the intermediate code we analyse.
Indeed, each expression \texttt{x–>field} is translated to
\texttt{*(x+offsetof(field))}.
It is of course true that once all the types and fields are known and fixed, one
can avoid dealing with pointer arithmetic in this case. On the other hand, the
fact that we systematically handle it through pointer arithmetic allows us to
smoothly handle even the cases when the types and offsets stop being known
and/or constant (upon which approaches based on dealing with field names fail).

As indicated already in the introduction, we analyse the given code fragment
according to its \emph{call tree}, starting from the leaves (assuming there is
no recursion).
Each function is analysed just once, without any call context.
If successful, the analysis derives a set of contracts for the given function
where each contract is a pair $(\pre,\post)$ consisting of a (conjunctive)
pre-condition and (a possibly disjunctive) post-condition.
In our introductory example, we will restrict ourselves to the simplest case,
namely, having a single, purely conjunctive contract.
In the contracts, both the pre- and post-condition are expressed as SL formulae.
The analysis is \emph{compositional} in that contracts derived for some
functions are then used when analysing functions higher up in the call hierarchy
(moreover, we will view even particular pointer manipulating statements as
special atomic functions and describe them by pre-defined contracts).

As indicated already in the introduction, we analyse the given code fragment
according to its \emph{call tree}, starting from the leaves (assuming there is
no recursion).
Each function is analysed just once, without any call context.
If successful, the analysis derives a set of contracts for the given function
where each contract is a pair $(\pre,\post)$ consisting of a (conjunctive)
pre-condition and (a possibly disjunctive) post-condition.
In our introductory example, we will restrict ourselves to the simplest case,
namely, having a single, purely conjunctive contract.
In the contracts, both the pre- and post-condition are expressed as SL formulae.
The analysis is \emph{compositional} in that contracts derived for some
functions are then used when analysing functions higher up in the call hierarchy
(moreover, we will view even particular pointer manipulating statements as
special atomic functions and describe them by pre-defined contracts).

We begin the illustration of our analysis by analysing the \texttt{init\_dll}
function.
We start the analysis by annotating the first line by the pair $(x=X,x=X)$.
In this pair, the first component is the so-far derived pre-condition of the
function, and the second component is the current symbolic state of the function
under analysis.
Here, the variable $X$ records the value of the program variable $x$ at the
beginning of the function.
While $x$ will be changing in the function, $X$ will never change, and we will
be able to gradually generate constraints on its value to express what must hold
for $x$ at the entry of the function.

After symbolically executing the statement \texttt{x->next = x}, we derive that
the address $X$ must correspond to some allocated memory, containing some
unknown value $L_1$.
This gives us the pre-condition $\ptsto{X}{L_1}$ that is an SL formula stating
exactly the fact that $X$ is allocated and stores the value $L_1$.
The symbolic state is then advanced to say that $X$ is allocated and stores the
value $X$, i.e., it points to itself, which is encoded as $\ptsto{X}{X}$ in SL.

After the subsequent statement \texttt{x->prev = x}, assuming that we work with
64 bit (i.e., 8 bytes) wide addresses, we add to the precondition the fact that
the memory address $X+8$ is allocated as well.
Moreover, the formula $\bBlock(X)=\bBlock(X+8)$ says that $X$ and $X+8$ belong
to the same memory block, i.e., they were, e.g., allocated using one
\texttt{malloc} statement (in fact, we use $\bBlock(X)$ to denote the---so-far
unknown---base address of the block).
The symbolic state is updated by the fact that the value at the address $X+8$ is
also equal to $X$, i.e., $\ptsto{X+8}{X}$.

%!!!!!!!!!!!!!!!!!!!!!!!!!!!!!!!!!!!!
\enlargethispage{4mm}
%!!!!!!!!!!!!!!!!!!!!!!!!!!!!!!!!!!!!

Since there are no further statements in the function, there is no branching, no
loops, and all the statements are deterministic, the final \emph{contract} for
the function is unique and consists of the final pre-condition $\pre\equiv
\ptsto{X}{L_1} * \ptsto{X+8}{L_2} * \bBlock(X)=\bBlock(X+8) * x=X$ and the
post-condition $\post\equiv \ptsto{X}{X} * \ptsto{X+8}{X} *
\bBlock(X)=\bBlock(X+8) * x=X$ obtained from the final symbolic state.
Here, we use ``$*$'' to denote a \emph{per-field separating
conjunction}, which, intuitively, means that while the addresses $X$ and $X+8$,
which are allocated by the formulae $\ptsto{X}{L_1}$ and $\ptsto{X+8}{L_2}$,
may---though need not---belong to a single memory block, the values stored at
these addresses within the block do not overlap.\footnote{In a formula
$\ptsto{a}{b} * \ptsto{c}{d}$ with a per-object separating conjunction, $a$ and
$c$ are two distinct objects allocated in memory (while $b$ and $d$ need not be
allocated and may coincide). With a per-field separating conjunction, $a$ and $c$ are allowed to
be non-overlapping \emph{fields} of the same allocated object.}

The same principles are then used for the computation of the contracts for the
\texttt{insert\_after} and \texttt{main} functions.
Here, let us just highlight a situation that happens, e.g., upon the
\texttt{j->next = n} statement of \texttt{insert\_after}.
Notice that, in its case, the so-far computed precondition $P$ must be extended
by the new requirement $\ptsto{J}{B_1}$, stating that $J$ must be allocated, and
$Q$ is then extended by the fact $\ptsto{J}{N}$, which is the effect of
executing the given statement.
At the same time, however, the rest of the previously computed symbolic state of
the program $Q$ stays untouched (in general, only some part may be preserved).
Given the current symbolic state $Q$ and a statement, the problem of deriving
which precondition is missing and which part of the state will remain untouched
is denoted as the \emph{bi-abduction problem}, and a procedure looking for its
solution is a \emph{bi-abduction procedure}.
The computed missing part of the pre-condition is called the \emph{anti-frame},
and the computed part of the current symbolic state not modified by the
statement being executed is called the \emph{frame}.

When analysing the \texttt{main} function, one does already need not re-analyse
the \texttt{init\_dll} and \texttt{insert\_after} functions---instead, one
simply uses their contracts.
For simplicity, we assume here that \texttt{malloc} always succeeds, and hence
even \texttt{main} is deterministic.
After the execution of \texttt{malloc}, we use the special predicate
$\ptstobyte{x}{\top}{24}$ to express that a sequence of 24 bytes of undefined
contents was allocated.
We allow such blocks (as well as all other kinds of blocks that arise during the
analysis) be \emph{split} to smaller parts whenever this is needed for applying
a contract of some function (or statement).
That happens, e.g., on lines $b$ and $d$ of the \texttt{main} function where the
block $\ptstobyte{X}{\top}{24}$ created by \texttt{malloc} is split to 3 fields
as described by $\ptstobyte{X}{\top}{8} * \ptstobyte{X+8}{\top}{8} *
\ptstobyte{X+16}{\top}{8}$.
The last two of the fields then match the precondition of \texttt{init\_dll},
and the first one becomes a frame (untouched by the function).

Without now going into further details, we note that analysing more complex
functions requires one to solve multiple more problems.
For example, if there appears some non-determinism, one needs to start working
with contracts with disjunctive post-conditions and even with sets of such
contracts.
If the code contains loops, one needs to prevent the analysis from diverging
while generating more and more points-to predicates.
For that, one can use widening in the form of a list abstraction.
The resulting over-approximation may then, however, render some generated
pre-/post-condition pairs unsound, leading to a need to run another phase of the
analysis that will start from the computed pre-conditions and check, without
using abduction any more, what post-condition the code can really guarantee.
We will discuss all these issues in detail in the following.

However, before proceeding, let us stress how significantly the above-mentioned
use of the per-field separation distinguishes our approach from its predecessor
bi-abduction analysis \cite{BiAbd09, BiAbd11}.
That analysis would use \emph{whole-block} predicates of the form
$\ptsto{X}{dll(next:A,prev:B)}$ to describe instances of \texttt{struct dll},
while we use the formula $\ptsto{X}{A} * \ptsto{X+8}{B} *
\bBlock(X)=\bBlock(X+8)$.
The per-field separating conjunction allows us to (1)~express partial information
about a block and (2)~infer a precondition where two (or more) fields can be in
the same block as well as in different blocks.
Point 1 helps us to generate contracts of functions where we do not know the
exact sizes of the allocated block---e.g., \texttt{init\_dll} does not require
the pointer $x$ to point to an instance of \texttt{struct dll}, it can be, e.g.,
used on larger structures, such as, e.g., \texttt{struct emb\_dll}, that
\emph{embed} the original structure.
Point 2 is used in the contract of \texttt{insert\_after} where the formula
$\ptsto{L}{N} * \ptsto{N+8}{B_3}$ describes a memory where it may be that $L=N$
as well as $L\neq N$.
The contract for \texttt{insert\_after} can then be applied on a circular
doubly-linked list consisting of a single item ($L=N$) as well on lists
consisting of more items (\mbox{$L\neq N$})---see Fig. \ref{fig:insert-after} for an
illustration.

\begin{figure}[h]
  \centering
    \includegraphics[width=.4\textwidth]{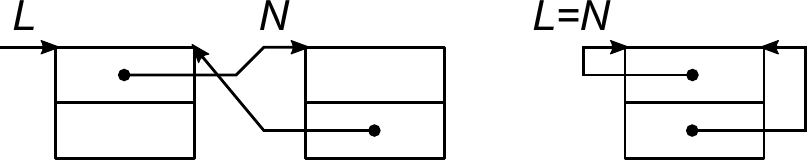}
  \caption{An illustration of the two possible input configurations of
  the \texttt{insert\_after} procedure}
  \label{fig:insert-after}
\end{figure}

Note that when one uses the whole-block predicate, the precondition of
\texttt{insert\_after} in the form $\ptsto{L}{dll(next:N,prev:\_)} *
\ptsto{N}{dll(next:\_,prev:B_3)} * \ptsto{J}{dll(next:B_1,prev:B_2)}$ requires
$L\neq N$, and hence it is not covering the two above mentioned cases.
One can of course sacrifice performance of the analysis and generate multiple
contracts by modifying the abduction rules---e.g., one can non-deterministically
introduce an alias $L=N$ before inferring the anti-frame on line $v$ of
\texttt{main} to get the pre-condition $\ptsto{L}{dll(next:L,prev:\_)} * L=N$.
Introducing such non-determinism is, however, costly.
That is why, as we will see in our experiments, it is not done in tools such as
Infer, which can then cause that such tools will miss some function contracts
(or generate incomplete contracts that will not be applicable in some common
cases: such as insertion into a list of length 1).

An additional Example is provided in Appendix \ref{app:illustr-bitmasking}, where pointer
arithmetic and bit-masking are directly visible in the C-code.

%%%%%%%%%%%%%%%%%%%%%%%%%%%%%%%%%%%%%%%%%%%%%%%%%%%%%%%%%%%%%%%%%%%%%%%%%%%%%%%%
\section{Memory Model}
%%%%%%%%%%%%%%%%%%%%%%%%%%%%%%%%%%%%%%%%%%%%%%%%%%%%%%%%%%%%%%%%%%%%%%%%%%%%%%%%

In the following, we introduce the memory model that we use in this paper.
\emph{Values} are sequences of bytes, i.e., $\Val = \Byte^{+}$, where bytes are
8-bit words.
Sequences of bytes can be interpreted as numbers---either signed or unsigned,
which we leave as a part of the operations to be applied on the sequences
(including conversion operations).
We designate a subset of the values $\Loc = \Byte^N \subseteq \Val$ as
\emph{locations} where $N \geq 1$ is the byte-width of words of a given
architecture and where byte sequences to be interpreted as locations are always
understood as unsigned.
The $\NULL$ pointer is represented by $0 \in \Loc$ in our memory model.

We will use so-called \emph{stack-block-memory triplets} (SBM triplets for
short)  as \emph{configurations} of our memory model in order to define the
operational semantics of programs (and also to define the semantics of our
separation logic later on):

\textbf{Stack.}
We assume some  set of \emph{variables} $\Var$ where each variable $\varX \in
\Var$ has some fixed positive size, denoted as $\size(\varX)$.
Then, $\Stack$ is the set of total functions $\Var \rightarrow \Val$ such that
each variable is mapped to a byte sequence whose length is according to the size
of the variable, i.e., for each stack $\stack \in \Stack$ and variable $\varX
\in \Var$, we have $\stack(\varX) \in \Byte^{\size(\varX)}$.

\textbf{Memory.}
$\Mem$ is the set of partial functions $\Loc \rightharpoonup \Byte$ that define
the contents of allocated memory locations.

\textbf{Blocks.}
We use $\Interval = \{ ~[\lowerB,\upperB) ~\mid~ \lowerB<\upperB ~\text{where}~
\lowerB,\upperB \in \Loc \}$ to denote intervals of subsequent memory locations
where we include the lower bound and exclude the upper bound.
Intuitively, an interval $[\lowerB,\upperB) \in \Interval$ will denote which
locations were allocated at the same time (and must thus also be deallocated
together, can be subtracted using pointer  subtraction, etc.).
$\Block = \{~[\lowerB,\upperB) \in \Interval \mid \lowerB \neq 0 \}$ are
intervals whose lower bound is not 0 (recall that $\NULL$ is represented by $0
\in \Loc$ in our memory model).
$\Blocks \subseteq (\mathbf{2}_\mathit{fin})^\Block$ is the set of all finite
sets of non-overlapping blocks, i.e., for all $\blocks \in \Blocks$ and for all
$[\lowerB_1,\upperB_1), [\lowerB_2,\upperB_2) \in \blocks$ such that either
$\lowerB_1 \neq \lowerB_2$ or $\upperB_1 \neq \upperB_2$, we have that either
$\upperB_1 \le \lowerB_2$ or $\upperB_2 \le \lowerB_1$.

\textbf{Configurations.}
$\Config$ consists of all triplets $(\stack,\blocks,\mem) \in \Stack \times
\Blocks \times \Mem$ such that the set of allocated blocks and the locations
whose contents is defined are linked as follows:\begin{itemize}

  \item For every $\loc \in \Loc$ s.t. $\mem(\loc)$ is defined, there is a block
  $[\lowerB,\upperB) \in \blocks$ s.t. $\loc \in
  [\lowerB,\upperB)$.\footnote{Note that we do not require the reverse, i.e.,
  that all locations of a block are allocated. This is because our separation
  logic is set up to work with partially allocated blocks. In particular, the
  separating conjunction needs to break up blocks into partial blocks. We note,
  however, that the semantics of our programming language maintains the
  invariant that each block is always fully allocated.}

\end{itemize}

We introduce functions $\bBlock_\blocks, \eBlock_\blocks: \Loc \rightarrow
\Loc$, parameterized by some set of blocks $\blocks \in \Blocks$, which return
the base or end address, respectively, of the block to which a given location
belongs, i.e., given some $\loc \in \Loc$, we set
$\bBlock_\blocks(\loc)=\lowerB$ in case there is some $[\lowerB,\upperB) \in
\blocks$ with $\loc \in [\lowerB,\upperB)$, and $\bBlock_\blocks(\loc) = 0$,
otherwise.
Likewise for $\eBlock_\blocks(\loc)$.

%%% The following is probably not needed:
%
% By slightly overloading the notation, we also define functions
% $\bBlock_\blocks, \eBlock_\blocks: \Var \rightarrow \Loc$ such that $\forall
% \varX \in \Var: \bBlock_\blocks(\varX) = \bBlock_\blocks(\stack(\varX))
% ~\wedge~ \eBlock_\blocks(\varX) = \eBlock_\blocks(\stack(\varX))$, which
% provide interpretation to the functors $\bBlock_\blocks$ and $\eBlock_\blocks$
% that we allow in formulae.

%%% The following is probably not needed:
%
% One can introduce a function $\len(\loc)$ computing the number of bytes
% between the given location in a block and its end: provided that
% $\bBlock_\blocks(\loc) \neq 0$, $\len(\loc) = \eBlock_\blocks(\loc)-\loc$, and
% $\len(\loc)$ otherwise.  We can then also introduce the corresponding function
% over variables.

%%% The following is probably not needed:
%
% Notice that the $\bBlock_\blocks$ function is idempotent, i.e.,
% $\bBlock_\blocks(\bBlock_\blocks(\varX)) = \bBlock_\blocks(\varX)$, which,
% however, does not hold for the $\eBlock_\blocks$ function.

\textbf{Axioms.} For later use, we note that, building on the above notation, we
can express the requirements for locations to be within their associated block
and for blocks to be non-overlapping in the form of the following two
axioms:
$$\forall \loc. \ \bBlock_\blocks(\loc) = 0 \vee \bBlock_\blocks(\loc) \le \loc < \eBlock_\blocks(\loc)$$
\begin{multline*}
  \forall \loc, \loc'. \ (0 < \bBlock_\blocks(\loc) < \eBlock_\blocks(\loc')\le \eBlock_\blocks(\loc)
  \vee 0 < \bBlock_\blocks(\loc') < \eBlock_\blocks(\loc)\le \eBlock_\blocks(\loc')) \rightarrow  \\
  \bBlock_\blocks(\loc) = \bBlock_\blocks(\loc') \wedge \eBlock_\blocks(\loc) = \eBlock_\blocks(\loc')
\end{multline*}

% $$\loc
%\leq \loc' < \eBlock_\blocks(\loc) ~\Longrightarrow~ \bBlock_\blocks(\loc) =
%\bBlock_\blocks(\loc') ~\wedge~ \eBlock_\blocks(\loc)=\eBlock_\blocks(\loc').$$

\textbf{Notation.}
Given a (partial) function $f$, $f[a \fmapsto b]$ denotes the (partial) function
identical to $f$ up to $f[a \fmapsto b](a)=b$.
Moreover, $f[a \fmapsto \bot]$ denotes the (partial) function identical to $f$
up to being undefined for $a$.

%%%%%%%%%%%%%%%%%%%%%%%%%%%%%%%%%%%%%%%%%%%%%%%%%%%%%%%%%%%%%%%%%%%%%%%%%%%%%%%%
\section{A Low-level Language and Its Operational Semantics}
\label{sec:language-plus-semantics}
%%%%%%%%%%%%%%%%%%%%%%%%%%%%%%%%%%%%%%%%%%%%%%%%%%%%%%%%%%%%%%%%%%%%%%%%%%%%%%%%

We now state a simple low-level language together with its operational
semantics.
The language is close to common intermediate languages into which programs in C
are compiled by compilers such as \texttt{gcc} or \texttt{clang}.
We assume that a type checker ensures that variables of the right sizes are
used, guaranteeing, in particular, that the left-hand side (LHS) and right-hand
side (RHS) of an assignment are of the same size or that the dereference
operator is only applied to locations.
%
% This assumption is justified as languages such as C have type checking on the
% source-code level, and the types then translate into known variable sizes for
% the intermediate language.
%
We do not include the operators of \emph{item access} (\texttt{.} and
\texttt{->}) nor \emph{indexing} (\texttt{[]}) into our language as their usage
can be compiled to using \emph{pointers}, \emph{pointer arithmetic}, and the
\emph{dereference operator} (\texttt{*}) as indeed commonly done by compilers.
Likewise, we do not include the \emph{address-of operator} (\texttt{\&}) whose
usage can be replaced by storing all objects whose address should be derived via
\texttt{\&} into dynamically allocated memory, followed by using pointers to
such memory, as also done automatically by some compilers.
Further, we assume the \texttt{sizeof} and \texttt{offsetof} operators be
resolved and transformed to constants.

We now present the statements of our low-level language together with their
operational semantics.
The semantics is defined over configurations, which we introduced in the
previous section.
The semantics maintains the following invariant:\begin{itemize}

  \item For every $[\lowerB,\upperB) \in \blocks$ and every $\loc \in
  [\lowerB,\upperB)$, $\mem(\loc)$ is defined.

\end{itemize}

We start with rules describing various \emph{assignment statements} possibly
combined with pointer dereferences either on the LHS or RHS.
In the rules (and further on), we use $\mem[\loc,\loc')$ to denote the byte
sequence $\mem(\loc)\mem(\loc + 1)\cdots \mem(\loc'-1)$:

\vspace*{-4mm} %~~~~~~~~~~~~~~~~~~~~~~~~~~~~~~~~~~

\begin{displaymath}
  (\stack,\blocks,\mem) \xrightarrow{\varX := \val} (\stack[\varX \fmapsto
  \val],\blocks,\mem) \text{ for some value } \val \in \Val
\end{displaymath}

\vspace*{-8mm} %~~~~~~~~~~~~~~~~~~~~~~~~~~~~~~~~~~

\begin{displaymath}
  (\stack,\blocks,\mem) \xrightarrow{\varX := \varY} (\stack[\varX \fmapsto
  \stack(\varY)],\blocks,\mem)
\end{displaymath}

\vspace*{-8mm} %~~~~~~~~~~~~~~~~~~~~~~~~~~~~~~~~~~

\begin{multline*}
  (\stack,\blocks,\mem) \xrightarrow{\varX := *\varY}
  \text{ if } \bBlock_\blocks(\stack(\varY))= 0 \text{ or }
  \stack(\varY) + \size(\varX) > \eBlock_\blocks(\stack(\varY)),\\
  \text{ then } \error \text{ else }
  (\stack[\varX \fmapsto
    \mem[\stack(\varY),\stack(\varY)+\size(\varX))],\blocks,\mem)
\end{multline*}
Note that, in the case of $\varX := *\varY$, one needs to read
$\size(\varX)$ bytes from the adress $\stack(\varY)$. This is impossible if
the condition $\stack(\varY) + \size(\varX) >
\eBlock_\blocks(\stack(\varY))$ holds.

\vspace*{-5mm} %~~~~~~~~~~~~~~~~~~~~~~~~~~~~~~~~~~

\begin{multline*}
  (\stack,\blocks,\mem) \xrightarrow{*\varX := \varY}
  \text{ if } \bBlock_\blocks(\stack(\varX))= 0
  \text{ or } \stack(\varX) + \size(\varY) > \eBlock_\blocks(\stack(\varX)),\\
  \text{ then } \error \text{ else }
  (\stack,\blocks,\mem[[\stack(\varX),\stack(\varX)+\size(\varY)) \fmapsto
    \stack(\varY)])
\end{multline*}

We continue by \emph{memory allocation}.
We treat 0-sized allocations as an error.\footnote{The C standard says that the
behaviour in this case is user-defined, the allocation can return $\NULL$ or a
non-$\NULL$ value, which, however, cannot be dereferenced.
However, since such an allocation is usually suspicious, many analysers flag it
as an error/warning.
We adopt the same approach, but if need be, the rules could be changed to handle
such allocations according to the standard.}
For non-zero-sized allocations, the allocation can always fail and return
$\NULL$, otherwise the successfully allocated memory block is initialized with
some arbitrary value\footnote{Notice that $2^{8N}$ gives the largest address
that can be expressed using words with the byte-width $N$.}:

\vspace*{-4mm} %~~~~~~~~~~~~~~~~~~~~~~~~~~~~~~~~~~

\begin{multline*}
    (\stack,\blocks,\mem) \xrightarrow{\varX = malloc(\varZ)}
    \text{ if } \stack(\varZ) = 0 \text{ then } \error \text{ else either } (\stack[\varX \fmapsto \NULL],\blocks,\mem) \text{ or }\\
    (\stack[\varX \fmapsto \loc],\blocks \cup
    \{ [\loc,\loc+\stack(\varZ)) \},\mem[[\loc,\loc+\stack(\varZ)) \fmapsto \val])
    \text{ for some } \val \in \Byte^{\stack(\varZ)} \text{ and } \loc > 0\\
    \text{ such that } \loc+\stack(\varZ) \le 2^{8N} \text{ and }
    [\loc,\loc+\stack(\varZ)) \text{ does not overlap with any }
    [\lowerB,\upperB) \in \blocks
\end{multline*}

% However, memory allocation can fail too:
%
%\begin{displaymath}
%    (\stack,\blocks,\mem) \xrightarrow{\varX = malloc(\varZ)}
%    (\stack[\varX \fmapsto \NULL],\blocks,\mem)
%\end{displaymath}

% Note that, in the above, we allow 0-sized allocations to be treated in both
% ways allowed by the C standard: either by returning $\NULL$ or by returning a
% non-$\NULL$ value that, however, cannot be dereferenced (which is ensured by
% that all variables are of non-zero size and values read/written to/from them
% must be of the same size).

The $\calloc$ function, which nullifies the allocated block, can be defined
analogically to $\malloc$, by just changing $\mem[[\loc,\loc+\stack(\varZ))
\fmapsto \val]$ to $\mem[[\loc,\loc+\stack(\varZ)) \fmapsto 0^{\stack(\varZ)}]$.
The $\realloc$ function, which shrinks or enlarges a block, possibly moving it
to a different memory location, can be reduced to a sequence of other
statements, and so we do not introduce it explicitly for brevity.

% One can define the semantics of $\calloc$ analogically, by simply changing the
% resulting memory in case the allocation succeeds to
% $\mem[[\loc,\loc+\stack(\varZ)) \fmapsto 0]$.

The \emph{deallocation} of memory is modelled by the following
rule:\footnote{Notice that we do not need a rule for deallocating zero-sized
blocks since we do not allow such blocks to be created.}

\vspace*{-4mm} %~~~~~~~~~~~~~~~~~~~~~~~~~~~~~~~~~~

\begin{multline*}
  (\stack,\blocks,\mem) \xrightarrow{free(\varX)}
  \text{ if }  \stack(\varX) \neq \bBlock_\blocks(\stack(\varX))
  \text{ then } \error\\
  \text{ else } (\stack,\blocks\setminus\{[\stack(\varX),
    \eBlock_\blocks(\stack(\varX))) \},\mem[[\stack(\varX),
    \eBlock_\blocks(\stack(\varX))) \fmapsto \bot])
\end{multline*}

We now state rules for \emph{(unary and binary) operations} excluding the
operation of adding an offset to a pointer and pointer subtraction, which will
be handled separately.
We assume that typing will ensure the right sizes of the operands and the
variables to which the results are assigned.
For instance, for arithmetic operations, we expect all operands and result
variables to have the same type.
%, while operands of comparison operations must be of the same size, but the size of the result variable must be adequate to store boolean values (in our case, $N$ bytes).
%
Unary operators can be used to model type-casting that lifts variables to the
required size.
Finally, note also that we do not exclude division by zero in order not to
complicate the rules in aspects that are not specific for our work.

\vspace*{-4mm} %~~~~~~~~~~~~~~~~~~~~~~~~~~~~~~~~~~

\begin{displaymath}
    (\stack,\blocks,\mem) \xrightarrow{\varX = \varY \bop \varZ}
    (\stack[\varX \fmapsto \stack(\varY) \bop \stack(\varZ)],\blocks,\mem)
\end{displaymath}

\vspace*{-8mm} %~~~~~~~~~~~~~~~~~~~~~~~~~~~~~~~~~~

\begin{displaymath}
    (\stack,\blocks,\mem) \xrightarrow{\varX = \uop \varY}
    (\stack[\varX \fmapsto \uop \stack(\varY)],\blocks,\mem)
\end{displaymath}

% y=Y * z=Z * b(Y)=b(Y+Z) * e(Y)=e(Y+Z)
% x := y ptrplus z
% y=Y * z=Z * b(Y)=b(Y+Z) * e(Y)=e(Y+Z) * x=Y+Z

The rule for \emph{adding a (possibly negative) offset to a pointer} requires
its pointer argument to be defined and, in accordance with the C standard, the
result be within the appropriate memory block plus one byte (i.e., it may point
just behind the end of the block).\footnote{We are aware that this requirement
is not respected in some real-life pograms, such as, e.g., the implementation of
lists in Linux. We will later mention that our approach can be relaxed to handle
such cases too.}

\vspace*{-4mm} %~~~~~~~~~~~~~~~~~~~~~~~~~~~~~~~~~~

\begin{multline*}
  (\stack,\blocks,\mem) \xrightarrow{\varX = \varY \ptrplus \varZ}
  \text{ if } \stack(\varY) = 0 \text{ or } \stack(\varY) + \stack(\varZ) <
  \bBlock_\blocks(\stack(\varY)) \text{ or } \stack(\varY) + \stack(\varZ) >
  \eBlock_\blocks(\stack(\varY)),\\ \text{ then } \error
  \text{ else } (\stack[\varX \fmapsto \stack(\varY)+\stack(\varZ)],\blocks,\mem)
\end{multline*}

The rule for the \emph{pointer subtraction} is special in that it requires its
pointer operands to be defined, to have the same base, and to point inside an
allocated block or just behind its end (note the usage of the interval closed at
both ends in the rule).

\vspace*{-4mm} %~~~~~~~~~~~~~~~~~~~~~~~~~~~~~~~~~~

\begin{multline*}
  (\stack,\blocks,\mem) \xrightarrow{\varX = \varY \ptrminus \varZ}
  \text{ if }  \stack(\varY) = 0 \text{ or } \stack(\varZ) \not\in [\bBlock_\blocks(\stack(\varY)),
  \eBlock_\blocks(\stack(\varY))],\\
  \text{ then } \error
  \text{ else } (\stack[\varX \fmapsto \stack(\varY)-\stack(\varZ)],\blocks,\mem)
\end{multline*}

Further, we state the semantics of the $\memcpy$ statement that copies $\varZ$
bytes from an address $\varY$ to an address $\varX$ and that assumes dealing
with non-overlapping source/target blocks.
The effect of the $\memmove$ statement, which can handle even overlapping
blocks, can be simulated by copying the source block to a fresh block and then
back, and so we skip it for brevity.
For brevity, we also skip the (easy to add) $\memset$ statement, which sets each
byte of some block to a given value.

\vspace*{-4mm} %~~~~~~~~~~~~~~~~~~~~~~~~~~~~~~~~~~

\begin{multline*}
    (\stack,\blocks,\mem) \xrightarrow{\memcpy(\varX,\varY,\varZ)}
    \text{ if } \bBlock_\blocks(\stack(\varX))= 0
    \text{ or } \bBlock_\blocks(\stack(\varY))= 0 \\
    \text{ or } \stack(\varX) + \stack(\varZ) > \eBlock_\blocks(\stack(\varX))
    \text{ or } \stack(\varY) + \stack(\varZ) > \eBlock_\blocks(\stack(\varY)),\\
    \text{ or } \stack(\varX) + \stack(\varZ) > \stack(\varY)
    \text{ or } \stack(\varY) + \stack(\varZ) > \stack(\varX),\\
    \text{ then } \error
    \text{ else } (\stack,\blocks,\mem[[\stack(\varX),\stack(\varX)+\stack(\varZ))
    \fmapsto \mem[\stack(\varY),\stack(\varY)+\stack(\varZ))])
\end{multline*}

To encode \emph{conditional branching} arising from conditional statements or
loops, we introduce the \emph{assume} statement that models conditions $\varX
\bowtie \varY$ for $\bowtie~ \in \{ =,\neq, \leq,<,\geq,> \}$.
The assume statement has a side condition which ensures that this statement can
only be taken for configurations satisfying the condition:

\vspace*{-4mm} %~~~~~~~~~~~~~~~~~~~~~~~~~~~~~~~~~~

\begin{displaymath}
    (\stack,\blocks,\mem) \xrightarrow{\assume(\varX \bowtie \varY)}
    (\stack,\blocks,\mem) \text{ under the condition that }
    \stack(\varX) \bowtie \stack(\varY)
\end{displaymath}

We also introduce the \emph{assert} statement that is similar to the assume
statement, but it checks at runtime whether the specified condition holds, and
it fails if this is not the case:

\vspace*{-4mm} %~~~~~~~~~~~~~~~~~~~~~~~~~~~~~~~~~~

\begin{displaymath}
    (\stack,\blocks,\mem) \xrightarrow{\assert(\varX \bowtie \varY)} \text{ if }
    \stack(\varX) \bowtie \stack(\varY) \text{ then }
    (\stack,\blocks,\mem)  \text{ else } \error
\end{displaymath}

We structure our programs into \emph{functions}.
We assume a control flow graph $(V,E,\eentry,\eexit)$ for each function
$f(\varX_1,\dots,\varX_n)$, where $V$ is a~set of \emph{nodes}, there are
dedicated nodes $\eentry,\eexit \in V$, and $E \subseteq V \times V$ is a set of
edges where each edge is labelled by a basic statement or a function call.
We model branching and looping as usual, i.e., by having multiple outgoing control flow edges labelled by an $\assume$ statement
with appropriate conditions.
For simplicity, we consider functions without a return value, not referring to
global variables, having parameters passed by reference only, with the names of
the parameters unique to each function, and not having local
variables.\footnote{This is all w.l.o.g. as values can be returned through
parameters passed by reference, locals can be replaced by parameters, globals
can be passed as parameters, and passing by value can be simulated by copying a
variable into a~fresh one before passing it. Our implementation does, however,
have a special support for all these features.}
We consider non-recursive functions only.\footnote{A support for recursion could
be added as, e.g., in \cite{BiAbd11}.}
Finally, when a function is called, no variable can be used as an argument
twice.

In particular, let $f(x_1,\ldots,x_n)$, $n > 0$, be a function whose body
$\fbody_f$ uses the variables $x_1,\ldots,x_n$ only.
Let $f(a_1,\ldots,a_n)$ be a call of $f$ where the arguments $a_i$ are pairwise
different variables.
We will now define the semantics of this function call with regard to some
configuration $(\stack,\blocks,\mem)$.
The definition will, however, be only partial due to the possible
non-termination of $f(a_1,\ldots,a_n)$.
Let $\stack_f$ be the stack defined by $\stack_f(x_i) = \stack(a_i)$ for all $1
\le i\le n$.
%
% Note: One has to use ``a stack'' here since stacks are total functions and
% hence a stack has to define values of all program variables (even though their
% values are irrelevant for executing the fuction).
%
Then, we consider the execution of $\fbody_f$ wrt the initial configuration
determined by $\stack_f$ as well as the blocks $\blocks$ and the memory $\mem$
at the call of~$f$:
Let us assume that $(\stack_f,\blocks,\mem) \xrightarrow{\fbody_f}
(\stack_f',\blocks',\mem')$ where $\xrightarrow{\fbody_f}$ denotes an execution from the $\eentry$ to the $\eexit$ of $f$ (again this definition is only partial as the execution of $\fbody_f$ does not need to terminate).
If $\fbody_f$ terminates wrt configuration $(\stack_f,\blocks,\mem)$, we set
\begin{displaymath} (\stack,\blocks,\mem) \xrightarrow{f(a_1,\ldots,a_n)}
(\stack',\blocks',\mem') \end{displaymath}
where $\stack'$ is defined by $\stack'(a_i) = \stack_f'(x_i)$ for all $1 \le i
\le n$, and $\stack'(y) = \stack(y)$ for all $y \in \Var \setminus
\{a_1,\ldots,a_n\}$.

%%%%%%%%%%%%%%%%%%%%%%%%%%%%%%%%%%%%%%%%%%%%%%%%%%%%%%%%%%%%%%%%%%%%%%%%%%%%%%%%
\section{Separation Logic} \label{sec:SL}
%%%%%%%%%%%%%%%%%%%%%%%%%%%%%%%%%%%%%%%%%%%%%%%%%%%%%%%%%%%%%%%%%%%%%%%%%%%%%%%%

We now introduce a separation logic that supports reasoning about low-level
memory models as introduced earlier.
Our separation logic (SL) has the following syntax:

\begin{align*}
  \phi ~::=~ & \ptsto{\expression_1}{\expression_2} \mid
  \ptstobyte{\expression_1}{\val}{\expression_2} \mid
  \ptstobyte{\expression_1}{\top}{\expression_2} \mid
  \phi_1 * \phi_2 \mid \phi_1 \vee \phi_2 \mid \sll{\segment(\varX,\varY)}(\expression_1,\expression_2) \mid\\
  & \dll{\segment(\varX,\varY,\varZ)}(\expression_1,\expression_2,\expression_1',
    \expression_2') \mid
  \emp \mid \True \mid \expression_1 \bowtie \expression_2 \mid \exists \varX. \phi\\
  \bowtie ~::=~ & = \mid \neq \mid \leq \mid < \mid \geq \mid >
  \hspace*{5mm}
  \expression ~::=~ \val \mid \varX \mid \bBlock(\expression) \mid
  \eBlock(\expression) \mid \uop \expression \mid \expression_1 \bop \expression_2
\end{align*}

\textbf{Variables and Values.}
Our SL formulae are stated over the same set of variables $\Var$ and values $\Val$ that we introduced in the definition of our memory model.
In particular, the variables  $\varX,\varY,\varZ$ and the values $k$ of our SL formulae are drawn from $\Var$ and $\Val$, respectively.

\textbf{Size.} Variables, values, operators, and expressions in our logic are typed by their \emph{size}.
We will only work with formulae where the variables and values respect the sizes
expected by the involved operations and predicates.
For every expression $\expression$, we denote by $\size(\expression)$ the size
of the value to which this expression may evaluate.
We remark on the choice of working with fixed sizes:
%Every model of a formula $\ptsto{\expression_1}{\expression_2}$ must allocate exactly $\size(\expression_2)$ bytes.
%
% It can be more, which can stem from constraints on the base $\bBlock(x)$ and
% end $\eBlock(x)$ addresses of the memory block into which $x$ points, but it
% cannot be less.
%
We intentionally do not permit variables of variable size because (1)~such variables are typically not supported by low-level languages and (2)~variables of variable size allow one to model strings, which would make our language vastly more powerful (allowing one to model all kinds of string operations)\footnote{We believe that extending our later
presented analysis to such variables is possible (by recording the length of the
target object as another parameter of the points-to predicate), but we leave it
for future work in order not to complicate the basic approach we propose.}.

\textbf{Points-To Predicates.} The points-to predicate
$\ptsto{\expression_1}{\expression_2}$ denotes that the byte sequence
$\expression_2$ is stored at the memory location $\expression_1$.
Due to we are working with expressions of fixed size, every model of
$\ptsto{\expression_1}{\expression_2}$ must allocate exactly
$\size(\expression_2)$ bytes.
In addition, we introduce two restricted cases of points-to predicates where the
RHS is of parametric size: namely,
$\ptstobyte{\expression_1}{\val}{\expression_2}$ and
$\ptstobyte{\expression_1}{\top}{\expression_2}$ that allow us to say that
$\expression_1$ points to an array of $\expression_2$ bytes that either all have
the same constant value $\val$ or have any value, respectively.
These predicates allow us to, e.g., express that some block of memory is
nullified, which is often crucial to know when analysing advanced
implementations of dynamic data structures \cite{PredatorSAS13}.
We lift the notion of size to the RHS of these points-to predicates as follows:
$\size(\ptstobyterhs{\val}{\varY}) = \size(\ptstobyterhs{\top}{\varY}) = \varY$.
In $\ptstobyte{\expression_1}{\val}{\expression_2}$, we require $\val$
to be a single byte, i.e., $\size(\val)=1$.

\textbf{Notation.}
Given a formula $\phi$, we write $\varIn{\phi}$ to denote the \emph{free} variables of $\phi$ (as usual a variable is \emph{free} if it does appear within an existential quantification).
Further, given an expression $\expression$, we write $\varIn{\expression}$ for all variables appearing in $\expression$.

\textbf{Terminology.}
We call formulae that do not contain the disjunction operator ($\vee$)
\emph{symbolic heaps}.
We will mostly work with symbolic heaps in this paper.
Disjunctions of symbolic heaps will be only used on the RHS of (some) contracts.
We call formulae that do not contain existential quantification ($\exists$) \emph{quantifier-free}.
Our SL contains the relational predicates $\expression_1 \bowtie \expression_2$,
which include equality and disequality;
these predicates are traditionally called \emph{pure} in the separation logic literature.
We follow this terminology and call any separating conjunction of such predicates a
\emph{pure formula}.

\textbf{List-Segment Predicates.} List segments in our SL are parameterized by a
\emph{segment} predicate $\segment(\varX,\varY)$ or
$\segment(\varX,\varY,\varZ)$ for singly-linked or doubly-linked lists,
respectively; see Fig.~\ref{fig:ls-dls} for an illustration of the semantics of
$\sll{\segment(\varX,\varY)}(\expression_1,\expression_2)$ and
$\dll{\segment(\varX,\varY,\varZ)}(\expression_1,\expression_2,\expression_1',\expression_2')$
for $\segment(\varX,\varY) \equiv \ptsto{\varX}{\varY}$ and
$\dll{\segment(\varX,\varY,\varZ)} \equiv \ptsto{\varX}{\varZ}*
\ptsto{\varX+8}{\varY}$.
We note that our list-segment predicates only have two or three free variables,
respectively, which prevents the logic from, e.g., describing non-global heap
objects shared by list elements.
However, more parameters could be introduced in a similar fashion to other works
\cite{SLNestedLists07}.
We have not done so here since it would complicate the notation, and we take
this issue as orthogonal to the techniques we propose.

\textbf{Binary and Unary Operators.} $\uop$ and $\bop$ denote some arbitrary set
of binary and unary operators, respectively.
We assume this set to include at least the usual operators ($+$, $-$, $*$, $\&$,
$|$, \ldots) available in low-level languages as well as a special substring
operator $\substitution{\cdot}{\cdot}{\cdot}$ on byte sequences where
$\substitution{\val}{i}{j} $ for some $\val = b_0\cdots b_{l-1} \in \Byte^l$ and
$0 \le i \le j \le l$ denotes the byte sequence $b_i\cdots b_{j-1}$.
%
% As stated above, we assume all operators are typed, e.g., $+,-,*,\cdots$
% expect operands of the same size.
%
Since we work with variables of fixed size, we basically assume a version of
each $\uop$ and $\bop$ for every possible operand size.
We further remark that unary operators $\uop$ can be used for modelling the
casting to different sizes.

\begin{figure}[t]
  \centering
  %set list item style (circle can be changed to rectangle)
\tikzset{
  listitem/.style={circle,draw,thick,minimum size=18pt}
}
\pgfdeclarelayer{background}
\pgfsetlayers{background,main}
\begin{tikzpicture}%[shorten >=0.5pt] %% use for shorter arrows
  \pgfmathsetmacro{\xr}{-0.5}; %% rectangle x position
  \pgfmathsetmacro{\yr}{0.5};  %% rectangle y position

  \node[listitem,red,fill=white] (x) at (0,0) {$x$};
  \node[listitem]       (y) at (1.5,0) {\color{red}$y$};
  \node                 (a) at (2.5,0) {$...$};
  \node[listitem]       (b) at (3.5,0) {};
  \node[listitem]       (c) at (5,0) {};

  %labels
  \draw (x) node[below,yshift=-15pt] {$\varepsilon_1$};
  \draw (c) node[below,yshift=-15pt] {$\varepsilon_2$};

  %rectangle
  \draw (\xr,-\yr) rectangle(4,\yr) ;
  \draw (\xr,\yr) node[above right]{\color{red}$\Lambda(x,y)$} ;

  %next-links
  \draw[->,thick,red] (x) to (y);
  \draw[->,thick] (y) to (a);
  \draw[->,thick] (a) to (b);
  \draw[->,thick] (b) to (c);
  % Highlight x. Use layers to draw behind everything.
  \begin{pgfonlayer}{background}
    \node (x) at (0,0) [draw,circle,,line width=1.6em,red!30] {};
  \end{pgfonlayer}

%%%%%%%%%%%%%%%%%%%%%%%%%%%%%%%%%%%%%%%%%%%%%%%%%%%%%%%%%%%%%%%%%%
% second part shifted
    \coordinate (shift) at (8,0);
    \begin{scope}[shift=(shift)]
      \node[listitem,densely dotted] (y) at (-1.5,0) {\color{red}$y$};
      \node[listitem,red,fill=white] (x) at (0,0) {$x$};
      \node[listitem]       (z) at (1.5,0) {\color{red}$z$};
      \node                 (a) at (2.5,0) {$...$};
      \node[listitem]       (b) at (3.5,0) {};
      \node[listitem]       (c) at (5,0) {};

      %labels
      \draw (y) node[below,yshift=-15pt] {$\varepsilon_2$};
      \draw (x) node[below,yshift=-15pt] {$\varepsilon_1$};
      \draw (b) node[below,yshift=-15pt] {$\varepsilon'_1$};
      \draw (c) node[below,yshift=-15pt] {$\varepsilon'_2$};

      %rectangle
      \draw (\xr,-\yr) rectangle(4,\yr) ;
      \draw (\xr,\yr) node[above right]{\color{red}$\Lambda(x,y,z)$} ;

      %next-links
      \draw[->,thick] (y) edge[bend left] (x);
      \draw[->,red,thick] (x) edge[bend left] (z);
      \draw[->,thick] (z) edge[bend left] (a);
      \draw[->,thick] (a) edge[bend left] (b);
      \draw[->,thick] (b) edge[bend left] (c);
      %prev-links
      \path[->,red,thick] (x) edge[bend left] (y);
      \path[->,thick] (z) edge[bend left] (x);
      \path[->,thick] (a) edge[bend left] (z);
      \path[->,thick] (b) edge[bend left] (a);
      \path[->,thick] (c) edge[bend left] (b);
      % Highlight x. Use layers to draw behind everything.
      \begin{pgfonlayer}{background}
        \node (x) at (0,0) [draw,circle,,line width=1.6em,red!30] {};
      \end{pgfonlayer}
    \end{scope}
%%%%%%%%%%%%%%%%%%%%%%%%%%%%%%%%%%%%%%%%%%%%%%%%%%%%%%%%%%%%%%%%%%

\end{tikzpicture}

  \caption{An illustration of the meaning of the
  $\sll{\segment(\varX,\varY)}(\expression_1,\expression_2)$ and
  $\dll{\segment(\varX,\varY,\varZ)}(\expression_1,\expression_2,\expression_1',
  \expression_2')$ formulae.}

  \label{fig:ls-dls}
\end{figure}

\textbf{Semantics.}
We now define the semantics of our SL over SBM triplets $(\stack,\blocks,\mem)
\in \Config$:

\vspace*{-4mm} %~~~~~~~~~~~~~~~~~~~~~~~~~~~~~~~~~~

\begin{multline*}
  (\stack,\blocks,\mem) \models \ptsto{\expression_1}{\expression_2} \text{ iff }\\
  \dom(\mem) = [\denot{\expression_1}{\stack}{\blocks},
    \denot{\expression_1}{\stack}{\blocks}+\size(\expression_2)) \text{ and }
  \mem[\denot{\expression_1}{\stack}{\blocks},
    \denot{\expression_1}{\stack}{\blocks}+\size(\expression_2)) = \denot{\expression_2}{\stack}{\blocks}
\end{multline*}

\vspace*{-3mm}

where\vspace*{-4mm}
\begin{multline*}
  \denot{\val}{\stack}{\blocks} ~=~ \val,
  \denot{\varX}{\stack}{\blocks} ~=~ \stack(\varX),
  \denot{\bBlock(\expression)}{\stack}{\blocks} ~=~
    \bBlock_{\blocks}(\denot{\expression}{\stack}{\blocks}),
  \denot{\eBlock(\expression)}{\stack}{\blocks} ~=~
    \eBlock_{\blocks}(\denot{\expression}{\stack}{\blocks}), \\
  \denot{\uop \expression}{\stack}{\blocks} ~=~
    \uop(\denot{\expression}{\stack}{\blocks}), \text{ and }
  \denot{\expression_1 \bop \expression_2}{\stack}{\blocks}~ =~
    \denot{\expression_1}{\stack}{\blocks} ~\bop~
    \denot{\expression_2}{\stack}{\blocks}
\end{multline*}
\vspace*{-8mm} %~~~~~~~~~~~~~~~~~~~~~~~~~~~~~~~~~~

\begin{multline*}
  (\stack,\blocks,\mem) \models \ptstobyte{\expression_1}{\val}{\expression_2}
  \text{ iff }\\
  \dom(\mem) = [\denot{\expression_1}{\stack}{\blocks},
    \denot{\expression_1}{\stack}{\blocks}+\denot{\expression_2}{\stack}{\blocks}) \text{ and }
  \mem[\denot{\expression_1}{\stack}{\blocks}+i] = \val
    \text{ for all } 0 \le i < \denot{\expression_2}{\stack}{\blocks}
\end{multline*}

\vspace*{-8mm} %~~~~~~~~~~~~~~~~~~~~~~~~~~~~~~~~~~

\begin{displaymath}
  (\stack,\blocks,\mem) \models \ptstobyte{\expression_1}{\top}{\expression_2} \text{ iff }
  \dom(\mem) = [\denot{\expression_1}{\stack}{\blocks},\denot{\expression_1}{\stack}{\blocks}+\denot{\expression_2}{\stack}{\blocks})
\end{displaymath}

We remark on the difference between the three points-to predicates:
the predicate $\ptsto{\expression_1}{\expression_2}$ fixes the exact sequence of bytes $\expression_2$ that is stored from location $\expression_1$ onwards, and the number of bytes is known (the size of $\expression_2$); the predicate $\ptstobyte{\expression_1}{\val}{\expression_2}$ states that
there are $\expression_2$ number of bytes stored from location $\expression_1$ onwards (note that the number of bytes $\expression_2$ is symbolic), and each of these bytes equals $\val$; and the predicate $\ptstobyte{\expression_1}{\top}{\expression_2}$ works in the same way except that the bytes stored are not fixed.

\vspace*{-4mm} %~~~~~~~~~~~~~~~~~~~~~~~~~~~~~~~~~~

\begin{displaymath}
  (\stack,\blocks,\mem) \models \phi_1 * \phi_2 \text{ iff there are some }
  \mem_1, \mem_2 \text{ with } \mem = \mem_1 \uplus \mem_2,
  (\stack,\blocks,\mem_i) \models \phi_i
\end{displaymath}

\vspace*{-8mm} %~~~~~~~~~~~~~~~~~~~~~~~~~~~~~~~~~~

\begin{displaymath}
  (\stack,\blocks,\mem) \models \phi_1 \vee \phi_2 \text{ iff }
  (\stack,\blocks,\mem) \models \phi_1 \text{ or } (\stack,\blocks,\mem) \models \phi_2
\end{displaymath}

\vspace*{-8mm} %~~~~~~~~~~~~~~~~~~~~~~~~~~~~~~~~~~

\begin{displaymath}
  (\stack,\blocks,\mem) \models \emp \text{ iff } \dom(\mem) = \emptyset
  \hspace{10mm} %%%%%%%%%%%%%%
  (\stack,\blocks,\mem) \models \True \text{ always holds }
\end{displaymath}

%\begin{displaymath}
%  (\stack,\blocks,\mem) \models \True \text{ always holds }
%\end{displaymath}

\vspace*{-8mm} %~~~~~~~~~~~~~~~~~~~~~~~~~~~~~~~~~~

\begin{displaymath}
  (\stack,\blocks,\mem) \models \expression_1 \bowtie \expression_2 \text{ iff }
  \dom(\mem) = \emptyset \text{ and }
  \denot{\expression_1}{\stack}{\blocks} \bowtie \denot{\expression_2}{\stack}{\blocks}
\end{displaymath}

We point out that pure formulae constrain the heap to be empty.
This is typically not required by separation logics that support classical (non-separating) conjunction at least on pure sub-formulae.
However, we exclude the classical conjunction in order to simplify the presentation and hence need to constrain the heap of pure formulae to be empty.

%~~~~~~~~~~~~~~~~~~~~~~~~~~~~~~~~~~

\begin{multline*}
  (\stack,\blocks,\mem) \models \exists \varX. \phi(\varX) \text{ iff there is some }
  v \in \Val\\
  \text{ and a fresh variable } \varU \in \Var \text{ s.t. }
  (\stack[\varU \fmapsto v],\blocks,\mem) \models \phi(\varU)
\end{multline*}

% where $\phi[\varU/\varX]$ denotes the formula $\phi$ where every occurrence of
% $\varX$ has been replaced by $\varU$.

\vspace*{-8mm} %~~~~~~~~~~~~~~~~~~~~~~~~~~~~~~~~~~

\begin{multline*}
  (\stack,\blocks,\mem) \models \sll{\segment(\varX,\varY)}(\expression_1,\expression_2)
  \text{ iff } (\stack,\blocks,\mem) \models \expression_1 = \expression_2
  \text{ or }\\
  \quad \quad (\stack,\blocks,\mem) \models \expression_1 \neq \expression_2 * \true
  \text{ and there is some } \loc \in \Loc\\
  \quad \quad \text{ and a fresh variable } u \in \Var \text{ s.t. }
  (\stack[u \fmapsto \loc],\blocks,\mem) \models \segment(\expression_1,u) *
  \sll{\segment(\varX,\varY)}(u,\expression_2)
\end{multline*}

% \fz{Note that I have added the conditions $(\stack,\blocks,\mem) \models
% \expression_1 \neq \expression_2 * \true$ resp. $(\stack,\blocks,\mem) \models
% \expression_1 \neq \expression_2 * \expression_1' \neq \expression_2' * \true$
% to the list predicates because they are needed to prove deallocation, e.g., to
% prove the contract:}
%
% $\color{green} \{ \sll{\segment(\varX,\varY)}(\varX,\varY)\}$
%
% $\color{green} \quad \quad \mathtt{while(\varX \neq \varY)}
% \{\varU := \varX.next; \free(\varX); \varX := \varU\}$
%
% $\color{green} \{\emp\}$

\vspace*{-8mm} %~~~~~~~~~~~~~~~~~~~~~~~~~~~~~~~~~~

\begin{multline*}
  (\stack,\blocks,\mem) \models
  \dll{\segment(\varX,\varY,\varZ)}(\expression_1,\expression_2,\expression_1',
    \expression_2')
  \text{ iff } (\stack,\blocks,\mem) \models \expression_1 = \expression_2' *
    \expression_2 = \expression_1' \text{ or }\\
    \quad \quad (\stack,\blocks,\mem) \models \expression_1 \neq \expression_2' *
    \expression_2 \neq \expression_1' * \true \text{ and there is some }
  \loc\in \Loc \text{ and a fresh variable} \\
  \quad \quad  u \in \Var \text{ such that }
  (\stack[u \fmapsto \loc],\blocks,\mem) \models \segment(\expression_1,u,\expression_2) *
  \dll{\segment(\varX,\varY,\varZ)}(u,\expression_1,\expression_1',\expression_2')
\end{multline*}

\textbf{Satisfiability and Entailment.} We say that an SL formula $\phi$ is
\emph{satisfiable} iff there is a model $(\stack,\blocks,\mem)$ such that
$(\stack,\blocks,\mem) \models \phi$.
We say that an SL formula $\phi_1$ \emph{entails} an SL formula $\phi_2$,
denoted $\phi_1 \models \phi_2$, iff we have that $(\stack,\blocks,\mem) \models
\phi_2$ for every model $(\stack,\blocks,\mem)$ such that $(\stack,\blocks,\mem)
\models \phi_1$.

\textbf{Restrictions on the Segment Predicates.} From now on, we put further
restrictions on the segment predicates $\segment(\varX,\varY)$ and
$\segment(\varX,\varY,\varZ)$:
(1) $\segment$ needs to be of the shape $\exists \varX_1,\ldots,\varX_k. \phi$
for some quantifier-free symbolic heap $\phi$.
%
% \emph{reachability-closed}
%
Intuitively, this condition is required since quantifier-free symbolic heaps are
the formulae on which the symbolic execution described in
Section~\ref{sec:contract-generation} is based on and the existential
quantification allows to hide some nested data.
(2) $\segment$ needs to be \emph{block-closed} in the sense defined below.
%
%First, we require the formulae to be \emph{reachability-closed} wrt their first
%argument $\varX$.  % Intuitively, this ensures that every memory location
%allocated in the heap described by $\segment$ as well as every location
%descibed by the second parameter $\varY$ or the third parameter $\varZ$ is
%reachable from $\varX$ by following pointer dereferences (possibly via nested
%list predicates), possibly adding some offsets to them.  % Formally, a formula
%$\phi(\varX,\varX_1,\ldots,\varX_n)$ is \emph{reachability-closed} iff it
%implies $$\exists \varY_0,\varY_1,\ldots,\varY_k. \ (\varX = \varY_0) *
%\bigstar_{i=1..k} \ \phi_i(\varY_i) * \bigstar_{i=1..l} \ \psi_i *
%\bigstar_{i=1..n} \ (\varX_i = \varY_{k-n+i})$$ for some formulae
%$\phi_i(\varY_i)$ and $\psi_i$ where: % (1)~Each formula $\phi_i(\varY_i)$ is
%either $\ptsto{\expression_i}{\varY_i}$,
%$\sll{\segment_i(\varA,\varB)}(\expression_i,\varY_i)$,
%$\dll{\segment_i(\varA,\varB,\varC)}(\expression_i,\_,\_,\varY_i)$, or
%$\dll{\segment_i(\varA,\varB,\varC)}(\_,\varY_i,\expression_i,\_)$ with
%$\expression_i = \varY_j + \val_i$ for some $0 \le j<i$ and $\val_i \in \Val$.
%% (2)~Each formula $\psi_i$ is either $\exists \varZ.
%\ptstobyte{\expression_i}{\val}{\varZ}$ for some $\val \in \Val$ or $\exists
%\varZ. \ptstobyte{\expression_i}{\top}{\varZ}$ with $\expression_i$ having the
%same meaning as above.

\textbf{Block-closedness.} A formula $\phi$ is \emph{block-closed} iff, for all
$(\stack,\blocks,\mem) \models \phi$ and $\loc \in \dom(\mem)$, we have that
$[\bBlock(\loc),\eBlock(\loc)) \subseteq \dom(\mem)$.
Intuitively, block-closedness ensures that all points-to assertions in a formula
add up to whole blocks.
%
%, which guarantees in particular that each block is either fully inside a heap
%described by a list-segment formula or not at all in there.
%
We require block-closedness in order to ensure that list-segments correspond to
our intuition and connect different memory blocks (i.e., we exclude models where
multiple or all nodes of list-segments belong to the same block).
Technically, the requirement of block-closedness makes it easier to formulate
rules for materialisation of list-segment nodes in the abduction procedure and
for entailment checking.
We leave lifting the restriction of block-closedness for future work.
A sufficient condition for block-closedness, which is easy to check, is that all
points-to assertions in $\phi$ can be organized in groups
$\ptsto{\expression_i}{\varUps_i}$, for $1 \le i \le n$, where $\varUps$
represents either $\varY$, $\ptstobyterhs{\val}{\varY}$, or
$\ptstobyterhs{\top}{\varY}$, such that $\expression_i = \expression_{i-1} +
\size(\varUps_i)$ for all $1 < i \le n$, and $\phi$ implies that
$\eBlock(\expression_1)-\bBlock(\expression_1) = \sum_{i=1..n}
\size(\varUps_i)$.

%To shorten the descriptions, it will sometimes be beneficial to cover the three
%cases of points-to predicates from above using one notation.
%%
%Then, we will use the notation $\ptsto{\varX}{\varUps}$ where $\varUps$
%represents either $\varY$, $\ptstobyterhs{\val}{\varY}$, or
%$\ptstobyterhs{\top}{\varY}$.

%%%%%%%%%%%%%%%%%%%%%%%%%%%%%%%%%%%%%%%%%%%%%%%%%%%%%%%%%%%%%%%%%%%%%%%%%%%%%%%%
\section{Contracts of Functions and Their Generation}
\label{sec:contract-generation}
%%%%%%%%%%%%%%%%%%%%%%%%%%%%%%%%%%%%%%%%%%%%%%%%%%%%%%%%%%%%%%%%%%%%%%%%%%%%%%%%

Our analysis is based on generating \emph{contracts of functions} along the call
tree, starting from its leaves.
The contracts summarize the semantics of the functions under analysis.
(We may also compute multiple contracts for the same function where each
contract provides a valid summary of the function; the contracts might, however,
differ in the preconditions under which they apply.)
We introduce our notion of contracts in Subsection~\ref{subsec:contracts}.
We note that basic statements of our core language can be viewed as special
built-in functions.
Hence, we then introduce contracts for the basic statements in
Subsection~\ref{subsec:contracts-basic-statements}.
These serve as the starting point of our analysis:
Using the contracts of the built-in functions, we derive contracts for any
function built of them as described in Subsection~\ref{sec:contract-gen}.

%===============================================================================
\subsection{Contracts of Functions} \label{subsec:contracts}
%===============================================================================

We assume a set of \emph{variables} $\Var = \PVar \uplus \LVar$ that is
partitioned into two disjoint infinite set of \emph{program variables} $\PVar$
and \emph{logical variables} $\LVar$ (also called \emph{ghost} variables).
For functions $f(\varX_1,\ldots,\varX_n)$ with parameters $\varX_i$, we always
require $\varX_1,\ldots,\varX_n \in \PVar$ (also recall that we assume that
$\varX_1,\ldots,\varX_n$ are the only variables occurring in the body of $f$).
To summarize the semantics of a function $f(\varX_1,\ldots,\varX_n)$, we use
(sets of) \emph{contracts} of the form $\{\pre\} f(\varX_1,\ldots,\varX_n)
\{\post\}$ where \begin{itemize}

  \item the \emph{pre-condition} $\pre$ is a quantifier-free symbolic heap, and

  \item the \emph{post-condition} $\post$ is a disjunction of formulas of the
  form $\exists \prefix_\post. (\woProgVar * \ProgVarEQ)$ such that $\woProgVar$
  is a quantifier-free symbolic heap with $\varIn{\woProgVar} \subseteq \LVar$,
  $\ProgVarEQ$ is the formula $\varX_1 = \expression_1 * \cdots * \varX_n =
  \expression_n$ for some expressions $\expression_i$ with
  $\varIn{\expression_i} \subseteq \LVar$, and $\prefix_\post =
  (\varIn{\woProgVar * \ProgVarEQ} \cap \LVar) \setminus \varIn{\pre}$.
  Note that every disjunct of the post-condition $\post$ describes the heap by a
  formula over the logical variables (the formula $\woProgVar$) and fixes the
  values of the program variables in terms of expressions over the logical
  variables (the formula $\ProgVarEQ$) where all logical variables that do not
  appear in the pre-condition $\pre$ are existentially quantified (on the other
  hand, those logical variables that appear in the pre-condition may be
  implicitly considered as universally quantified).

  \item We call a contract \emph{conjunctive} if the post-condition $\post
  \equiv \post_1 \vee \cdots \vee \post_l$ consists of a single disjunct (i.e.,
  $l=1$), and \emph{disjunctive} otherwise.

\end{itemize}

%-------------------------------------------------------------------------------
\paragraph*{Soundness of contracts}
%-------------------------------------------------------------------------------

We will now state what it means for a contract to be sound.
As usual we stipulate that configurations satisfying the pre-condition lead to
configurations satisfying the post-condition.
In addition, we also require that we can always add a \emph{frame} to the
pre-/post-condition, i.e., a formula describing a part of the heap untouched by
the function\footnote{That is, we directly incorporate the well-known
\emph{frame rule} from the separation-logic literature into our notion of
soundness.
We choose to do so for economy of exposition and for making the paper self-contained.
As an alternative one could derive the validity of the frame rule from the fact that all contracts of the basic statements, as stated in Section~\ref{subsec:contracts-basic-statements}, are \emph{local actions} in the sense of~\cite{conf/lics/CalcagnoOY07} (which is equivalent to Lemma~\ref{lem:soundness-built-in} stated in this paper).}.
Here, a frame $\Frame$ is any symbolic heap with $\varIn{\Frame} \subseteq
\LVar$.
A contract $\{\pre\} f(\varX_1,\ldots,\varX_n) \{\post\}$ is called \emph{sound}
iff, for all frames $\Frame$, all triples $(\stack,\blocks,\mem)$ such that
$(\stack,\blocks,\mem) \models \Frame * \pre$, and all executions of
$f(\varX_1,\ldots,\varX_n)$ that start from $(\stack,\blocks,\mem)$ and end in
some configuration $(\stack',\blocks',\mem')$\footnote{Note that $\dom(\stack')
= \dom(\stack)$ and that we have $\stack'(x) = \stack(x)$ for all $x \in
\LVar$ because logical variables do not occur in the program and hence are never updated.}, it holds that $(\stack',\blocks',\mem') \models
\Frame*\post$.
%
%This correctness condition can be expressed by the formula $\forall \Var. \ ( \\{\Frame*\pre\} \ f(\varX_1,\ldots,\varX_n) \ \{\Frame*\post\} )$.

%===============================================================================
\subsection{Contracts for Basic Statements} \label{subsec:contracts-basic-statements}
%===============================================================================

We give below contracts for the basic statements of our programming language
stated as functions (basic statements may be viewed as special
built-in functions).
For simplicity (and w.l.o.g.), we assume that it never happens that the same variable appears both at the LHS and RHS of an assignment\footnote{We may assume this because assignments such as $\varX := *\varX$ can always be rewritten to the sequence $\varY := *\varX; \varX := \varY$ (at the cost of introducing a fresh variable $\varY$).}.
%
%(Due to this and due to not developing a pre-/post-calculus, we do not need to work with primed and unprimed versions of variables.)
%
Recall that $\emp$ is implicit in all otherwise pure constraints (and so we do
not need to repeat it):\begin{itemize}
  \abovedisplayskip=2mm
  \belowdisplayskip=2mm
  \item Function $\assign(\varX,\varY)$ with the body $\varX := \varY$:
  \begin{displaymath}
     \SPEC{\varY = \varYY}
     {\assign(\varX,\varY)}
     {\varX = \varYY ~*~ \varY = \varYY}.
  \end{displaymath}

  \item Function $\const_\val(\varX)$ with the body $\varX := \val$:
  \begin{displaymath}
     \SPEC{\emp}
     {\const_\val(\varX)}
     {\varX = \val}.
  \end{displaymath}

  \item Function $\load(\varX,\varY)$ with the body $\varX := *\varY$:
  \begin{displaymath}
    \SPEC{\varY = \varYY ~*~ \ptsto{\varYY}{\varZ}}
    {\load(\varX,\varY)}
    {\varX = \varZ ~*~ \varY = \varYY ~*~ \ptsto{\varYY}{\varZ}}
  \end{displaymath} with $\woProgVar \equiv \ptsto{\varYY}{\varZ}$ and
  $\ProgVarEQ \equiv \varX = \varZ ~*~ \varY = \varYY$.

  \item Function $\store(\varX,\varY)$ with the body $*\varX := \varY$:
  \begin{displaymath}
  \SPEC{\varX = \varXX ~*~ \varY = \varYY ~*~ \ptsto{\varXX}{\varZ}}
    {\store(\varX,\varY)}
    {\varX = \varXX ~*~ \varY = \varYY ~*~ \ptsto{\varXX}{\varYY}}
  \end{displaymath} with $\woProgVar \equiv \ptsto{\varXX}{\varYY}$ and
  $\ProgVarEQ \equiv \varX = \varXX ~*~ \varY = \varYY$.

  \item Function $\malloc(\varX,\varY)$ that either succeeds or fails to
  allocate memory through $\varX := \malloc(\varY)$:
  \begin{displaymath}
    \SPEC{\varY = \varYY}
    {\malloc(\varX,\varY)}
    {x = \NULL ~*~ \varY = \varYY \vee \exists \varU. \ \varX = \varU ~*~
    \nu(\varU,\varYY) ~*~ \varY = \varYY}
  \end{displaymath} where $\nu(\varU,\varYY) = \ptstobyte{\varU}{\top}{\varYY} *
  \bBlock(\varU) = \varU ~*~ \eBlock(\varU) = \varU+\varYY$.
  Note that either $\woProgVar \equiv \nu(\varU,\varYY)$ and $\ProgVarEQ \equiv
  \varX = \varU ~*~ \varY = \varYY$ or $\woProgVar \equiv \emp$ and $\ProgVarEQ
  \equiv \varX =   \NULL ~*~ \varY = \varYY$.
  A very similar contract can be used for $\calloc$, just with
  $\ptstobyte{\varU}{\top}{\varYY}$ changed to $\ptstobyte{\varU}{0}{\varYY}$.
  %
  %The thus obtained array of bytes can then be split and converted to
  %particular fields as need be by our later presented bi-abduction rules.
  %
  We remark that the contracts for $\malloc$ and $\calloc$ are the only
  disjunctive contracts among the contracts for the basic statements of our
  programming language.

%  \item Function $\malloc(\varX,\varY)$ succeeding to allocate memory through
%  $\varX := \malloc(\varY)$:
%  %
%  \begin{displaymath}
%    %
%    \SPEC{\varY = \varYY ~*~ 0 < \varYY < 2^{8N}}
%    %
%    {\malloc(\varX,\varY)}
%    %
%    {\exists \varU. \ \varX = \varU ~*~ \nu(\varU,\varYY) ~*~ \varY = \varYY}
%    %
%  \end{displaymath} where $\nu(\varU,\varYY) = \ptstobyte{\varU}{\top}{\varYY} *
%  \bBlock(\varU) = \varU ~*~ \eBlock(\varU) = \varU+\varYY$.
%  %
%  Note that $\woProgVar \equiv \nu(\varU,\varYY)$ and $\ProgVarEQ \equiv \varX =
%  \varU ~*~ \varY = \varYY$.
%  %
%  A very similar contract can used for $\calloc$, just with
%  $\ptstobyte{\varU}{\top}{\varYY}$ changed to $\ptstobyte{\varU}{0}{\varYY}$.
%  %
%  The thus obtained array of bytes can then be split and converted to particular
%  fields as need be by our later presented bi-abduction rules.
%
%  \item Function $\malloc(\varX,\varY)$ failing to allocate memory through
%  $\varX := \malloc(\varY)$:
%  %
%  \begin{displaymath}
%    %
%    \SPEC{\varY = \varYY}
%    %
%    {\malloc(\varX,\varY)}
%    %
%    {x = \NULL ~*~ \varY = \varYY}
%    %
%  \end{displaymath}
%  %
%  Likewise for $\calloc$.

  \item Function $\free(\varX)$ called with the $\NULL$ argument:
  \begin{displaymath}
    \SPEC{\varX = \varXX ~*~ \varXX = \NULL}
    {\free(\varX)}
    {\varX = \varXX ~*~ \varXX = \NULL}
  \end{displaymath}

  \item Function $\free(\varX)$ called over a non-$\NULL$ argument:
  \begin{displaymath}
    \SPEC{\varX = \varXX ~*~ \ptstobyte{\varXX}{\top}{\varY} * \bBlock(\varXX) =
    \varXX ~*~ \eBlock(\varXX) = \varXX+\varY}
    {\free(\varX)}
    {\varX = \varXX}
  \end{displaymath}
  Note that a block to be freed may be split into multiple fields at the time of
  freeing.
  We, however, do not need to deal with this issue here since the later
  presented bi-abduction rules will split the LHS of the contract of $\free$
  such that it can match the fragmented block.
  %
  % When a memory block in a current state is partitioned into multiple fields
  % (e.g. $ \ptsto{\varXX}{\varUps_1} * \ptsto{\varXX +
  % \size(\varUps_1)}{\varUps_2}$), the bi-abduction rules presented later split
  % the left-hand side of the free contract $\ptstobyte{\varXX}{\top}{\varY}$ into
  % the particular small fields, which are subsequently match with the current
  % state.

  % NOTE: no freed predicate nor the second-phase checking are needed to see a
  % problem in case a pointer is allocated, freed, and then dereferenced. This
  % will be recognised by that the current value of the pointer, which will be
  % an existentially quantified variable, will be required to be non-null, which
  % cannot be achieved at the start of the function.

  \item Functions $\evalop{\bop}(\varX,\varY,\varZ)$ with the body $\varX :=
  \varY \bop \varZ$ for binary operators $\bop$ (and likewise for unary
  operators $\uop$):
  \begin{displaymath}
     \SPEC{\varY = \varYY ~*~ \varZ = \varZZ}
     {\varX := \varY \bop \varZ}
     {\varX = \varYY \bop \varZZ ~*~ \varY = \varYY ~*~ \varZ = \varZZ}
  \end{displaymath}

  \item Function $\ptrplus(\varX,\varY,\varZ)$ with the body $\varX := \varY
  \ptrplus \varZ$ for the case when the result is within the block of the
  pointer to which an offset is added:
  \begin{displaymath}
     \SPEC{\varY = \varYY ~*~ \varZ = \varZZ ~*~ \varphi_{Y,Z}}
     {\varX := \varY \ptrplus \varZ}
     {\varX = \varYY + \varZZ ~*~ \varY = \varYY ~*~ \varZ = \varZZ ~*~ \varphi_{Y,Z}}
  \end{displaymath} for $\varphi_{Y,Z} \equiv \bBlock(\varYY) \neq 0  ~*~ \bBlock(\varYY) =
  \bBlock(\varYY + \varZZ) * \eBlock(\varYY) = \eBlock(\varYY +
  \varZZ)$.
  %\footnote{The last two conjuncts are not necessary in the concrete   semantics since they follow from the semantics of the blocks. Here, however, we need to state them explicitly.}
  %~*~ \bBlock(\varYY) \leq \varYY + \varZZ < \eBlock(\varYY)

  \item Function $\ptrplus(\varX,\varY,\varZ)$ with the body $\varX := \varY
  \ptrplus \varZ$ for the case when the result points one byte past the block of
  the pointer to which an offset is added:
  \begin{displaymath}
     \SPEC{\varY = \varYY ~*~ \varZ = \varZZ ~*~ \varphi_{Y,Z}}
     {\varX := \varY \ptrplus \varZ}
     {\varX = \varYY + \varZZ ~*~ \varY = \varYY ~*~ \varZ = \varZZ  ~*~ \varphi_{Y,Z}}
  \end{displaymath} for $\varphi_{Y,Z} \equiv \bBlock(\varYY) \neq 0 ~*~ \varYY
  + \varZZ = \eBlock(\varYY)$.

  \item Function $\ptrminus(\varX,\varY,\varZ)$ with the body $\varX := \varY
  \ptrminus \varZ$:
  \begin{displaymath}
     \SPEC{\varY = \varYY ~*~ \varZ = \varZZ ~*~ \varphi_{Y,Z}}
     {\varX := \varY \ptrminus \varZ}
     {\varX = \varYY - \varZZ ~*~ \varY = \varYY ~*~ \varZ = \varZZ}
  \end{displaymath} for $\varphi_{Y,Z} \equiv \bBlock(\varYY)
  \neq 0 ~*~ \bBlock(\varYY) \leq \varZZ \leq \eBlock(\varYY)$.

  \item To deal with the $\memcpy(\varX, \varY, \varZ)$ statement, we use the
  below contract schema encoding concrete contracts for all the ways the address
  spaces $[x,x+z)$ and $[y,y+z)$ can be divided into $m,n \geq 1$ fields,
  respectively:
  \begin{displaymath}
     \SPEC{\varphi_{\varXX,\varYY,\varZZ} ~*~ \varphi_{\varXX,\varZZ} ~*~
     \varphi_{\varYY,\varZZ}}
     {\memcpy(\varX, \varY, \varZ)}
     {\varphi_{\varXX,\varYY,\varZZ} ~*~ \varphi'_{\varXX,\varZZ} *
     \varphi_{\varYY,\varZZ}}
  \end{displaymath} where\begin{itemize}

    \item $\varphi_{\varXX,\varYY,\varZZ} \equiv \varX = \varXX ~*~ \varY =
    \varYY ~*~ \varZ = \varZZ$,

    \item $\varphi_{\varXX,\varZZ} \equiv \ptstobyte{\varXX}{\top}{\varZZ}$,

    \item $\varphi_{\varYY,\varZZ} \equiv \varYY_1 = \varYY * \bigstar_{i=1}^{i
    \leq n} (\ptsto{\varYY_i}{\varBB_i} * \varYY_{i+1} = \varYY_i +
    \size(\varBB_i)) * \varYY_{n+1} = \varYY + \varZZ$,

    \item $\varphi'_{\varXX,\varZZ} \equiv \varXX_1 = \varXX * \bigstar_{i=1}^{i
    \leq n} (\ptsto{\varXX_i}{\varBB_i} * \varXX_{i+1} = \varXX_i +
    \size(\varBB_i)) * \varXX_{n+1} = \varXX + \varZZ$.

  \end{itemize} Of course, in practice, a particular instance of the contract
  schema will be chosen on-the-fly according to the current state of the
  computation.
  One starts with $\varphi_{\varYY,\varZZ} \equiv \ptsto{\varYY}{\varBB}$ and
  $\varphi'_{\varXX,\varZZ}\equiv \ptsto{\varXX}{\varBB}$ and uses the splitting
  bi-abduction rules on $\varphi_{\varYY,\varZZ}$ according to how the block
  pointed by $\varYY$ is partitioned.
  This splitting is synchronized with splitting $\varphi'_{\varXX,\varZZ}$ in
  the same way.
  Thus, the $\varZZ$ bytes stored at $\varYY$ will be transferred to $\varXX$
  preserving their interpretation.

  \item Function $\assume_\bowtie(\varY,\varZ)$ with the body $\assume(\varY
  \bowtie \varZ)$:
  \begin{displaymath}
     \SPEC{\varY = \varYY ~*~ \varZ = \varZZ}
     {\assume(\varY \bowtie \varZ)}
     {\varY = \varYY ~*~ \varZ = \varZZ ~*~ \varY \bowtie \varZ}
  \end{displaymath}

  \item Function $\assert_\bowtie(\varY,\varZ)$  with the body $\assert(\varY
  \bowtie \varZ)$:
  \begin{displaymath}
     \SPEC{\varY = \varYY ~*~ \varZ = \varZZ ~*~ \varY \bowtie \varZ}
     {\assert(\varY \bowtie \varZ)}
     {\varY = \varYY ~*~ \varZ = \varZZ ~*~ \varY \bowtie \varZ}
  \end{displaymath}

\end{itemize}

We now state the soundness of the contracts for the basic statement of our
programming language:

\begin{lemma} \label{lem:soundness-built-in} Let $\stmt$ be a basic statement
and let $\{\pre\} \ f(\varX_1,\ldots,\varX_n) \ \{ \post\}$ be a contract for
$\stmt$ as stated above.
Then, the contract is sound, i.e., for all frames $\Frame$, all configurations $(\stack,\blocks,\mem)$ such that $(\stack,\blocks,\mem) \models \Frame * \pre$, and all executions of $f(\varX_1,\ldots,\varX_n)$ that start from
$(\stack,\blocks,\mem)$ and end in some configuration
$(\stack',\blocks',\mem')$, it holds that $(\stack',\blocks',\mem') \models \Frame*\post$. \end{lemma}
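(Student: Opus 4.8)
The plan is to prove the lemma by a straightforward case analysis over the finitely many shapes of basic statements listed above, verifying each contract in turn. In every case I would begin by unfolding the assumption $(\stack,\blocks,\mem) \models \Frame * \pre$ through the semantics of $*$: there is a split $\mem = \mem_F \uplus \mem_P$ with $(\stack,\blocks,\mem_F) \models \Frame$ and $(\stack,\blocks,\mem_P) \models \pre$. Since the soundness statement only quantifies over executions that terminate in a genuine configuration, I may restrict attention throughout to non-$\error$ transitions. The two facts to establish in each case are (i) that every memory location written by the statement lies inside $\dom(\mem_P)$, so that the frame sub-heap $\mem_F$ survives unchanged into the final memory $\mem'$, and (ii) that the residual sub-heap $\mem'_P$, together with the updated stack and block set, models $\post$. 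The stack side is uniform across all cases: the operational rules only ever rewrite the entries of the program variables $\varX_1,\ldots,\varX_n$, whereas $\varIn{\Frame} \subseteq \LVar$ by definition of a frame and logical variables are never updated; hence $\stack'$ agrees with $\stack$ on $\varIn{\Frame}$ and the stack part of the frame is preserved automatically.

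For the heap-neutral statements --- $\assign$, $\const$, the unary and binary operator functions, $\assume$, $\assert$, $\ptrplus$, and $\ptrminus$ --- the semantics leaves $\mem$ and $\blocks$ untouched, so I take $\mem'_P = \mem_P$ and keep $\mem_F$ fixed; what remains is to read the value of the updated program variable off the semantic rule, substitute the equalities supplied by $\pre$ (e.g.\ $\stack(\varY) = \denot{\varYY}{\stack}{\blocks}$), and check that the resulting stack satisfies the pure part of $\post$ (for $\ptrplus$, $\ptrminus$ the $\bBlock$/$\eBlock$ side-constraints $\varphi_{Y,Z}$ are pure, depend only on the unchanged logical variables, and so carry over verbatim). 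For $\load$ and $\store$ the footprint is exactly the points-to region contributed by $\pre$: from $(\stack,\blocks,\mem_P) \models \varY = \varYY * \ptsto{\varYY}{\varZ}$ (respectively the $\store$ precondition) I obtain $\dom(\mem_P) = [\stack(\varYY),\stack(\varYY)+\size(\varZ))$, which is precisely the interval the rule reads from or writes to; a write therefore affects only $\mem_P$, and I would obtain $\mem'_P$ by updating the points-to contents while leaving $\mem_F$ fixed.

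The statements that mutate the block set, $\malloc$ and $\free$, carry the real content. For $\malloc$ I take the successful branch (the failing branch returns $\NULL$, leaves $\mem$ and $\blocks$ unchanged, and matches the first disjunct of $\post$ trivially): here $\blocks' = \blocks \cup \{[\loc,\loc+\stack(\varYY))\}$ for a fresh, non-overlapping block, and the freshly allocated bytes form a new sub-heap $\mem_N$ disjoint from $\mem$, so I set $\mem'_P = \mem_P \uplus \mem_N$ and argue that the new block realises the $\nu$-predicate, its base and end being $\loc$ and $\loc+\stack(\varYY)$, i.e.\ $\bBlock(\varU) = \varU * \eBlock(\varU) = \varU + \varYY$. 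For $\free$ over a non-$\NULL$ argument the precondition $\ptstobyte{\varXX}{\top}{\varY} * \bBlock(\varXX) = \varXX * \eBlock(\varXX) = \varXX + \varY$ pins down the whole block $[\stack(\varXX),\stack(\varXX)+\stack(\varY))$ as owned by $\mem_P$; removing it from $\blocks$ and deallocating its locations touches only $\mem_P$, so $\mem_F$ again survives. Finally, for the $\memcpy$ schema I would track the synchronised field decomposition recorded in $\varphi_{\varYY,\varZZ}$ and $\varphi'_{\varXX,\varZZ}$: the precondition owns both the source interval $[\stack(\varYY),\stack(\varYY)+\stack(\varZZ))$ and the target interval $[\stack(\varXX),\stack(\varXX)+\stack(\varZZ))$, which are disjoint in any non-$\error$ execution, and the rule copies the source bytes onto the target; I would then check field by field that the copied contents realise the same byte values $\varBB_i$, so the target region now satisfies $\varphi'_{\varXX,\varZZ}$ while the source still satisfies $\varphi_{\varYY,\varZZ}$.

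I expect the main obstacle to be the bookkeeping around $\malloc$ and $\free$, namely showing that $(\stack',\blocks',\mem_F) \models \Frame$ still holds after $\blocks$ changes. Concretely, I must verify that the derived functions $\bBlock_{\blocks'}$ and $\eBlock_{\blocks'}$ agree with $\bBlock_\blocks$ and $\eBlock_\blocks$ on every location relevant to $\mem_F$, so that all $\bBlock$/$\eBlock$ sub-expressions and the list-segment predicates occurring in $\Frame$ keep their valuation after allocation or deallocation. This is where I would lean on the configuration invariant (every allocated location belongs to a block) together with the non-overlap axioms of the memory model, which guarantee that the block freshly added by $\malloc$, or entirely removed by $\free$, is disjoint from the frame's footprint and hence leaves the frame's block boundaries intact. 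The $\memcpy$ schema is conceptually routine but notationally heavy, so I would treat it as the second, minor source of care.
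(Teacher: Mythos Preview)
Your proposal is correct and follows the same approach as the paper, which simply states that the lemma is ``direct from the semantics of our programming language'' without further elaboration. Your detailed case analysis is precisely the unpacking of that one-line claim, and the subtlety you flag about $\bBlock_{\blocks'}/\eBlock_{\blocks'}$ agreeing with $\bBlock_\blocks/\eBlock_\blocks$ on the frame's footprint after $\malloc$/$\free$ is the only genuinely non-mechanical point, which the paper leaves implicit.
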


\begin{proof} Direct from the semantics of our programming language as stated in
Section~\ref{sec:language-plus-semantics}. \end{proof}

%===============================================================================
\subsection{Contract Generation} \label{sec:contract-gen}
%===============================================================================

We now discuss the generation of contracts for arbitrary user-defined functions.
Our analysis proceeds along the call tree, starting from its leaves.
Hence, we can assume to already have computed contracts for nested function
calls.
(Recall that, in this paper, we limit ourselves to non-recursive functions.)
We first discuss a special case where we assume that the function to be analyzed
consists of a sequence of calls of other functions (i.e., no branching and
looping), and we assume the contracts of the nested function calls to be
conjunctive.
After that we state the general set-up of our analysis.

%-------------------------------------------------------------------------------
\subsubsection{Sequences of Function Calls with Conjunctive Contracts}
%-------------------------------------------------------------------------------

We now consider functions whose body consists of a sequence of function calls
$s_1;\ldots;s_l$ for some $l > 0$, whose contracts are conjunctive (this
includes for example all built-in functions apart from $\malloc$).
We formulate a symbolic execution that, given such a function
$f(\varX_1,\ldots,\varX_n)$, derives a sound contract for $f$.
The symbolic execution starts at the beginning of $f$ and maintains a pair of
formulae $\pre$ and $\post$, representing the so-far computed part of the
\emph{pre-condition} of the function $f$ and the \emph{current symbolic state}.
The symbolic execution will guarantee that configurations that satisfy $\pre$
lead to configurations satisfying $\post$ after executing the so-far analysed
statements.
$\pre$ and $\post$ will change throughout the symbolic execution because we keep
restricting the precondition $\pre$ and advancing the symbolic state $\post$.
The symbolic execution is set up such that the program variables
$\varX_1,\ldots,\varX_n$ may be updated, while all other variables will never be
modified (but, of course, fresh variables may be introduced and assigned at any
time).
The symbolic execution is initialised by introducing fresh logical variables
$\varXX_1,\ldots,\varXX_n$ and setting $\pre \equiv \post \equiv \varX_1 =
\varXX_1 * \cdots * \varX_n = \varXX_n$.
The symbolic execution then proceeds iteratively, considering the sequences
$s_1;\ldots;s_i$, starting with $i = 0$.
For every $0 \le i \le l$, the formulae $\pre$ and $\post$ are updated such that
they form a contract for the sequence $s_1;\ldots;s_i$.
We argue in the proof of Theorem~\ref{thm:correctness-symbolic-execution} below
that the symbolic execution maintains the contract $\{\pre\} \ s_1;\ldots;s_i \
\{\post\}$.
Note that our initialization of the formulae $\pre$ and $\post$ ensures the
correctness condition for $i=0$ and we derive a sound contract for $f$ once we
reach $i=l$.

We now describe how to apply a contract for a function call $s_i$.
We consider some function $\{C\} g(\varY_1,\ldots,\varY_m) \{D\}$ with $D \equiv
\exists \prefix_D. \ D_\postFormulaSubscript* D_\postEQSubscript$.
Let $s_i \equiv g(\varA_1,\ldots,\varA_m)$ be the function call with arguments
$\{\varA_1,\ldots,\varA_m\} \subseteq \{\varX_1,\ldots,\varX_n\}$, where we have
by assumption that the $\varA_i$ are pairwise different.
Let $\pre$ and $\post \equiv \exists \prefix_\post. \ \woProgVar * \ProgVarEQ$
be the current pre-condition and symbolic state, respectively.
We can assume w.l.o.g. that $\varIn{C} \cap \varIn{\woProgVar * \ProgVarEQ}
\subseteq \PVar$ because we can always rename the logical variables of a
contract (here, $\{C\} \ g(\varY_1,\ldots,\varY_m) \ \{D\}$) in order to avoid
name clashes.
Then, we consider the bi-abduction problem \begin{displaymath}
\woProgVar * [?] \models C' * [?] \quad \quad \quad \quad (\BAP)
\end{displaymath} where \begin{displaymath}
C' \equiv (C[\varA_i/\varY_i])[\expression_j/\varX_j]
\end{displaymath} is the formula $C$ from the precondition of
$g(\varY_1,\ldots,\varY_m)$ with the parameters $\varY_i$ substituted by the
arguments $\varA_i$, and the variables $\varX_j$ substituted by the expressions
$\expression_j$ according to the formula $\ProgVarEQ \equiv \varX_1 =
\expression_1 * \cdots * \varX_n = \expression_n$.
(Note that no formula in the bi-abduction problem contains any of the program
variables $\varX_1,\ldots,\varX_n, \varY_1,\ldots,\varY_m$.)
A solution to the bi-abduction problem ($\BAP$) consists of formulae $M$ and
$\Frame$, denoted as an \emph{antiframe} and a \emph{frame}, respectively, such
that \begin{displaymath} \woProgVar * M \models \exists \prefix. \ C' * \Frame
\quad \quad \text{ where } \quad \quad \prefix = \varIn{C'*\Frame} \setminus
\varIn{\woProgVar * \ProgVarEQ * M}.  \end{displaymath}
We require $M$ to be a quantifier-free symbolic heap with $\varIn{M} \subseteq
\LVar \setminus \prefix_\post$ (note that all existentially quantified variables
in $\prefix_\post$ are introduced during the symbolic execution, hence we cannot
restrict them at the beginning of the function where they are not known yet).
We also require  $\Frame$ to be a quantifier-free symbolic heap with
$\varIn{\Frame} \subseteq \LVar$.
Note that any such $M$ and $\Frame$ ensure that the function call $s_i \equiv
g(\varA_1,\ldots,\varA_m)$ can be safely executed:
The soundness of the contract $\{C\} \ g(\varY_1,\ldots,\varY_m) \ \{D\}$
guarantees that $g(\varY_1,\ldots,\varY_m)$ can be executed for any valuation of
the universally quantified variables; hence, it is sufficient to find some
valuation (as required by the existential quantification in $\exists \prefix. \
C' * \Frame$).
%
% However, because we want to carry over as much information as possible from
% before the function call $s_i \equiv g(\varA_1,\ldots,\varA_m)$ to after the
% function call, we also want to compute witness terms $\expressionAlt_\varZ$
% for all non-program variables $z$ in $A$ such that \begin{displaymath}
% \woProgVar * [M] \models A'[\expressionAlt_\varZ/\varZ] * [F] \quad \quad
% (**), \end{displaymath} i.e., we intend to find explicit witness terms for the
% existential quantification in (*).
%
% We can then use the witness terms in order to instantiate the non-program
% variables in the post-condition, i.e., after computing the frame $F$, the
% precondition $M$ and the substitution $[\expressionAlt_\varZ/\varZ]$,
%
Given a solution $M$ and $F$ to the bi-abduction problem ($\BAP$), we can then
update the current state of the symbolic execution (we denote the new pre- and
post-condition by $\pre_\after$ and $\post_\after$, respectively):
\begin{itemize}

  \item $\pre_\after := M * \pre$.

  \item $\woProgVar' := \Frame * D_\postFormulaSubscript$.
  %
  % B_\postFormulaSubscript[\expressionAlt_\varZ/\varZ]$,

  \item $\ProgVarEQ'$ is the (separating) conjunction of $\varX_i =
  \expression_j'$
  %
  % $\varX_i = \expression_j'[\expressionAlt_\varZ/\varZ]$
  %
  for each $\varX_i$ passed by reference as the argument $\varA_j$ to $g$ where
  $\varA_j = \expression_j'$ is the final value according to the post-condition
  $D_\postEQSubscript$; and $\varX_i = \expression_i$ for each variable not
  passed to $g$ where $\varX_i = \expression_i$ is according to $\ProgVarEQ$
  before the call of $g$.

  \item $\post_\after := \exists \prefix_{\post_\after}. \ \woProgVar' *
  \ProgVarEQ'$ where $\prefix_{\post_\after} = (\varIn{\woProgVar' *
  \ProgVarEQ'} \cap \LVar ) \setminus \varIn{\pre_\after}$.
  %
  %(Note that $\prefix \cup \prefix_\post \cup \prefix_D$ does not contain any   variables from $\pre_\after$ and hence the existential quantification in  $\post_\after$ is sound. \tv{The item speaks about $\prefix_{\post_\after} =   (\varIn{\woProgVar' * \ProgVarEQ'} \cap \LVar ) \setminus \varIn{\pre_\after}$   and the note about $\prefix \cup \prefix_\post \cup \prefix_D$. How are these   related? Is it necessary to have the note here? If so, is it possible to  explain the link? I'm lost in it once again. Completely.}

\end{itemize}
For later use, we write $(\pre_\after,\post_\after) :=
\biabduct(\pre,\post,g(\varA_1,\ldots,\varA_m),C,D)$ for the result of advancing
the symbolic execution by one step wrt the bi-abduction solution.

\textbf{Formula Simplification by Quantifier Elimination.}
We will always try to simplify formulas with existentially quantified formulas.
For this, we assume the existence of a quantifier elimination procedure $\elim(\prefix, \phi)$, which attempts to eliminate as many variables $\varU \in \prefix$ as possible from a quantifier-free symbolic heap $\phi$.
We require that $\exists \prefix'. \psi$ is equivalent to $\exists \prefix. \phi$, where $\psi = \elim(\prefix, \phi)$ is the formula returned by the quantifier elimination procedure and $\prefix'$ is the set $\prefix$ minus the eliminated variables.
We use the following simple quantifier elimination procedure $\elim(\prefix, \phi)$ in our implementation:
For every equality in $\varU = \expression$ in $\phi$ with $\varU \in \prefix$,
we replace $\varU$ everywhere in $\phi$ by $\expression$ and then delete the
equality $\varU = \expression$. It is easy to verify that this procedure satisfies the above requirement.

%-------------------------------------------------------------------------------

\begin{example}
\label{ex:abdInSeq}

We consider the call $g(x)$ of a function with contract $\set{C} g(y) \set{D}$ during the analysis of some function $f(x)$, where the so-far derived precondition $\pre$, the current symbolic state $\post$ as well as the pre- and post-conditions $C$ and $D$ look as follows:

\begin{itemize}
  \item $\pre \equiv \ptsto{\varXX}{\varA} * \ptsto{\varXX+8}{\varB} *  \varX=\varXX$;

  \item $\post \equiv \exists \prefix_\post. \ \woProgVar * \ProgVarEQ$ where $\woProgVar \equiv \ptsto{\varXX}{\varA} * \ptsto{\varXX+8}{\varZ}$,  $\ProgVarEQ \equiv \varX=\varXX$ and $\prefix_\post = \{z\}$;

  \item $C \equiv \ptsto{\varYY}{\varU} * \ptsto{\varYY+8}{\varW} * \ptsto{\varU}{\varV} * \varY=\varYY$, and

  \item $D \equiv \exists\varC.\ D_\postFormulaSubscript * D_\postEQSubscript$ where $D_\postFormulaSubscript \equiv
    \ptsto{\varYY}{\varU} * \ptsto{\varYY+8}{\varW} * \ptsto{\varU}{\varV} * \ptsto{\varV}{\varC}$
  and $D_\postEQSubscript \equiv \varY=\varV$.

\end{itemize}

\noindent The formula $C'$ wrt which we will be solving the bi-abduction problem will look as follows:
$$C'\equiv (C[\varX/\varY])[\varXX/\varX] \equiv \ptsto{\varYY}{\varU} * \ptsto{\varYY+8}{\varW} * \ptsto{\varU}{\varV} * \varXX=\varYY.$$
Hence, we need to solve the bi-abduction problem
$$\ptsto{\varXX}{\varA} * \ptsto{\varXX+8}{\varZ} *
[?] \models \ptsto{\varYY}{\varU} * \ptsto{\varYY+8}{\varW} * \ptsto{\varU}{\varV} * \varXX=\varYY* [?].$$
Then, our bi-abduction procedure returns
$$M \equiv \ptsto{\varA}{\varV} \quad \quad \text{and} \quad \quad \Frame \equiv  \varXX=\varYY * \varA=\varU * \varZ = \varW,$$
i.e., we have
$$\woProgVar * M \models \exists \prefix. \ C' * \Frame
\quad \text{ where } \quad \prefix = \varIn{C'*\Frame} \setminus
\varIn{\woProgVar * \ProgVarEQ * M} = \{Y,u,w\}.$$
(For how $M$ and $\Frame$ were generated, see the description of the bi-abduction
procedure below.)
The new missing precondition $\pre_\after$ of the function $f(\varX)$ and the new
current symbolic state $\post_\after$ of $f(\varX)$ after the call of $g(\varX)$ is then as follows:

\begin{itemize}

  \item $\pre_\after \equiv \pre * M \equiv \ptsto{\varXX}{\varA} * \ptsto{\varXX+8}{\varB} *  \varX=\varXX *   \ptsto{\varA}{\varV}$,

  \item $\woProgVar' \equiv \Frame *
  D_\postFormulaSubscript \equiv \varXX=\varYY * \varA=\varU * \varZ = \varW * \ptsto{\varYY}{\varU} * \ptsto{\varYY+8}{\varW} * \ptsto{\varU}{\varV} * \ptsto{\varV}{\varC}$,

  \item $\post'_\postEQSubscript \equiv \varX=\varV$,

  \item $\post_\after \equiv \exists \varC,\varU,\varW, \varYY,\varZ.\   \varXX=\varYY * \varA=\varU * \varZ = \varW * \ptsto{\varYY}{\varU} * \ptsto{\varYY+8}{\varW} * \ptsto{\varU}{\varV} * \ptsto{\varV}{\varC} * \varX=\varV$,

  \item we then intend to simplify $\post_\after$ and compute

      \vspace{-0.6cm}
      $$\elim(\{\varC,\varU,\varW, \varYY,\varZ\},\woProgVar'* \post'_\postEQSubscript) \equiv \ptsto{\varXX}{\varA} * \ptsto{\varXX+8}{\varW} * \ptsto{\varA}{\varV} * \ptsto{\varV}{\varC} * \varX=\varV,$$

      \vspace{-0.2cm}
      i.e., we obtain

      \vspace{-0.3cm}
      $$ \hspace{-3.8cm}\post_\after^\mathit{simp} \equiv
      \exists \varC,\varW.\  \ptsto{\varXX}{\varA} * \ptsto{\varXX+8}{\varW} * \ptsto{\varA}{\varV} * \ptsto{\varV}{\varC} * \varX=\varV.$$
\end{itemize}

\end{example}
We now state the soundness of our approach:

\begin{theorem} \label{thm:correctness-symbolic-execution} Let
$f(\varX_1,\ldots,\varX_n)$ be some function and let $\{\pre\} \
f(\varX_1,\ldots,\varX_n) \ \{ \post\}$ by the contract inferred by the
bi-abductive inference as stated above.
Then, the contract is sound, i.e., for all frames $\Frame$, all configurations
$(\stack,\blocks,\mem)$ with $(\stack,\blocks,\mem) \models \Frame*\pre$, and
all executions of $f(\varX_1,\ldots,\varX_n)$ that start with
$(\stack,\blocks,\mem)$ and end with some configuration
$(\stack',\blocks',\mem')$ we have $(\stack',\blocks',\mem') \models
\Frame*\post$.  \end{theorem}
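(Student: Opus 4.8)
The plan is to prove a slightly stronger statement by induction on the length of the processed prefix of the statement sequence: for every $0 \le i \le l$, the pair $(\pre_i,\post_i)$ maintained by the symbolic execution after processing $s_1;\ldots;s_i$ is a \emph{sound contract for that prefix}, i.e.\ for all frames $\Frame$, all $(\stack,\blocks,\mem) \models \Frame * \pre_i$, and all executions of $s_1;\ldots;s_i$ from $(\stack,\blocks,\mem)$ ending in some $(\stack',\blocks',\mem')$, we have $(\stack',\blocks',\mem') \models \Frame * \post_i$. The theorem is then the case $i=l$. For the base case $i=0$ the execution is empty and $\pre_0 \equiv \post_0 \equiv \varX_1 = \varXX_1 * \cdots * \varX_n = \varXX_n$, so every model of $\Frame * \pre_0$ is a model of $\Frame * \post_0$ and no configuration changes; the claim is immediate. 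Since the simplification step $\elim$ preserves logical equivalence, it does not affect soundness and can be ignored in the argument.

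For the inductive step I would assume the invariant for $i-1$ and let $s_i \equiv g(\varA_1,\ldots,\varA_m)$ with the (assumed sound) contract $\{C\}\,g(\varY_1,\ldots,\varY_m)\,\{D\}$, $D \equiv \exists \prefix_D.\ D_\postFormulaSubscript * D_\postEQSubscript$. Write $(M,\FrameAlt)$ for the bi-abduction solution used by $\biabduct$ (I rename the bi-abduction frame to $\FrameAlt$ to keep it apart from the outer frame $\Frame$), so that $\woProgVar * M \models \exists \prefix.\ C' * \FrameAlt$ and $\woProgVar' \equiv \FrameAlt * D_\postFormulaSubscript$. Fix an arbitrary outer frame $\Frame$ and a model $(\stack,\blocks,\mem) \models \Frame * \pre_i$. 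Because $\pre_i \equiv M * \pre_{i-1}$, this model also satisfies $(\Frame * M) * \pre_{i-1}$, so the induction hypothesis applied with the enlarged frame $\Frame * M$ yields that every execution of $s_1;\ldots;s_{i-1}$ reaches an intermediate configuration $(\stack'',\blocks'',\mem'') \models (\Frame * M) * \post_{i-1}$. Instantiating the existential prefix $\prefix_\post$ of $\post_{i-1} \equiv \exists \prefix_\post.\ \woProgVar * \ProgVarEQ$ gives witnesses under which $(\stack'',\blocks'',\mem'') \models \Frame * M * \woProgVar * \ProgVarEQ$.

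The heart of the step is to push this intermediate configuration through the single call $s_i$. By monotonicity of the separating conjunction, the defining entailment $\woProgVar * M \models \exists \prefix.\ C' * \FrameAlt$ lets us re-split the heap so that, for a suitable valuation of $\prefix$, we get $(\stack'',\blocks'',\mem'') \models \Frame * \FrameAlt * C' * \ProgVarEQ$. The crucial identity is that $C'$ was obtained from $C$ exactly by substituting the parameters $\varY_i$ by the arguments $\varA_i$ and the program variables $\varX_j$ by the expressions $\expression_j$ fixed in $\ProgVarEQ$; hence, under the valuation $\ProgVarEQ$ of the program variables, the sub-heap described by $C'$ coincides with the sub-heap that the precondition $C$ of $g$ requires when $g$ is invoked on the \emph{current} values of its arguments. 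Applying the soundness of $\{C\}\,g\,\{D\}$ with frame $\Frame * \FrameAlt$ (which is licensed because our notion of contract soundness already bakes in the frame rule; for basic statements this soundness is Lemma~\ref{lem:soundness-built-in}, for previously analysed functions it is the induction hypothesis of the overall call-tree recursion) and using the call semantics from Section~\ref{sec:language-plus-semantics} (arguments $\varA_i$ written back, all other variables unchanged), the configuration $(\stack',\blocks',\mem')$ reached after $s_i$ satisfies $\Frame * \FrameAlt * D_\postFormulaSubscript$ together with the updated program-variable values. It then remains to match this against $\post_i \equiv \exists \prefix_{\post_\after}.\ \woProgVar' * \ProgVarEQ'$: the heap part $\woProgVar' \equiv \FrameAlt * D_\postFormulaSubscript$ captures exactly $\Frame * \FrameAlt * D_\postFormulaSubscript$ minus the outer frame, while $\ProgVarEQ'$ records the new values of the arguments passed by reference according to $D_\postEQSubscript$ and retains from $\ProgVarEQ$ the old values of the untouched program variables; the logical variables freshly introduced by the call are precisely those existentially quantified by $\prefix_{\post_\after}$.

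I expect the main obstacle to be the disciplined bookkeeping of variable scopes and of the three compositions of formulae $M$, $\FrameAlt$, and $\Frame$. Concretely one must (i)~verify that the two substitutions—program variables via $\ProgVarEQ$ and parameters via the argument list—combine cleanly and leave $C'$ over logical variables only, so that the bi-abduction problem is well posed; (ii)~ensure that the witnesses chosen for $\prefix$ and the variables of $\prefix_{\post_\after}$ can be taken disjoint from $\Frame$, $M$, and $\FrameAlt$, which is exactly where the w.l.o.g.\ renaming $\varIn{C} \cap \varIn{\woProgVar * \ProgVarEQ} \subseteq \PVar$ is used; and (iii)~justify the split between ``universal'' and ``existential'' logical variables, since soundness of $g$'s contract guarantees the transition for \emph{every} valuation of the variables of $C$ but only asserts the \emph{existence} of a valuation for the fresh variables of $D$, which is precisely what the existential quantifier in $\post_i$ requires. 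Once these scoping invariants are fixed, the remainder is a routine, if lengthy, unfolding of the SL semantics and the bi-abduction entailment.
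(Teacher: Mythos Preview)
Your proposal is correct and follows essentially the same approach as the paper's own proof: induction on the length of the processed prefix, instantiating the inductive hypothesis with the enlarged frame $\Frame * M$, then using the bi-abduction entailment $\woProgVar * M \models \exists \prefix.\ C' * \FrameAlt$ together with the soundness of the callee's contract (applied with frame $\Frame * \FrameAlt$) to push through the call $s_i$, and finally matching the resulting configuration against $\post_i$. Your discussion of the variable-scoping bookkeeping (points (i)--(iii)) is in fact somewhat more explicit than the paper's proof, which carries out the same steps but handles the stack extensions and existential witnesses more tersely.
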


\begin{proof} We prove that $\set{\pre} s_1;\ldots;s_i  \set{\post}$ for all $0
\le i \le l$ (*), where $\pre$ and $\post$ are the formulae after step $i$ of
the symbolic execution.
We prove (*) by induction on $i$.
Note that (*) holds for $i=0$ due to the initialization of $\pre$ and $\post$.

We now consider some $i >0$ and the sequence $s_1;\ldots;s_i$.
By induction assumption we have that $\set{\pre} s_1;\ldots;s_i \set{\post}$.
Let $s_{i+1} \equiv g(\varA_1,\ldots,\varA_m)$ be some function call for which
we assume the contract $\{C\} \ g(\varY_1,\ldots,\varY_m) \ \{D\}$  with $D
\equiv \exists \prefix_D. \ D_\postFormulaSubscript* D_\postEQSubscript$, whose
soundness has already been established.
%
% (note that this includes the built-in functions whose correctness was
% established in Lemma~\ref{lem:soundness-built-in}).
%
Let $M$ and $\Frame$ be a solution to the biabuction problem, i.e., let $M$ and
$\Frame$ be formulae with $\woProgVar * M \models \exists \prefix. \ C' *
\Frame$ for $\prefix = \varIn{C'*\Frame} \setminus \varIn{\woProgVar *
\ProgVarEQ * M}$ (\#) where $C'$ is the formula $C$ after parameter
instantiation as described above.
We will now argue that $\set{\pre_\after}
s_1;\ldots;s_i;s_{i+1} \set{\post_\after}$ where $\pre_\after$ and $\post_\after$ are the pre- and post-condition after step $i+1$.
Let $\FrameAlt$ be some frame.
We now instantiate the frame in the induction
assumption with $\FrameAlt * M$, i.e., we have that $\set{\FrameAlt * M * \pre}
s_1;\ldots;s_i \set{\FrameAlt * M * \post}$.
Let $(\stack,\blocks,\mem)$ be some configuration with $(\stack,\blocks,\mem)
\models \FrameAlt * M * \pre$ and let $(\stack',\blocks',\mem')$ be the result
of executing $s_1;\ldots;s_i$ starting from configuration
$(\stack,\blocks,\mem)$.
From the instantiation of the induction assumption, we get that
$(\stack',\blocks',\mem') \models \FrameAlt * M * \post$.
By (\#), we have that $(\stack',\blocks',\mem') \models  \FrameAlt * \exists
\prefix_\post. \ ( \ProgVarEQ * \exists \prefix. \ ( C' * \Frame))$.
Hence, there is some stack $\stack''$ with $\dom(\stack'') = \dom(\stack') \cup
\prefix_\post \cup \prefix \cup \{\varY_1,\ldots,\varY_m\}$ such that
$(\stack'',\blocks',\mem') \models \FrameAlt * \Frame * C$ and $\stack''(y_j) =
\stack''(x_i)$ according to $\varX_i = \varA_j$ in the function call
$g(\varA_1,\ldots,\varA_m)$.
Let $(\stack''',\blocks'',\mem'')$ be the configuration that results from
executing $g(\varY_1,\ldots,\varY_m)$ starting from $(\stack'',\blocks',\mem')$.
By the correctness of $\{C\} \ g(\varY_1,\ldots,\varY_m) \ \{D\}$, we get that
$(\stack''',\blocks'',\mem'') \models \FrameAlt * \Frame * D$.
We now consider the stack $\stack^\circ$ defined by $\stack^\circ(\varX_i) =
\stack'''(\varY_j)$ for all $1 \le i \le m$ with $\varX_i = \varA_j$ in the
function call $g(\varA_1,\ldots,\varA_m)$, and $\stack^\circ(\varX_i) =
\stack(\varX_i)$ for all $\varX_i \in \Var \setminus
\{\varA_1,\ldots,\varA_n\}$.
Note that $(\stack^\circ,\blocks'',\mem'')$ is the resulting configuration after
returning from the function call $g(\varA_1,\ldots,\varA_m)$.
By the above, we then have $(\stack^\circ,\blocks'',\mem'') \models \FrameAlt *
\exists (\prefix \cup \prefix_\post \cup \prefix_D). \ (\Frame *
D_\postFormulaSubscript * \post_\postEQSubscript' )$.
Because of $(\prefix \cup \prefix_\post \cup \prefix_D) \cap \varIn{\pre_\after}
= \emptyset$, we then get $(\stack^\circ,\blocks'',\mem'') \models \FrameAlt *
\exists \prefix_{\post_\after}. \ (\post_\postFormulaSubscript' *
\post_\postEQSubscript' )$.
This establishes the claim. \end{proof}

%-------------------------------------------------------------------------------
\subsubsection{Branching, Looping and Disjunctive Contracts}
\label{section:branching-looping}
%-------------------------------------------------------------------------------

So far we have described the generation of contracts for functions $\set{\pre}
f(\varX_1,\dots,\varX_n)\set{\post}$ whose body consists of a sequence of
function calls with conjunctive contracts and without branching and looping.
In this section, we lift all these restrictions.
We follow~\cite{BiAbd11} and present a two-round analysis for the general case (we
present a short summary of these two analysis phases here to make the paper
self-contained, but we refer the reader to~\cite{BiAbd11} for a more detailed
exposition and for a formal statement on the soundness of the analysis):
The first round (called \texttt{PreGen} in~\cite{BiAbd11}) infers a set of
pre-/post-condition pairs $(\pre,\post)$, but in contrast to
Theorem~\ref{thm:correctness-symbolic-execution} there is no guarantee about the
soundness of the inferred $(\pre,\post)$.
For each pre-/post-condition pair $(\pre,\post)$ computed in the first round,
the second round (called \texttt{PostGen} in~\cite{BiAbd11}) discards the
post-condition $\post$ and re-starts the symbolic execution from the
pre-condition $\pre$ \emph{not allowing the strengthening of the pre-condition
throughout the symbolic execution}, which either fails or results in a set of
pre-/post-condition pairs $(\pre,\post_1),\ldots, (\pre,\post_l)$.
In the latter case, we return $(\pre,\post_1 \vee \cdots \vee \post_l)$, which
is guaranteed to be a~sound contract.

We now describe the general set-up that is used by the first as well as the
second analysis round.
Recall that we assume a control flow graph $(V,E,\eentry,\eexit)$ for each function
$f(\varX_1,\dots,\varX_n)$ where $V$ is a~set of \emph{nodes}, there are
dedicated nodes $\eentry,\eexit \in V$, and $E \subseteq V \times V$ is a set of
edges where each edge is labelled by a function call (recall that we also model
basic statements as functions).
By $V_\close\subseteq V$, we denote a set of \emph{cut-points}, where each loop
must contain at least one cut-point (usually the header location of a loop).
A function without loops has $V_\close = \emptyset$.
The symbolic execution maintains a mapping $\symb: V \rightarrow
(\mathbf{2}_\mathit{fin})^\contracts$ where each location $\cloc \in V$ is
mapped to a finite set of pre-/post-condition pairs $(\pre,\post) \in
\contracts$.
Initially, $\symb(\eentry)=\set{(\pre_\iinit,\post_\iinit)}$ where $\pre_\iinit$
and $\post_\iinit$ are specified as stated below for the first and second
analysis round, and $\symb(\cloc) = \emptyset$ for all $\cloc \neq \eentry$.

The symbolic execution then proceeds as a work-list algorithm.
In each computational step, the symbolic execution picks a pre-/post-condition
pair $(\pre,\post) \in \symb(\cloc)$, for some $\cloc \in V$, that has not been
processed yet.
Then, the symbolic execution performs the following for each outgoing edge
$(\cloc,\cloc')\in E$:
Let $g(\varA_1,\dots,\varA_k)$ be the function call labelling the edge
$(\cloc,\cloc')$.
For each contract $\set{C} g(\varY_1,\dots,\varY_k)\set{D_1 \vee \cdots \vee
D_l}$ and each disjunct $D_i$ of such a contract, we invoke bi-abduction procedure from
Section \ref{sec:contract-gen}, i.e., let $(\pre_\after,\post_\after)
= \biabduct(\pre,\post,g(\varA_1,\ldots,\varA_m),C,D_i)$; then,
\begin{enumerate}

  \item for $\cloc' \not\in V_\close$, we add $(\pre_\after,\post_\after)$ into $\symb(\cloc')$, and
  %in case it is not
%  %
%  % already there
%  %
%  covered by some $(\pre',\post')\in \symb(\cloc')$ with
%  $\pre_\after\models\pre'$ and $\post' \models\post_\after$, and

  \item for $\cloc' \in V_\close$, \begin{itemize}

    \item we check whether there is a $(\pre',\post')\in \symb(\cloc')$ that \emph{covers}
    $(\pre_\after,\post_\after)$, i.e., we have  $\exists \LVar(\pre_\after). \
    \pre_\after \models \exists \LVar(\pre'). \ \pre'$ and $\exists \LVar(\post_\after). \
    \post_\after \models \exists \LVar(\post'). \ \post'$, where $\LVar(\phi) =
    \varIn{\phi} \cap \LVar$ denotes the free logical variables appearing in the formula;
	we note that a covering guarantees that the all configurations that satisfy the
	pre- resp. post condition of the pair $(\pre_\after,\post_\after)$ already satisfy
	the corresponding condition of $(\pre',\post')$ (for some suitable instantiation
	of the logical variables); hence, such $(\pre_\after,\post_\after)$ do not need to
	be added to $\symb(\cloc')$, which supports the termination of the fixed point
	termination;

%        \fz{This note can be commented out as it just points out a problem for future work.
%        The check $(\pre_\after,\post_\after)$, i.e., we have  $\exists \LVar(\pre_\after). \ \pre_\after \models \exists \LVar(\pre'). \ \pre'$ and $\exists \LVar(\post'). \ \post' \models \exists \LVar(\post_\after). \ \post_\after$ is not completely sound (but we do not mind anyway because the first round is just a heuristic and the check is sound for the second round).
%        Consider $\pre' \equiv \post' \equiv \sll{\ptsto{\varA}{\varB}}(\varX, L)$, $\pre_\after \equiv \sll{\ptsto{\varA}{\varB}}(\varX, L) * \sll{\ptsto{\varA}{\varB}}(L,K) * L \neq K$ and $\post_\after \equiv \sll{\ptsto{\varA}{\varB}}(\varX, L) * L \neq K$ with $L,K$ the only logical variables.
%        Then, we have $\exists \LVar(\pre_\after). \ \pre_\after \models \exists \LVar(\pre'). \ \pre'$ with $L$ instantiated as $K$ and $\exists \LVar(\post'). \ \post' \models \exists \LVar(\post_\after). \ \post_\after$ with $L$ instantiated as $L$.
%        However, the entailments do not hold if we require the same instantiation of the logical variables, which we should require because the values of the logical variables does not change during the symbolic execution.}

    \item if there is no $(\pre',\post')\in \symb(\cloc')$ that covers $(\pre_\after,\post_\after)$, we add    $(\alpha(\pre_\after),\alpha(\post_\after))$ into $\symb(\cloc')$, where $\alpha$ is a widening procedure in the form of a \emph{list abstraction} that is quite common in the area---cf., e.g., \cite{SLNestedLists07}.
        In its simplest form, the abstraction $\alpha$ searches for patterns of the form $\segment(\varX,\varY) * \segment(\varY,\varZ)$ (resp. $\segment(\varX,\varV,\varY) * \segment(\varY,\varX,\varZ)$) and replaces them by $\sll{\segment(\varA,\varB)}(\varX, \varZ)$ (resp.   $\dll{\segment(\varA,\varB,\varC)}(\varX,\varY,\varV,\varZ)$) provided that there is no pointer nor other list segment incoming to $\varY$ (i.e., the current symbolic state cannot, e.g., imply $\ptsto{\varU}{\varY} * \varU \neq \varX$).
        The actual abstraction is, of course, more complex---e.g., apart from applying to sequences of the $\segment$ predicates as above, which of course can be longer than just two appearances of $\segment$, it also applies to sequences consisting of the $\segment$ predicates and of compatible    singly-/doubly-linked list segments (e.g., $\segment(\varX,\varY) * \sll{\segment(\varA,\varB)}(\varY, \varZ)$ may abstract to $\sll{\segment(\varA,\varB)}(\varX, \varZ)$).
        Since, however, the abstraction is rather standard in the area, we do not develop it further here and refer the interested reader to \cite{SLNestedLists07}.
  \end{itemize}

\end{enumerate}
The worklist algorithm continues until a fixed point is reached,
i.e., until no more pre-/post-condition pairs are added to $\symb$ and all pairs have been processed.

Next, we add some more specific details about the two rounds of the
analysis:
\begin{description}

  \item[1. Round:] We set $\pre_\iinit \equiv \post_\iinit \equiv \varX_1 =
  \varXX_1 * \cdots * \varX_n = \varXX_n$ (recall that we assume
  $\varX_1,\dots,\varX_n$ to be all the program variables that appear in the
  body of $f(\varX_1,\dots,\varX_n)$).
  Once the worklist algorithm has reached a fixed point, we return all
  pre-/post-condition pairs $(\pre,\post) \in \symb(\eexit)$ as the result of
  the first analysis round.

  \item[2. Round:] For each $(\pre,\post)$ computed by the first analysis round,
  there is a second analysis round where we set $\pre_\iinit \equiv \post_\iinit
  \equiv \pre$ and where we \emph{disallow the strengthening of the
  pre-condition throughout the symbolic execution} (i.e., we require $M = \emp$
  for each bi-abduction call performed during symbolic execution; if this is not
  the case, we fail).
  Once the worklist algorithm has reached a fixed point, we consider the
  computed pre-/post-condition pairs $\symb(\eexit) = \{(\pre,\post_1), \ldots,
  (\pre,\post_l)\}$ (note that all these pairs must have the same pre-condition
  $\pre$) and return the contract $(\pre,\post_1 \vee \cdots \vee \post_l)$.
  We point out to the reader that the second round in general creates multiple
  contracts---one contract for each possible pre-condition.

\end{description}

%-------------------------------------------------------------------------------
\paragraph*{Assume and Assert Statements}
%-------------------------------------------------------------------------------

As we have already said above, we model branching and looping as usual, i.e., by
having multiple outgoing control flow edges labelled by an $\assume$ statement
with appropriate conditions.
However, following~\cite{BiAbd11}, we also observe that it is sometimes beneficial
to replace $\assume$ statements by $\assert$ statements because this adds
context sensitivity to the analysis.
For example, when we encounter an \texttt{if}-statement with the condition
$\varX = \varY$ (where $\varX$ and $\varY$ are paremeters of the function under
analysis), replacing $\assume$ by $\assert$ statements leads to (at least) two
distinct contracts with preconditions $\varX = \varY * \cdots$ and $\varX \neq
\varY * \cdots$.
Hence, we try to treat all $\assume$ as $\assert$ statements.
However, we revert back to $\assume$ statements in case using $\assert$ leads to
an analysis failure---in particular, this happens when the variables used in the
conditions are not parameters of the function under analysis and hence not
controllable through the contracts.

\bigskip

We refer an interested reader to Appendix~\ref{app:illustr-cond-loops} where we
provide several examples illustrating the analysis of conditions and loops,
using abstraction and the second analysis round.

%%%%%%%%%%%%%%%%%%%%%%%%%%%%%%%%%%%%%%%%%%%%%%%%%%%%%%%%%%%%%%%%%%%%%%%%%%%%%%%%
\section{Bi-Abduction Procedure} \label{sec:bi-abduction}
%%%%%%%%%%%%%%%%%%%%%%%%%%%%%%%%%%%%%%%%%%%%%%%%%%%%%%%%%%%%%%%%%%%%%%%%%%%%%%%%

Assume that
%
% TV: It seems that \pre is not needed here.
%
% $\pre$ is the current pre-condition and
%
$\post \equiv \exists\prefix_\post. \ \woProgVar * \ProgVarEQ$ is the current
symbolic state.
%
% , respectively.
%
We now explain how to compute a solution to the bi-abduction problem $\woProgVar
* [?] \models C' * [?] \ (\BAP)$.
That is, we show how to compute an \emph{anti-frame} $M$ and a \emph{frame}
$\Frame$ such that $\woProgVar * M \models \exists \prefix. \ C' * \Frame$ where
$\prefix = \varIn{C'*\Frame} \setminus \varIn{\woProgVar * \ProgVarEQ * M}$ and
where $M$ does not contain any quantified variable from $\prefix_\post$.
Following~\cite{BiAbd11}, we proceed in three steps: \begin{enumerate}

  \item We solve the \emph{abduction problem} $\woProgVar  * [?] \models C' *
  \True$ and compute a quantifier-free symbolic heap $M'$ such that $\woProgVar
  * M' \models C' * \True$.
  (Note that, in this first step, the solution $M'$ is allowed to contain
  variables from $\prefix_\post$.)
  %
  % This step is the most evolved step in solving the bi-abduction problem.
  %
  We state our rules for finding a solution to the abduction problem $\woProgVar
  * [?] \models C' * \True$ in Section~\ref{sec:abduction-rules}.

  \textbf{Example~\ref{ex:abdInSeq} (continued).}
  For $C'\equiv (C[\varY/\varX])[\varXX/\varX] \equiv \ptsto{\varYY}{\varU} * \ptsto{\varYY+8}{\varW} * \ptsto{\varU}{\varV} * \varXX=\varYY$
  and $\woProgVar \equiv \ptsto{\varXX}{\varA} * \ptsto{\varXX+8}{\varZ}$, we obtain the solution $M' \equiv \varXX=\varYY * \varA=\varU * \varZ = \varW * \ptsto{\varU}{\varV}$ to the abduction
  problem $\woProgVar  * [?] \models C' * \True$.

  \item We solve the \emph{frame problem} $\woProgVar  * M' \models C' * [?]$
  and compute a quantifier-free symbolic heap $\Frame$ such that $\woProgVar  *
  M' \models C' * \Frame$.
  $\Frame$ is computed as a by-product
  of our rules for finding the abduction solution $M'$, and no special frame
  inference procedure is needed---see Section \ref{sec:frame-inference}.

  \textbf{Example~\ref{ex:abdInSeq} (continued).}
  We obtain the solution $\Frame
  \equiv  \varXX=\varYY * \varA=\varU * \varZ = \varW$ to the frame problem $\woProgVar  * M' \models C' * [?]$.

  \item We finally compute a formula $M$ such that $\woProgVar * M$ is equivalent to $\woProgVar * \exists  K. \ M'$ where $K = \varIn{M'} \setminus \varIn{\woProgVar * \ProgVarEQ * M}$.
      The objective is to minimize the missing anti-frame $M$ as much as possible and to eliminate all occurrences of variables  $\prefix_\post$ in $M'$ (which, however, is not always possible).
      Ideally, we obtain $M \equiv \emp$ because this means that no strengthening of the precondition is needed (which is required in the second round of the general procedure as described in Section~\ref{section:branching-looping}).
      In case $M$ does not contain any occurrences of variables $\prefix_\post$, we then return $M$ and $\Frame$ as solutions to the bi-abduction problem ($\BAP$);
      otherwise, we fail.

      %We finally comment on the implementation of the procedure $\elim$.
%      %
%      We currently do not have a complete method for $\elim$ (we note that such a
%      procedure would require further theoretical investigation; however, the  existence of a complete procedure would require that our logic admits  quantifier elimination, which does not seem likely).
%      %
%      Instead, in line with our goal of achieving a practical program analysis  procedure, we implement $\elim$ by a light-weight procedure based on several  heuristic, which include the following:
      The minimization of $M'$ proceeds in two simple steps:
      \begin{enumerate}
        \item We compute $M'' = \elim(\prefix,M')$ for $\prefix = \varIn{M'} \setminus \varIn{\woProgVar * \ProgVarEQ}$, eliminating as many variables as possible from $M'$.

%    \item As a generalization of the above rule (but with more limited    applicability), we use the following rule:
%        %
%        In case $M'$ contains an equality $\expression  = \expressionAlt$ where $\varIn{\expressionAlt} \cap \varIn{\woProgVar * \ProgVarEQ} = \emptyset$ and where all the variables in $\varIn{\expressionAlt}$ are always used as a part of $\expressionAlt$ in $M'$ only, we delete the equality $\expression = \expressionAlt$ from $M'$ and  replace $\expressionAlt$ everywhere in the rest of $M'$ by $\expression$.

    \item If $M''$ contains some pure subformula $\expression \bowtie \expressionAlt$ with $\woProgVar \models \expression \bowtie \expressionAlt * \True$, we delete this formula from $M''$.
        If we cannot delete any further pure subformula, we return the resulting formula $M''$ as the result $M$.

        % (by eliminating redundant predicates, we are minimizing the resulting formula).

  \end{enumerate}

  %%% >>>> NOTE: THE IMPLEMENTATION WORKS DIFFERENTLY IN THIS ASPECT. Should be
  %%% >>>> changed in the implementation.

  \textbf{Example~\ref{ex:abdInSeq} (continued).}
  For $M'$, $\woProgVar$ and $\ProgVarEQ$ as above, we obtain $M \equiv \ptsto{\varU}{\varV}$.
  %
%  Indeed, $\varXX + 8 = \varZZ$ can be removed using the first elimination rule
%  (using $\varZZ$ in the place of $x$ and $\varXX + 8$ in the place of
%  $\expression$).
%  %
%  Moreover, $\varYY = \varY_1 \concat \varY_2 \concat \varY_3$ can be deleted
%  using the second elimination rule (using $\varYY$ as $\expression$ and
%  $\varY_1 \concat \varY_2 \concat \varY_3$ as $\expressionAlt$).
%  %
%  Observe that, indeed, we have $\woProgVar * \emp \models \woProgVar * \exists
%  \varY_1, \varY_2, \varY_3, \varZZ. \ \varYY = \varY_1 \concat \varY_2 \concat
%  \varY_3 * \varXX + 8 = \varZZ.$
\end{enumerate}

We now state the soundness of the bi-abduction procedure:

\begin{lemma}
\label{lem:bi-abduction-correctness}
 Assume that $\post \equiv \exists \prefix_\post. \ \woProgVar *
\ProgVarEQ$ is the symbolic state.
Let $M$ and $\Frame$ be some formulae returned by the bi-abduction procedure.
Then, $\woProgVar * M \models \exists \prefix. \ C' * \Frame$ for $\prefix =
\varIn{C'*\Frame} \setminus \varIn{\woProgVar * \ProgVarEQ * M}$. \end{lemma}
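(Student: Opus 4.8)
The plan is to chain the soundness of the three stages of the procedure and then to close the argument with a semantic manipulation of existential witnesses. First I would record the entailment produced by Steps~1 and~2: the abduction rules of Section~\ref{sec:abduction-rules}, together with the frame extraction of Section~\ref{sec:frame-inference}, compute the (not-yet-minimized) anti-frame $M'$ and the frame $\Frame$, and guarantee the intermediate entailment $\woProgVar * M' \models C' * \Frame$. Second, I would prove that the minimization of Step~3 is meaning-preserving modulo $\woProgVar$, namely $\woProgVar * M \equiv \woProgVar * \exists K. \ M'$ with $K = \varIn{M'} \setminus \varIn{\woProgVar * \ProgVarEQ * M}$. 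The lemma then follows by combining these two facts while carefully tracking variable scopes.

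For the minimization equivalence, the variables quantified by $K$ are exactly those of $M'$ that disappear during Step~3, and I would account for each kind of disappearance. A variable removed by $\elim$ in substep~(a) comes with a defining equality $\varU = \expression$, and the specification of $\elim$ stated above guarantees precisely that closing such a variable existentially is equivalent to the substitution that $\elim$ performs; since $\woProgVar$ contains no variable from $K_0 = \varIn{M'} \setminus \varIn{\woProgVar * \ProgVarEQ}$, this equivalence persists after separately conjoining $\woProgVar$ on both sides. A variable removed together with a pure atom $\expression \bowtie \expressionAlt$ in substep~(b) is dropped only when $\woProgVar \models (\expression \bowtie \expressionAlt) * \True$, and here I would prove the redundancy $\woProgVar * (\phi * (\expression \bowtie \expressionAlt)) \equiv \woProgVar * \phi$, using that pure atoms carry the empty heap and that every stack admitting a model of $\woProgVar$ already validates $\expression \bowtie \expressionAlt$. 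Putting the two together yields $\woProgVar * M \equiv \woProgVar * \exists K. \ M'$, as one can check on Example~\ref{ex:abdInSeq}, where $\exists K. \ M'$ collapses to $M \equiv \ptsto{\varA}{\varV}$.

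To combine, I would take any $(\stack,\blocks,\mem) \models \woProgVar * M$; by the equivalence it also satisfies $\woProgVar * \exists K. \ M'$, so splitting $\mem = \mem_1 \uplus \mem_2$ I obtain a stack $\stack'$ that differs from $\stack$ only on $K$ with $(\stack,\blocks,\mem_1) \models \woProgVar$ and $(\stack',\blocks,\mem_2) \models M'$. Because $K \cap \varIn{\woProgVar} = \emptyset$, the truth of $\woProgVar$ is insensitive to this change, so $(\stack',\blocks,\mem_1) \models \woProgVar$ and hence $(\stack',\blocks,\mem) \models \woProgVar * M'$; the Step~2 entailment then gives $(\stack',\blocks,\mem) \models C' * \Frame$. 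Finally I would re-quantify over $\prefix$: since $K$ and $\prefix = \varIn{C'*\Frame} \setminus \varIn{\woProgVar * \ProgVarEQ * M}$ subtract the same set $\varIn{\woProgVar * \ProgVarEQ * M}$, we have $K \cap \varIn{C'*\Frame} \subseteq \prefix$, whereas any variable of $C'*\Frame$ outside $\prefix$ lies in $\varIn{\woProgVar * \ProgVarEQ * M}$ and is therefore left untouched by $\stack'$; consequently $\stack'$ restricted to $\prefix$ witnesses $(\stack,\blocks,\mem) \models \exists \prefix. \ C' * \Frame$, which is the claim.

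The unfoldings of the $*$ and $\exists$ semantics are routine; the delicate point, and the one I expect to require the most care, is the variable-scope bookkeeping in the last step. One has to be sure that exactly the fresh witnesses assigned to $K$ that actually occur in $C' * \Frame$ are those permitted under $\prefix$, and that the remaining free variables of $C' * \Frame$ keep their original stack values under $\stack'$. The two set identities $K \cap \varIn{C'*\Frame} \subseteq \prefix$ and $K \cap \varIn{\woProgVar} = \emptyset$, read directly off the definitions of $K$ and $\prefix$, are precisely what license the re-quantification, so establishing them is the real substance of the argument.
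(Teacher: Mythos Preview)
Your proposal is correct and follows essentially the same line as the paper's proof: both chain the Step~1/2 entailment $\woProgVar * M' \models C' * \Frame$ with the Step~3 equivalence $\woProgVar * M \equiv \woProgVar * \exists K.\ M'$, then use $K \cap \varIn{\woProgVar} = \emptyset$ to pull the quantifier outward and the inclusion $K \cap \varIn{C'*\Frame} \subseteq \prefix$ to land on $\exists \prefix.\ C'*\Frame$. The only difference is presentational: the paper argues syntactically with a short chain of entailments (move quantifier, apply Step~2, drop vacuous quantifiers, widen to $\prefix$), whereas you unfold the same reasoning at the level of models and stacks; and you additionally sketch why substeps~(a) and~(b) of Step~3 actually deliver the claimed equivalence, which the paper simply takes as the specification of Step~3.
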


\begin{proof} By Steps 1 and 2 of the bi-abduction procedure, we have $\woProgVar
* M' \models C' * \Frame$ (*).
By Step~3 of the bi-abduction procedure, we have $\woProgVar * M \models
\woProgVar * \exists K. \ M'$ (\#) where $K = \varIn{M'} \setminus
\varIn{\woProgVar * \ProgVarEQ * M}$.
As it is sound to move the quantifiers to the front (note that $K\cap\woProgVar
= \emptyset$), we obtain from~(\#) that $\woProgVar * M \models \exists K. \
\woProgVar *  M'$.
With (*), we get that $\woProgVar * M \models \exists K. \ C' * \Frame$.
We then note that we can drop any variable $K' = K \setminus \varIn{C' *
\Frame}$ from the existential quantification, i.e., we get $\woProgVar * M
\models \exists K'. \ C' * \Frame$.
We observe that $K' \subseteq \prefix = \varIn{C' * \Frame} \setminus
\varIn{\woProgVar * \ProgVarEQ * M}$.
We finally observe that we can existentially quantify over every variable
$\prefix \setminus K'$ as this only weakens the formulae on the right-hand side.
Hence, we obtain $\woProgVar * M \models \exists \prefix. \ C' * \Frame$.
\end{proof}

% $\woProgVar * M' \models \woProgVar * \exists L.
% M[\expressionAlt_\varZ/\varZ]$ where $L$ consists of all variables that appear
% in $M[\expressionAlt_\varZ/\varZ]$ minus the variables that appear in $Q$ and
% $M'$.

%===============================================================================
\subsection{Abduction Rules} \label{sec:abduction-rules}
%===============================================================================

We now state our rules for computing a solution to the abduction problem.
In the below rules, we will use the notation $\phi * [M] \EMA \psi$
to denote that we are deriving the solution M to the abduction problem $\varphi * [?] \models \psi$.
The rules are to be applied in the stated order.\footnote{As for non-determinism
within single rules, which can sometimes be applied in multiple ways, our
implementation currently uses the first applicable option (with backtracking to
the other options only in case that the first option turns out to result in an
unsatisfiable abduction strategy). A better strategy is an open question for
future research.}

We start with a rule allowing us to learn \emph{missing pure constraints}.
%provided their inclusion to the LHS does not lead to inconsistence, which will be checked at the very end of applying of the rules.

$$\inference[\texttt{learn-pure}]{
  \ABD{
    \varphi * \pure
  }{
    M
  }{
    \psi
  }}{
  \ABD{
    \varphi
  }{
    \pure * M
  }{
    \psi * \pure
  }
}[$\pure$ pure]
$$\vspace*{-2mm}

The \texttt{match} rule presented below allows one to \emph{match points-to
predicates} from the LHS and RHS that have the same source location ($\varepsilon =
\varepsilon'$) and points-to fields $\zeta$, $\zeta'$ of the same size.
Then we learn that the target fields are the same too.
We note here that this rule is as a special case of the
\texttt{split-pt-pt-right} rule presented further on, but we show it here as an
easy case to start from.
As we will discuss in
Section~\ref{sec:Entailment}, we discharge entailment checks of the form
$\varphi_1 \models \varphi_2 *\True$ where $\varphi_2$ is a pure formula (e.g.
$\varepsilon = \varepsilon'$) by checking
unsatisfiability of the formula $\psi_1 \wedge \neg \psi_2$ where $\psi_i$ is a
translation of the SL formula $\varphi_i$ to bitvector logic. We will sketch
the translation procedure in Section~\ref{sec:SL2BV}.

\vspace*{-4mm}$$\inference[\texttt{match}]{
  \ABD{
    \varphi * \expressionAlt = \expressionAlt'
  }{
    M
  }{
    \psi
  }}{
  \ABD{
    \varphi * \ptsto{\expression}{\expressionAlt}
  }{
    \expressionAlt=\expressionAlt' * M
  }{
    \psi * \ptsto{\expression'}{\expressionAlt'}
  }
}[$\size(\expressionAlt) = \size(\expressionAlt')$ and
  $\varphi \models \expression = \expression' * \True$]
$$\vspace*{-2mm}

As illustrated in Fig.~\ref{fig:split-pt-pt-right}, the next presented
\texttt{split-pt-pt-right} rule allows one to deal with pointers $\varepsilon$,
$\varepsilon'$ to fields $\zeta$, $\zeta'$ that lie at possibly different addresses
but within blocks of the same base address.
Moreover, the RHS target field $\zeta'$ can be larger.
In this case, the field $\zeta'$ is \emph{split} to three \emph{byte sequences}
$\substitution{\expressionAlt'}{0}{k}$,
$\substitution{\expressionAlt'}{k}{k+l}$, and
$\substitution{\expressionAlt'}{k+l}{l'}$,
some of which can be empty, and the middle byte sequence is matched with the LHS target field $\zeta$.
(We recall that $\substitution{\val}{i}{j}$ denotes the substring of $\val$ that starts at index $i$ and ends at index $j$.)

\begin{figure}[t]
  \centering
  \begin{tikzpicture}[line width=0.8pt]
% x coordinates
  \pgfmathsetmacro{\xw}{3.3};
  \pgfmathsetmacro{\xa}{2};
  \pgfmathsetmacro{\xb}{\xa+\xw};
  \pgfmathsetmacro{\xc}{\xa+\xw+1.5};
  \pgfmathsetmacro{\xd}{\xc+\xw};
  \pgfmathsetmacro{\ya}{0.5};
  \pgfmathsetmacro{\ye}{4.0};
%  \coordinate (A) at (\xa,3);
%  \coordinate (B) at (\xa,2);

  \draw[->,thin,gray] (0,\ye) -- (0,0.5) node[below]{offsets};
  \draw (0,\ye) node[left,gray]{0};
%box1
  \filldraw[draw=black,color=gray!20] (\xa,2) rectangle (\xb,3);
  \draw (\xa,0) -- (\xa,\ye+0.5) ++(\xw/2,0) node{\vdots};
  \draw (\xb,\ye+0.5) -- (\xb,0) ++(-\xw/2,0) node{\vdots};
  \draw (\xa,\ye) node[left]{$\bBlock(\varepsilon)$} -- (\xb,\ye);
  \draw (\xa,3) node[left]{$\varepsilon$} -- (\xb,3);
  \draw (\xa,2) node[left]{$\varepsilon+l$} -- (\xb,2);
%box2
  \filldraw[draw=black,color=gray!20] (\xc,\ya) rectangle (\xd,3.5);
  \draw (\xc,0) -- (\xc,\ye+0.5) ++(\xw/2,0) node{\vdots};
  \draw (\xd,\ye+0.5) -- (\xd,0) ++(-\xw/2,0) node{\vdots};
  \draw (\xc,\ye) node[left]{$\bBlock(\varepsilon')$} -- (\xd,\ye);
  \draw (\xc,3.5) node[left]{$\varepsilon'$} -- (\xd,3.5);
  \draw[thin,dashed] (\xc,3) -- (\xd,3);
  \draw[thin,dashed] (\xc,2) -- (\xd,2);
  \draw (\xc,\ya) -- (\xd,\ya);
%links
  \draw[gray,thin,dashed] (\xb,3) -- (\xc,3);
  \draw[gray,thin,dashed] (\xb,2) -- (\xc,2);
%labels
  \draw [decorate,decoration={brace,amplitude=3pt},xshift=0pt,yshift=0pt]
        (\xa+0.7,3) -- (\xa+0.7,2) node [right,black,midway,xshift=0.2cm] {$\zeta$};

  \pgfmathsetmacro{\ts}{0.35};          % arrow distance
  \pgfmathsetmacro{\tx}{\ts+0.09};      % tics size
  \pgfmathsetmacro{\txtx}{\ts+0.11};      % label distance
  \draw (\xd+\txtx,3.25) node[right]{$k$};
  \draw (\xd+\txtx,2.50) node[right]{$l$};
  \draw (\xd+\txtx,1.55) node[below right,align=left]{$z$};
  \draw [decorate,decoration={brace,amplitude=3pt}]
        (\xc+0.7,3.5) -- (\xc+0.7,3.0) node [right,black,midway,xshift=0.1cm] {$\substitution{\expressionAlt'}{0}{k}$};
  \draw [decorate,decoration={brace,amplitude=3pt}]
        (\xc+0.7,3) -- (\xc+0.7,2) node [right,black,midway,xshift=0.1cm] {$\substitution{\expressionAlt'}{k}{k+l}=\zeta$};
  \draw [decorate,decoration={brace,amplitude=3pt}]
        (\xc+0.7,2) -- (\xc+0.7,\ya) node [right,black,midway,xshift=0.1cm] {$\substitution{\expressionAlt'}{k+l}{l'}$};
  \draw [decorate,decoration={brace,amplitude=5pt,aspect=0.35}]
        (\xc,3.5) -- (\xc,\ya) node [right,black,pos=0.35,xshift=0.1cm] {$\zeta'$};
%size tics
  \draw[gray,thin] (\xd,3.5) -- ++(\tx,0);
  \draw[gray,thin] (\xd,3.0) -- ++(\tx,0);
  \draw[gray,thin] (\xd,2.0) -- ++(\tx,0);
  \draw[gray,thin] (\xd,\ya) -- ++(\tx,0);
% arrows
  \draw[<->,thin] (\xd+\ts,\ya) -- (\xd+\ts,2);
  \draw[<->,thin] (\xd+\ts,2) -- (\xd+\ts,3);
  \draw[<->,thin] (\xd+\ts,3) -- (\xd+\ts,3.5);

\end{tikzpicture}

  \caption{An illustration of the \texttt{split-pt-pt-right} rule where $z = l' - k - l$.}

  \label{fig:split-pt-pt-right}
\end{figure}

\vspace*{-4mm}$$\inference[\texttt{split-pt-pt-right}]{
  \ABD{
    \varphi * \expressionAlt = \substitution{\expressionAlt'}{k}{k+l}
  }{
    M
  }{
    \psi * \ptsto{\expression'}{\substitution{\expressionAlt'}{0}{k}} * \ptsto{(\expression + l)}{\substitution{\expressionAlt'}{k+l}{l'}}
  }}{
  \ABD{
    \varphi * \ptsto{\expression}{\expressionAlt}
  }{
    \expressionAlt = \substitution{\expressionAlt'}{k}{k+l} * M
  }{
    \psi * \ptsto{\expression'}{\expressionAlt'}
  }
}[$C$]$$\vspace*{-2mm}

\noindent In the above rule, the condition $C$ requires that there are some
$k,l,l' \in \N$ with $\varphi \models \bBlock(\expression) =
\bBlock(\expression') ~*~ \expression = \expression' + k ~*~ \True$,
$\size(\expressionAlt) = l$, $\size(\expressionAlt') = l'$, and $l + k \le l'$.
We note that, in the above formulation of the rule $\texttt{split-pt-pt-left}$,
we assume $0 < k$ and $k+l<l'$ in order to avoid cluttering the rule by
additional case distinctions; in the case of $0 = k$ or $k+l = l'$, however, we
need to remove  $\ptsto{\expression'}{\substitution{\expressionAlt'}{0}{k}}$ or
$\ptsto{(\expression + l)}{\substitution{\expressionAlt'}{k+l}{l'}}$,
respectively, from the RHS of the premise of the rule.
There is a symmetric rule \texttt{split-pt-pt-left} for the LHS.

%We note that the use of $k$ and $l$ in the above effectively means that the
%fields $\zeta$ and $\zeta'$ are of a constant size.
%%
%This requirement is technical and related to that $y_1$, $y_2$, and $y_3$ are
%variables, which we require to be of a fixed size.
%%
%If this requirement was lifted (for the price of potential decidability issues
%stemming from allowing one to work with strings of unrestricted length as
%already discussed), the requirement on dealing with fields of a constant size
%could be lifted too.
%%
%Indeed, this is the case in the below rules for splitting fields of non-fixed
%size of the only forms that we allowed in our logic (namely, those of undefined
%contents or with all their bytes containing the same value, typically zero).

The rule \texttt{split-pt-bl-right} presented below and illustrated in
Fig.~\ref{fig:split-pt-bl-right} is an analogy of the rule \texttt{split-pt-pt-right}
presented above, but, this time, with the RHS field, which is being split,
of non-constant size.
The rule covers both types of such fields that we allow: sequences of
bytes of undefined values (then $m = \top$ in the rule) or sequences of the same
byte (then $m \in Byte$).

\begin{figure}[t]
  \centering
  \begin{tikzpicture}[line width=0.8pt]
% x coordinates
  \pgfmathsetmacro{\xw}{3.3};
  \pgfmathsetmacro{\xa}{2};
  \pgfmathsetmacro{\xb}{\xa+\xw};
  \pgfmathsetmacro{\xc}{\xa+\xw+1.5};
  \pgfmathsetmacro{\xd}{\xc+\xw};
  \pgfmathsetmacro{\ya}{0.5};
  \pgfmathsetmacro{\ye}{4.0};
%  \coordinate (A) at (\xa,3);
%  \coordinate (B) at (\xa,2);

  \draw[->,thin,gray] (0,\ye) -- (0,0.5) node[below]{offsets};
  \draw (0,\ye) node[left,gray]{0};
%box1
  \filldraw[draw=black,color=gray!20] (\xa,2) rectangle (\xb,3);
  \draw (\xa,0) -- (\xa,\ye+0.5) ++(\xw/2,0) node{\vdots};
  \draw (\xb,\ye+0.5) -- (\xb,0) ++(-\xw/2,0) node{\vdots};
  \draw (\xa,\ye) node[left]{$\bBlock(\varepsilon)$} -- (\xb,\ye);
  \draw (\xa,3) node[left]{$\varepsilon$} -- (\xb,3);
  \draw (\xa,2) node[left]{$\varepsilon+l$} -- (\xb,2);
%box2
  \filldraw[draw=black,color=gray!20] (\xc,\ya) rectangle (\xd,3.5);
  \draw (\xc,0) -- (\xc,\ye+0.5) ++(\xw/2,0) node{\vdots};
  \draw (\xd,\ye+0.5) -- (\xd,0) ++(-\xw/2,0) node{\vdots};
  \draw (\xc,\ye) node[left]{$\bBlock(\varepsilon')$} -- (\xd,\ye);
  \draw (\xc,3.5) node[left]{$\varepsilon'$} -- (\xd,3.5);
  \draw[thin,dashed] (\xc,3) -- (\xd,3);
  \draw[thin,dashed] (\xc,2) -- (\xd,2);
  \draw (\xc,\ya) -- (\xd,\ya);
%links
  \draw[gray,thin,dashed] (\xb,3) -- (\xc,3);
  \draw[gray,thin,dashed] (\xb,2) -- (\xc,2);
%labels
  \draw [decorate,decoration={brace,amplitude=3pt},xshift=0pt,yshift=0pt]
        (\xa+0.7,3) -- (\xa+0.7,2) node [right,black,midway,xshift=0.2cm] {$\zeta$};

  \pgfmathsetmacro{\ts}{0.35};          % arrow distance
  \pgfmathsetmacro{\tx}{\ts+0.09};      % tics size
  \pgfmathsetmacro{\txtx}{\ts+0.11};      % label distance
  \draw [decorate,decoration={brace,amplitude=3pt}]
        (\xc+0.3,3.5) -- (\xc+0.3,3.0) node [right,black,midway,xshift=0.1cm] {$m[\varepsilon-\varepsilon']$};
 \draw [decorate,decoration={brace,amplitude=3pt}]
        (\xc+0.3,3) -- (\xc+0.3,2) node [right,black,midway,xshift=0.1cm] {$m[l]$};
  \draw [decorate,decoration={brace,amplitude=3pt}]
        (\xc+0.3,2) -- (\xc+0.3,\ya) node [right,black,midway,xshift=0.1cm]
{$m[z]$};
  \draw [decorate,decoration={brace,amplitude=5pt,aspect=0.35}]
        (\xd,3.5) -- (\xd,\ya) node [right,black,pos=0.35,xshift=0.1cm] {$m[\expressionSize]$};

\end{tikzpicture}

  \caption{An illustration of the \texttt{split-pt-bl-right} rule where $z =
  \expressionSize - (\expression - \expression') - l$.}

  \label{fig:split-pt-bl-right} \end{figure}

\vspace*{-4mm}$$\inference[\texttt{split-pt-bl-right}]{
  \ABD{
    \varphi * \chi
  }{
    M
  }{
    \psi * \ptstobyte{\expression'}{m}{\expression-\expression'} *
    \ptstobyte{(\expression + l)}{m}{\varZ} * K
  }}{
  \ABD{
    \varphi * \ptsto{\expression}{\expressionAlt}
  }{
    \chi * M
  }{
    \psi * \ptstobyte{\expression'}{m}{\expressionSize}
  }
}[$C$]$$\vspace*{-2mm}

\noindent In the rule, we require that $m = \top$ and $\chi \equiv \emp$, or $m
\in \Byte$ and $\chi \equiv \expressionAlt = m^l$.
Further, $\size(\expressionAlt) = l$, $C$ requires that $\varphi \models
\bBlock(\expression) = \bBlock(\expression') ~*~ \expression' \le \expression ~*~
\expression + l \le \expression' + \expressionSize ~*~
\True$, $\varZ$ is some fresh variable with $\size(\varZ) = N$, and $K \equiv
\varZ = \expressionSize - (\expression - \expression') - l$.
There is a symmetric rule \texttt{split-pt-bl-left} for the LHS.
%
% with $W' = W \cup \{\varY,\varZ\}$.

We now present an analogy of the above rule for the case when we need to split a field of
constant size that appears on the RHS.
In order to be able to split the RHS field we will also require the LHS field to be of constant size.

\vspace*{-4mm}$$\inference[\texttt{split-bl-pt-right}]{
  \ABD{
    \varphi * \chi
  }{
    M
  }{
    \psi * \ptsto{\expression'}{\substitution{\expressionAlt'}{0}{k}} * \ptsto{(\expression + l)}{\substitution{\expressionAlt'}{k+l}{l'}}
  }}{
  \ABD{
    \varphi * \ptstobyte{\expression}{m}{\expressionSize}
  }{
    \chi * M
  }{
    \psi * \ptsto{\expression'}{\expressionAlt'}
  }
}[$C$]$$\vspace*{-2mm}

\noindent
In the above rule, the condition $C$ requires that there are some $k,l,l' \in \N$ with $\varphi \models \expressionSize = l ~*~ \True$, $\varphi
\models \bBlock(\expression) = \bBlock(\expression') ~*~ \expression = \expression' + k ~*~ \True$,
$\size(\expressionAlt') = l'$, and
$k + l \le l'$.
In the rule, either $m = \top$ and $\chi \equiv \emp$, or $m \in
\Byte$ and $\chi \equiv \substitution{\expressionAlt'}{k}{k+l} = m^l$.
There is a symmetric rule \texttt{split-bl-pt-left} for the LHS.

We are finally getting to the \texttt{split-bl-bl-right} rule that matches two
fields that are both of non-constant sizes while splitting the RHS field if need
be.

\vspace*{-4mm}$$\inference[\texttt{split-bl-bl-right}]{
  \ABD{
    \varphi
  }{
    M
  }{
    \psi * \ptstobyte{\expression'}{m'}{\expression-\expression'} *
    \ptstobyte{\expression + \expressionSize}{m'}{\varZ} * K
  }}{
  \ABD{
    \varphi * \ptstobyte{\expression}{m}{\expressionSize}
  }{
    M
  }{
    \psi * \ptstobyte{\expression'}{m'}{\expressionSize'}
  }
}[$C$]$$\vspace*{-2mm}

\noindent In the rule, either $m' = \top$ or $m = m'$.
Further, $C$ is the condition that requires $\varphi \models \bBlock(\expression)
= \bBlock(\expression') ~*~ \expression' \le \expression ~*~ \expression +
\expressionSize \le \expression' + \expressionSize' ~*~ \True$ and $K \equiv
\varZ = \expressionSize' - (\expression - \expression') - \expressionSize$.
As before, there is also a symmetric rule \texttt{split-bl-bl-left} for
splitting on the LHS.

Next, we present a rule that allows one to match a points-to predicate on the
LHS against a singly-linked list segment on the RHS.
In fact, the rule does not directly perform the matching, but it facilitates it
by \emph{materialising} the first cell out of the list segment.
The matching itself (possibly combined with splitting) is then performed by the
above rules.
We expect that the cells of the list segment are described using a formula of
the form $\segment(\varX,\varY) \equiv \exists \varU_1,\ldots,\varU_k.
\lambda(\varX,\varY,\varU_1,\ldots,\varU_k)$.

\vspace*{-4mm}$$\inference[\texttt{slseg-pt-ls-right}]{
  \ABD{
    \hspace*{-2mm} \varphi * \ptsto{\expression}{\expressionAlt}
  }{
    M
  }{
    \psi *
    \lambda[\expression'/\varX,\varZ/\varY,\varZ_1/\varU_1,\ldots,\varZ_k/\varU_k] *
    \sll{\segment(\varX,\varY)}(\varZ, \expressionAlt') \hspace*{-2mm}
  }}{
  \ABD{
    \varphi * \ptsto{\expression}{\expressionAlt}
  }{
    M
  }{
    \psi * \sll{\segment(\varX,\varY)}(\expression', \expressionAlt')
  }}[$C$]
$$\vspace*{-2mm}

\noindent In the rule, $C$ is the condition that $\varphi \models
\bBlock(\expression) = \bBlock(\expression') ~*~ \True$ and
$\varZ,\varU_1,\ldots,\varU_k$ are some fresh variables.

We next present a version of the above rule for the case of a list segment on
the LHS.
Note that, in this case, we must require the list segment be non-empty.
In the rule, $C$ is the condition that $\varphi \models \bBlock(\expression) =
\bBlock(\expression') ~*~ \expression \neq \expressionAlt ~*~ \True$ and
$\varZ,\varU_1,\ldots,\varU_k$ are some fresh variables.

\vspace*{-4mm}$$\inference[\texttt{slseg-pt-ls-left}]{
  \ABD{
    \hspace*{-2mm} \varphi *
    \lambda[\expression/\varX,\varZ/\varY,\varZ_1/\varU_1,\ldots,\varZ_k/\varU_k] *
    \sll{\segment(\varX,\varY)}(\varZ, \expressionAlt)
  }{
    M
  }{
    \psi * \ptsto{\expression'}{\expressionAlt'}
  }}{
  \ABD{
    \varphi * \sll{\segment(\varX,\varY)}(\expression, \expressionAlt)
  }{
    M
  }{
    \psi * \ptsto{\expression'}{\expressionAlt'} \hspace*{-2mm}
  }
}[$C$]
$$\vspace*{-2mm}

The following rule allows one to remove from the LHS a list segment that forms
an initial part of a list segment that appears on the RHS.
The condition $C$ requires that $\varphi \models \expression = \expression' ~*~
\True$ and that $\segment(\varX,\varY) \models
\segment'(\varX,\varY)$\footnote{We note that this kind of entailment query
cannot be discharged as we said before for the case when the RHS of the
entailment is a pure formula (intuitively, one would need some negation over
SL). However, as we will show in
Section~\ref{sec:Entailment}, such queries can be discharged by a slight
modification of the bi-abduction procedure presented in this section.}.

\vspace*{-4mm}$$\inference[\texttt{slseg-ls-ls}]{
  \ABD{
    \varphi
  }{
    M
  }{
    \psi * \sll{\segment'(\varX,\varY)}(\expressionAlt, \expressionAlt')
  }}{
  \ABD{
    \varphi * \sll{\segment(\varX,\varY)}(\expression, \expressionAlt)
  }{
    M
  }{
    \psi *  \sll{\segment'(\varX,\varY)}(\expression', \expressionAlt')
  }
}[$C$]
$$\vspace*{0mm}
The further rule allows one to remove a possibly empty list segment from the
RHS.
A corresponding rule for list segments of the LHS is only needed for entailment checking (see Section~\ref{sec:Entailment}).

\vspace*{-2mm}$$\inference[\texttt{slseg-remove-right}]{
  \ABD{
    \varphi
  }{
    M
  }{
    \psi
  }}{
  \ABD{
    \varphi
  }{
    M
  }{
    \psi * \sll{\segment(\varX,\varY)}(\expression, \expressionAlt)
  }
}[$\varphi \models \expression = \expressionAlt ~*~ \True$]
$$\vspace*{-2mm}

We have similar rules for \emph{doubly-linked lists} as the ones stated above, which we omit here for ease of exposition (we point out that \texttt{dllseg-pt-ls-left} and \texttt{dllseg-pt-ls-right} come in two versions because a doubly-linked list can be unrolled from the left as well as from the right).

Next, we state a rule that allows one to \emph{finish} the abduction process.

\vspace*{-4mm}$$\inference[\texttt{learn-finish}]{
  }{
  \ABD{
    \varphi
  }{
    \psi
  }{
    \psi * \True
  }
}[$\varphi * \psi$ is satisfiable]
$$\vspace*{-2mm}

The side condition ``$\varphi * \psi$ is satisfiable'' is intended to ensure that the abduction solution $\psi$ does not lead to useless contracts:
a contract $\set{\varphi * \psi} f(\cdots) \set{\cdots}$ with $\varphi * \psi$ unsatisfiable does not have a configuration that satisfies its pre-condition!
Unfortunately, we only have an approximate procedure for checking the satisfiability of symbolic heaps (see Section~\ref{sec:SL2BV}).
However, contracts with an unsatisfiable pre-condition are still sound.
Hence, we employ the best-effort strategy of using our approximate procedure to prevent as many useless abduction solutions as possible in order to minimize the number of inferred contracts; however, inferring such contracts will not compromise the soundness of our analysis.

Finally, we state two rules of ``last resort'' that involve quite some guessing
and hence can mislead the abduction process and make it fail (or lead to its
exponential explosion when all possible variants of applying the rules are
attempted).
Therefore, they are to be tried only if the abduction process cannot terminate
without them.
Intuitively, they allow one to \emph{claim equal} fields whose \emph{equality}
is not known, but whose \emph{disequality} is not known either (moreover, in the
weaker case, one also checks that it can be shown that the fields lie within the
same memory block).

\vspace*{-2mm}$$\inference[\texttt{alias-weak}]{
  \ABD{
    \varphi * \chi(\expression) * \expression = \expression'
  }{
    M
  }{
    \psi * \chi'(\expression')
  }}{
  \ABD{
  \varphi * \chi(\expression)
  }{
    \expression = \expression' * M
  }{
    \psi * \chi'(\expression')
  }
}[$C_1$]
$$\vspace*{-2mm}

\vspace*{-2mm}$$\inference[\texttt{alias-strong}]{
  \ABD{
    \varphi * \chi(\expression) * \expression = \expression'
  }{
    M
  }{
    \psi * \chi'(\expression')
  }}{
  \ABD{
    \varphi * \chi(\expression)
  }{
    \expression = \expression' * M
  }{
    \psi * \chi'(\expression')
  }
}[$C_2$]
$$\vspace*{-2mm}

\noindent In the rules, $\chi(\varX)$ and $\chi'(\varX)$ are any predicates of
the form $\ptsto{\varX}{\_}$, $\sll{\_}(\varX,\_)$, $\dll{\_}(\varX,\_,\_,\_)$,
or $\dll{\_}(\_,\_,\varX,\_)$.
Further, $C_1$ is the condition that $\varphi \models \bBlock(\expression) =
\bBlock(\expression') ~*~ \True$ and that \emph{not} $\varphi \models \expression
\neq \expression'  ~*~ \True$.
On the other hand, $C_2$ requires that not $\varphi \models \expression \neq
\expression'  ~*~ \True$ only.

The \emph{alias-weak/strong} rules are used in the following
situations:\begin{itemize}

  \item There is \emph{no other applicable rule}.
  Instead of failing due to the impossibility of applying other rules, we try to
  introduce an alias (if possible, by the \emph{alias-weak} rule) and continue
  with the abduction using the \emph{match}, \emph{split}, or
  \emph{slseg}/\emph{dllseg} rules.

  \item We wish to infer \emph{multiple abduction solutions}.
  In such a case, whenever \emph{learn-finish} is applicable, we use it to
  derive one abduction solution, record it, revert \emph{learn-finish}, and then
  try to derive other solutions by applying an~\emph{alias} rule, followed by
  applying the other rules again.

\end{itemize}

We now state the correctness of the abduction procedure:

\begin{theorem} Let $M$ be any solution computed by the abduction rules, i.e.,
we have $\varphi * [M] \EMA \psi$. Then, $\varphi * M \models \psi$.
\end{theorem}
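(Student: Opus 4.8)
The plan is to prove the statement by structural induction on the derivation of the judgement $\varphi * [M] \EMA \psi$, i.e.\ on the number of abduction-rule applications used to obtain it. The only axiom (leaf) is \texttt{learn-finish}, which supplies the base case; every other rule has exactly one abduction premise, so the inductive hypothesis gives the correctness of that premise, and the task is to lift it to the conclusion $\varphi * M \models \psi$. The observation that organises the whole argument is that in each rule the left-hand side and the anti-frame change only by (a) moving a points-to/block/list atom between the two sides, or (b) adding a \emph{pure} equality to both the anti-frame and the goal. Hence, after fixing an arbitrary model $(\stack,\blocks,\mem)$ of the conclusion's $\varphi * M$, I split $\mem$ according to the separating conjunction, apply the inductive hypothesis to the appropriate sub-heap, and recompose the pieces into a model of $\psi$. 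Throughout I use two elementary facts about our semantics: a pure formula holds on the shared $(\stack,\blocks)$ and carries the empty heap, so it is freely duplicable (it can be re-attached to any sub-heap via an $\emptyset$ split); and $\True$ holds of every heap.

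For the base case \texttt{learn-finish} I must show $\varphi * \psi \models \psi * \True$, which follows by taking the split witnessing $\varphi * \psi$ and reading it, after commutation, as a $\psi * \True$ split. For \texttt{learn-pure} the goal $\psi * \pure$ over $\varphi * (\pure * M)$ is immediate from the inductive hypothesis plus duplicability of the pure fact $\pure$. The core of the non-list part is the family \texttt{match}, \texttt{split-pt-pt-right}, \texttt{split-pt-bl-right}, \texttt{split-bl-pt-right}, \texttt{split-bl-bl-right} (and the symmetric left variants); \texttt{match} is the degenerate case $k=0$, $l=l'$. Here the essential content is that a single points-to/block atom decomposes and recomposes along byte-sequence concatenation: the side condition $C$ fixes the address alignment (e.g.\ $\varphi \models \expression = \expression' + k * \True$ together with the $\bBlock$-equalities), while the equality placed into the anti-frame identifies the matched field ($\expressionAlt = \substitution{\expressionAlt'}{k}{k+l}$, resp.\ $\expressionAlt = m^l$). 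Writing $b$ for $\denot{\expression'}{\stack}{\blocks}$, I verify that the three contiguous ranges $[b,b+k)$, $[b+k,b+k+l)$, $[b+k+l,b+l')$ tile $[b,b+\size(\expressionAlt'))$ and that the concatenation of the three content pieces equals $\denot{\expressionAlt'}{\stack}{\blocks}$, so that recomposing the middle atom (from the conclusion's LHS) with the two outer atoms (from the premise's RHS, via the inductive hypothesis) yields $\ptsto{\expression'}{\expressionAlt'}$; disjointness of all pieces is inherited from the heap splits. The block variants are identical once one notes that $\ptsto{\expression}{m^l}$ and $\ptstobyte{\expression}{m}{l}$ coincide and that the fresh-variable equation $K$ records the leftover size.

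I expect the \emph{list-segment rules} (\texttt{slseg-pt-ls-left/right}, \texttt{slseg-ls-ls}, \texttt{slseg-remove-right}, and their doubly-linked analogues) to be the main obstacle, since each needs a dedicated semantic lemma about the inductive predicate rather than a byte-level recomposition: a single unfold/fold step for materialisation, a monotonicity lemma $\segment \models \segment' \Rightarrow \sll{\segment}(\cdot,\cdot) \models \sll{\segment'}(\cdot,\cdot)$ for \texttt{slseg-ls-ls}, and a segment-composition lemma $\sll{\segment}(\expression,\expressionAlt) * \sll{\segment}(\expressionAlt,\expressionAlt') \models \sll{\segment}(\expression,\expressionAlt')$. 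The delicate point is that our empty-segment case $\expression = \expressionAlt$ forces the empty heap, so a non-empty heap can never satisfy $\sll{\segment}(\expression,\expression)$; consequently the fold into the non-empty branch and the composition lemma are sound only when the relevant endpoints are provably distinct and the recomposed structure is acyclic. I would discharge this by extracting the required disequalities from the side conditions (e.g.\ the explicit $\varphi \models \expression \neq \expressionAlt * \True$ in \texttt{slseg-pt-ls-left}), from block-closedness, and from the acyclicity that models of the predicate enjoy (every node is separated from the terminator). Making this acyclicity argument watertight — in particular for \texttt{slseg-ls-ls}, where two independently materialised segments are appended — is where the real work lies. By contrast \texttt{slseg-remove-right} is easy: its side condition $\varphi \models \expression = \expressionAlt * \True$ lets one attach the empty segment on an $\emptyset$ split. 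Finally, \texttt{alias-weak} and \texttt{alias-strong} only add a pure equality to both the anti-frame and the LHS, so they reduce to the same duplicable-pure-fact argument as \texttt{learn-pure}; and since the theorem concerns any derived solution, the prescribed application order of the rules is irrelevant to soundness, affecting only which solutions are found.
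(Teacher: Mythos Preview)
Your approach is the same as the paper's: structural induction on the derivation, with \texttt{learn-finish} as the base case and, for every other rule, showing that the premise's validity $\varphi'*M'\models\psi'$ lifts to the conclusion's $\varphi*M\models\psi$ under the side condition $C$. The paper's own proof is extremely compressed---it simply asserts that this lifting holds for every non-axiomatic rule and concludes; it does not walk through the individual rules, the byte-range recomposition, or the list lemmas as you do.

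Where you differ is in rigour rather than strategy. Your detailed per-rule analysis is strictly more than what the paper supplies, and your identification of \texttt{slseg-ls-ls} as the delicate case is on the mark: the step really does rely on a segment-composition principle $\sll{\segment'}(\expression,\expressionAlt)*\sll{\segment'}(\expressionAlt,\expressionAlt')\models\sll{\segment'}(\expression,\expressionAlt')$, which, with the paper's semantics (the empty case $\expression_1=\expression_2$ forces the empty heap), fails in cyclic situations where $\expression=\expressionAlt'$ yet both subsegments are non-empty. The paper's proof does not discuss this; it is simply absorbed into the blanket assertion that each rule preserves the invariant. So you are not missing an idea that the paper has---if anything, you have located a point the paper's proof sketch leaves unexamined. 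Your proposed remedies (block-closedness, side conditions) do not by themselves supply the missing acyclicity for \texttt{slseg-ls-ls} as written, so be aware that fully closing that case may require an additional hypothesis or a refinement of the rule rather than just more careful bookkeeping.
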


\begin{proof} We prove the property by induction on the number of rule applications.
We observe that the claim holds for the axiom, i.e., the rule
\texttt{learn-finish}).
We further note that, for all non-axiomatic rules of the shape

\vspace*{-4mm}$$\inference[rule-name]{
  \ABD{
    \varphi'
  }{
    M'
  }{
    \psi'
  }}{
  \ABD{
    \varphi
  }{
    M
  }{
    \psi
  }
}[$C$,]
$$\vspace*{-2mm}

\noindent we have that $\varphi' * M' \models \psi'$ implies that $\varphi * M
\models \psi$ (under the condition $C$).
Hence, the claim holds.\end{proof}

Moreover, we observe that the antiframe $M$ is guaranteed to be a quantifier-free symbolic heap in case the input $\varphi$ to the abduction procedure is a quantifier-free symbolic heap (the abduction rules maintain this shape of $\varphi$).

%===============================================================================
\subsection{Frame Inference} \label{sec:frame-inference}
%===============================================================================

We now explain how we solve the \emph{frame problem} $\varphi * M \models \psi * [?]$, computing a quantifier-free symbolic heap $\Frame$ such that $\varphi * M \models \psi * \Frame$.
Inspired by the approach of \cite{BiAbd11}, $\Frame$ is computed as a by-product of our rules for finding the abduction solution $M$, and no special frame inference procedure is needed.
%
%Since it is quite close to the approach proposed in~\cite{BiAbd11}, we sketch it only briefly, and the next subsection then comes with an illustrative example.
%
Namely, when solving the abduction problem $\varphi * [?] \models \psi * \True$ by means of the rules presented in Section~\ref{sec:abduction-rules}, one eventually needs to apply the \texttt{learn-finish} rule:

\vspace*{-4mm}$$\inference[\texttt{learn-finish}]{
  }{
  \ABD{
    \varphi
  }{
    \psi
  }{
    \psi * \True
  }
}[$\varphi * \psi$ is satisfiable]
$$\vspace*{-2mm}

\noindent
In this rule, the unused part $\varphi$ of the LHS of the abduction problem is consumed by $\True$.
Instead, we return the $\varphi$ from the application of the \texttt{learn-finish} rule as the solution $\Frame$ to the frame problem, which is guaranteed to be sound as stated by the following theorem:

\begin{theorem} Let $M$ and $\Frame$ be the solution returned by the combined frame and abduction procedure.
Then, $\varphi * M \models \psi * \Frame$.
\end{theorem}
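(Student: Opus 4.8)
The plan is to prove the statement by induction on the number of rule applications in the abduction derivation $\varphi * [M] \EMA \psi * \True$, exactly mirroring the structure of the preceding abduction-correctness theorem but with a strengthened invariant. The key idea is that the right-hand side of every node in the derivation carries a single distinguished $\True$ conjunct, inherited from the original problem $\psi * \True$ and threaded unchanged through every rule; the \emph{only} place this conjunct is ever exploited is in the concluding \texttt{learn-finish} step, where the residual left-hand side $\varphi_{\mathit{fin}}$ is weakened away into $\True$ via $\varphi_{\mathit{fin}} \models \True$. I would therefore define $\Frame$ to be precisely this $\varphi_{\mathit{fin}}$ and prove the refined claim: for every node $\varphi * [M] \EMA \psi * \True$ of the derivation, we have $\varphi * M \models \psi * \Frame$, i.e.\ the distinguished $\True$ may be replaced by the concrete frame $\Frame$.

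First I would dispatch the base case, where the derivation is the single \texttt{learn-finish} step $\varphi_{\mathit{fin}} * [\psi_{\mathit{fin}}] \EMA \psi_{\mathit{fin}} * \True$, so that $M \equiv \psi_{\mathit{fin}}$ and $\Frame \equiv \varphi_{\mathit{fin}}$. The required entailment $\varphi_{\mathit{fin}} * \psi_{\mathit{fin}} \models \psi_{\mathit{fin}} * \varphi_{\mathit{fin}}$ holds by commutativity of the separating conjunction. This is the one genuine point of departure from the abduction proof: there the leftover $\varphi_{\mathit{fin}}$ was discarded by the trivial weakening into $\True$, whereas here we keep it verbatim, turning a weakening into an identity.

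For the inductive step I would reuse the per-rule reasoning already established in the abduction-correctness theorem. Since $\Frame$ is fixed by the \texttt{learn-finish} leaf, which lies in the sub-derivation of the premise, the premise and conclusion of the last applied rule share the same $\Frame$. For a rule with premise $\varphi' * [M'] \EMA \psi'$ and conclusion $\varphi * [M] \EMA \psi$, the induction hypothesis yields $\varphi' * M' \models \psi'_{\mathit{core}} * \Frame$ (writing each right-hand side as a core part together with the distinguished slot), and I would derive $\varphi * M \models \psi_{\mathit{core}} * \Frame$ by the identical manipulation used for that rule in the earlier proof, now with $\Frame$ occupying the slot previously held by $\True$. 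The observation making this sound is that every non-axiomatic rule only rearranges and matches the explicitly named fragments of $\varphi$ and $\psi$ and propagates the distinguished slot as an inert passenger through the monotonicity of $*$; it never weakens information into that slot. For the rules moving pure constraints (\texttt{learn-pure}, \texttt{match}, \texttt{alias-weak}, \texttt{alias-strong}) I would additionally use that pure formulae constrain only the stack and block structure, hold globally, and are duplicable on the empty heap, so any pure fact already entailed can be re-injected on the right alongside $\Frame$.

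The main obstacle I anticipate is purely bookkeeping: one must verify, rule by rule, that substituting $\Frame$ for $\True$ preserves each local entailment, in particular for the splitting rules (\texttt{split-pt-pt-right}, \texttt{split-pt-bl-right}, \texttt{split-bl-pt-right}, \texttt{split-bl-bl-right}, and their \texttt{-left} symmetric variants), where a field is broken into several pieces and the side conditions on $\bBlock(\expression)$ and on sizes must be used to reconstitute the matched region on the right while $\Frame$ remains untouched, and for the list-segment rules (\texttt{slseg-pt-ls-left/right}, \texttt{slseg-ls-ls}, \texttt{slseg-remove-right} and their doubly-linked counterparts), where materialisation unfolds one cell of a segment. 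Because none of these rules consumes the distinguished slot, each case is a mechanical echo of the corresponding case in the abduction proof. Applying the induction at the root node then gives $\varphi * M \models \psi * \Frame$, as required, and I would finally remark that $\Frame$ is a quantifier-free symbolic heap, being a residual fragment of the quantifier-free input $\varphi$.
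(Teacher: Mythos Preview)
Your proposal is correct and takes essentially the same approach as the paper: both argue by induction on the number of rule applications, establishing the base case at \texttt{learn-finish} via commutativity of $*$ with $\Frame$ defined as the residual left-hand side, and handling the inductive step by observing that every non-axiomatic rule can be read with the distinguished $\True$ slot uniformly replaced by $\Frame$ so that the per-rule entailments from the earlier abduction-correctness proof carry over verbatim. Your write-up is considerably more explicit about the bookkeeping (tracking the distinguished conjunct, noting that rules never weaken into it, and flagging the splitting and list-segment rules as the places requiring care), but the underlying argument is the same.
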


\begin{proof}
We prove the property by the number of rule applications.
Indeed, we have the claim holds for
\texttt{learn-finish}), i.e., we have $\varphi * M \models \psi * \Frame$.
We further note that, for all non-axiomatic rules we can instantiate the right-hand sides to obtain rules of the shape

\vspace*{-4mm}$$\inference[rule-name]{
  \ABD{
    \varphi'
  }{
    M'
  }{
    \psi'* \Frame
  }}{
  \ABD{
    \varphi
  }{
    M
  }{
    \psi * \Frame
  }
}[$C$,]
$$\vspace*{-2mm}

\noindent where we have that $\varphi' * M' \models \psi' * \Frame$ implies that $\varphi * M \models \psi * \Frame$ (under the condition $C$).
Hence, the claim holds.
\end{proof}
Moreover, we observe that the frame $\Frame$ is guaranteed to be a quantifier-free symbolic heap in case the input $\varphi$ to the abduction procedure is a quantifier-free symbolic heap (the abduction rules maintain this shape of $\varphi$).

%===============================================================================
\subsection{An Example of Solving A Bi-Abduction Problem}
\label{sec:biabd-example}
%===============================================================================

We recall the setting of Example~\ref{ex:abdInSeq}:
We consider the call $g(x)$ of a function with contract $\set{C} g(y) \set{D}$ during the analysis of some function $f(x)$, where the so-far derived precondition $\pre$, the current symbolic state $\post$ as well as the pre- and post-conditions $C$ and $D$ look as follows:

\begin{itemize}
  \item $\pre \equiv \ptsto{\varXX}{\varA} * \ptsto{\varXX+8}{\varB} *  \varX=\varXX$;

  \item $\post \equiv \exists \varZ. \ \ptsto{\varXX}{\varA} * \ptsto{\varXX+8}{\varZ} *  \varX=\varXX$;

  \item $C \equiv \ptsto{\varYY}{\varU} * \ptsto{\varYY+8}{\varW} * \ptsto{\varU}{\varV} * \varY=\varYY$, and

  \item $D \equiv \exists\varC.\ D_\postFormulaSubscript * D_\postEQSubscript$ where $D_\postFormulaSubscript \equiv
    \ptsto{\varYY}{\varU} * \ptsto{\varYY+8}{\varW} * \ptsto{\varU}{\varV} * \ptsto{\varV}{\varC}$
  and $D_\postEQSubscript \equiv \varY=\varV$.

\end{itemize}

\noindent The formula $C'$ wrt which we will be solving the bi-abduction problem will look as follows:
$$C'\equiv (C[\varY/\varX])[\varXX/\varX] \equiv \ptsto{\varYY}{\varU} * \ptsto{\varYY+8}{\varW} * \ptsto{\varU}{\varV} * \varXX=\varYY.$$
Hence, we need to solve the abduction problem $$\ptsto{\varXX}{\varA} * \ptsto{\varXX+8}{\varZ} *
[?] \models \ptsto{\varYY}{\varU} * \ptsto{\varYY+8}{\varW} * \ptsto{\varU}{\varV} * \varXX=\varYY* \True.$$
Its solution by our abduction rules looks as follows:

{\footnotesize
$$
  \inference[\hspace*{-10mm}\texttt{learn-pure}]{
    \inference[\texttt{match}]{
        \inference[\texttt{match}]{
              \inference[\texttt{learn-finish}]{
              }{
                \ABD{
                  \varXX=\varYY * \varA=\varU * \varZ = \varW
                }{
                  \ptsto{\varU}{\varV}
                }{
                  \ptsto{\varU}{\varV} * \True
                }
              }
        }{
            \ABD{
                \ptsto{\varXX+8}{\varZ} * \varXX=\varYY * \varA=\varU
            }{
                \varZ = \varW * \ptsto{\varU}{\varV}
            }{
                \ptsto{\varYY+8}{\varW} * \ptsto{\varU}{\varV} * \True
            }
        }
    }{
        \ABD{
          \ptsto{\varXX}{\varA} * \ptsto{\varXX+8}{\varZ} * \varXX=\varYY
        }{
          \varA=\varU * \varZ = \varW * \ptsto{\varU}{\varV}
        }{
            \ptsto{\varYY}{\varU} * \ptsto{\varYY+8}{\varW} * \ptsto{\varU}{\varV} * \True
        }
    }
  }{
    \ABD{
      \ptsto{\varXX}{\varA} * \ptsto{\varXX+8}{\varZ}
    }{
      \varXX=\varYY * \varA=\varU * \varZ = \varW * \ptsto{\varU}{\varV}
    }{
        \ptsto{\varYY}{\varU} * \ptsto{\varYY+8}{\varW} * \ptsto{\varU}{\varV} * \varXX=\varYY * \True  }
  }
$$
}

\noindent The so-far missing pre-condition is thus $M' \equiv \varXX=\varYY * \varA=\varU * \varZ = \varW * \ptsto{\varU}{\varV}$.
After applying $\elim$ on $M'$ we obtain $M \equiv \ptsto{\varA}{\varV}$.
Further, from the LHS of the \texttt{learn-finish} rule, we get
$\Frame \equiv  \varXX=\varYY * \varA=\varU * \varZ = \varW$.

%%%%%%%%%%%%%%%%%%%%%%%%%%%%%%%%%%%%%%%%%%%%%%%%%%%%%%%%%%%%%%%%%%%%%%%%%%%%%%%%
\section{Implementation and Experimental Evaluation} \label{sec:implementation}
%%%%%%%%%%%%%%%%%%%%%%%%%%%%%%%%%%%%%%%%%%%%%%%%%%%%%%%%%%%%%%%%%%%%%%%%%%%%%%%%

In this section, we first discuss how we deal with entailment and satisfiability
checking over our SL, for which we employ a reduction to SMT solving over the
bitvector logic.
Then, we present our prototype tool implementing the proposed approach, followed
by experiments with it.

%===============================================================================
\subsection{Entailment Checking} \label{sec:Entailment}
%===============================================================================

Our approach requires to answer entailment queries $\phi_1 \models \phi_2$ at several places.
While it might be possible to develop a general (sound and complete) entailment procedure, e.g., extending \cite{JensFlorian:BeyondSymHeap:20},
we decided to use an approximation to entailment queries for performance reasons.
This is justified as follows:
Our approach only relies on the correctness of positive answers for an entailment query $\phi_1 \models \phi_2$, while false negatives will not lead to unsound answers of our overall procedure.
We will use two different approximations to  $\phi_1 \models \phi_2$ depending on whether the right hand-side $\phi_2$ is a pure formula:

\begin{description}
  \item[$\phi_1 \models \expression \bowtie \expressionAlt * \True$ for a quantifier-free symbolic heap $\phi_1$ and pure formula $\expression \bowtie \expressionAlt$]:
Entailment queries of this kind are needed at several places throughout our abduction procedure.
We will employ the idea that $\phi_1 \models \phi_2$ holds iff $\phi_1 \wedge \neg \phi_2$ is unsatisfiable.
Note, however, that our logic is not closed under negation $\neg$ (nor under standard conjunction $\wedge$) and that such an approach is not possible for general entailment queries;
however, we can negate pure formulae $\expression \bowtie \expressionAlt$, i.e.,
we will check the satisfiability of the formula
$\phi_1 * \neg (\expression \bowtie \expressionAlt)$.
We will use an approximated answer by an (incomplete) translation of separation logic formulae to SMT formula over bitvectors.
We only require that $\mathit{UNSAT}_\mathit{BV} \Rightarrow \mathit{UNSAT}_\mathit{SL}$ for maintaining the correctness of positive answers to our entailment queries.
We describe our translation to SMT formulae in Section~\ref{sec:SL2BV}.

\item[$\exists A. \ \phi_1 \models \exists B. \phi_2$ for some quantifier-free symbolic heaps $\phi_1$ and $\phi_2$]:
We can assume that $A \cap B = \emptyset$ because quantified variables can always be renamed.
The entailment check then proceeds similar to the abduction procedure described in Section \ref{sec:abduction-rules}.
That is, we solve an abduction problem in the first step and in the second step check whether the inferred abduction solution is in fact equivalent to $\emp$.

\begin{enumerate}
  \item We solve the abduction problem $\phi_1 * [?] \models \phi_2$ in order to obtain a quantifier-free symbolic heap $M$ with $\phi_1 * M \models \phi_2$.
      If there is no such $M$ the entailment check fails during this first step.
      Note that in contrast to our bi-abduction procedure we do not add $*\True$ to $\phi_2$ because we require an exact matching of the spatial part of $\phi_1$ and $\phi_2$.
      Hence, the $\texttt{learn-finish}$ is not adequate for finishing such abduction queries.
      Instead, we use a new \emph{emp-finish} rule of the following form:

      \vspace*{-4mm}$$\inference[\texttt{emp-finish}]{
        }{
        \ABD{
            \pi
        }{
            \emp
        }{
            \emp
        }
      }[$\pi$ pure]
      $$\vspace*{-2mm}

    For proving entailments, we also need a rule for removing remaining list segments of the LHS
    (unmatched list segments on the LHS of an abduction query are covered by $~*~ \True$ on the RHS; in the case of entailment however, we need to explicitly match everything on the LHS):

    \vspace*{-2mm}$$\inference[\texttt{slseg-remove}]{
      \ABD{
            \varphi
        }{
            M
        }{
            \psi
      }}{
      \ABD{
          \varphi * \sll{\segment(\varX,\varY)}(\expression, \expressionAlt)
        }{
            M
        }{
            \psi
        }
      }[$\varphi \models \expression = \expressionAlt ~*~ \True$]
    $$\vspace*{-2mm}

    (Further, we disable the $\texttt{alias-weak}$ and $\texttt{alias-strong}$ rules, because these rules will only introduce equalities that cannot be removed in the second step of the entailment check.)

  \item Let $M$ be a formula inferred in step 1, i.e., we have $\phi_1 * M \models \phi_2$.
      If $M$ is not a pure formula, we fail.
      Otherwise we check whether $\phi_1 \models M$ using the SMT reduction from the previous case.
      (We remark that compared to bi-abduction procedure we require $M \equiv \emp$.
      The soundness of the entailment check then follows similar to Lemma~\ref{lem:bi-abduction-correctness}.)
\end{enumerate}
\end{description}

%===============================================================================
\subsection{Reduction of (Un-)Satisfiability of SL to SMT} \label{sec:SL2BV}
% (over Bitvectors and Uninterpreted Functions)
%===============================================================================

In this section, we describe a reduction of (un-)satisfiability of
quantifier-free symbolic heaps to SMT.
In particular, we use the SMT theories of bitvectors and uninterpreted functions.
Our reduction will only be approximate, i.e., we will ensure that $\mathit{UNSAT}_\mathit{BV} \Rightarrow \mathit{UNSAT}_\mathit{SL}$ but we do not have such an implication for $\mathit{SAT}$.
The main use of this reduction is checking the entailment of a pure formula as described in the previous subsection (where we only require the correctness of unsatisfiability).
We are checking also the satisfiability of a separation-logic formula in the  $\texttt{learn-finish}$ rule;
as explained, detecting unsatisfiability allows us to rule out useless contracts (because of an unsatisfiable pre-condition) but returning useless contracts (in case we do not detect the unsatisfiability of a pre-condition) does not compromise the soundness of our approach.
Like for the entailment check, the main motivation for having an approximate (un-)satisfiability check is improved performance as this check needs to be performed many times during an analysis.

We now develop the reduction to SMT over bitvectors and uninterpreted functions:
All values (including memory addresses) and variables are encoded as bitvectors (BV) of appropriate size.
We recall that we work with an intermediate language where the size is the only available type information on values and variables.
That is, we rely on that all type information  from the high-level language (in
our case C)---such as whether a variable is signed or unsigned---has been encoded into operators of our intermediate, e.g., we assume different plus operations for bitvectors that should be interpreted as signed resp. unsigned integers.
Besides bitvectors, our translation relies on two uninterpreted functions $\bBlock,\eBlock: \Byte^N \rightarrow \Byte^N$ representing the $\bBlock$ (resp. $\eBlock$) operators from our logic.
We now give the translation of a symbolic heap $\psi = \phi_1 * \cdots * \phi_l$ to SMT, where each $\phi_i$ is one of the formulas
$\expression_1 \bowtie \expression_2$,
$\sll{\segment(\varX,\varY)}(\expression_1,\expression_2)$,
$\dll{\segment(\varX,\varY,\varZ)}(\expression_1,\expression_2,\expression_1',\expression_2')$,
or $\ptsto{\expression}{\varUps}$ (where $\varUps$ represents either $\expressionAlt$, $\ptstobyterhs{\val}{\expressionAlt}$, or
$\ptstobyterhs{\top}{\expressionAlt}$;
we recall that we have defined $\size(\varUps)=\expressionAlt$ for the two later cases);
the translation of $\psi$ is a conjunction of the following constraints:
\begin{itemize}

  \item an axiom requiring that each allocated location lies within its block
  $$\forall \loc. \ \bBlock(\loc) = 0 \vee  \bBlock(\loc) \le \loc <
  \eBlock(\loc);$$

  \item an axiom requiring that blocks never overlap $$\forall \loc, \loc'. \ (0
  < \bBlock(\loc) < \eBlock(\loc')\le \eBlock(\loc) \vee 0 < \bBlock(\loc') <
  \eBlock(\loc)\le \eBlock(\loc')) \rightarrow  \bBlock(\loc) = \bBlock(\loc')
  \wedge \eBlock(\loc) = \eBlock(\loc');$$

  \item the formula $\phi_i$ for each pure formula $\phi_i \equiv \expression_1
  \bowtie \expression_2$;

  \item for each points-to predicate $\phi_i \equiv
  \ptsto{\expression}{\varUps}$, the formula $$\bBlock(\expression) > 0 \wedge
  \expression \le \expression + \size(\varUps) \wedge \expression +
  \size(\varUps) \le \eBlock(\expression),$$ requiring that the source is
  allocated and that there is sufficient space in the corresponding block to
  store the pointer (note that the second conjunct is necessary because we are
  using bit-vector semantics and have to take possible overflows into account);

  \item for each list-segment predicate $\phi_i \equiv
  \sll{\segment(\varX,\varY)}(\expression_1,\expression_2)$ resp. $\phi_i \equiv
  \dll{\segment(\varX,\varY,\varZ)}(\expression_1,\expression_2,\expression_1',\expression_2')$,
  the formula $$\expression_1 = \expression_2 \vee (\expression_1 \neq
  \expression_2 \wedge \bBlock(\expression_1) > 0)$$ resp.  $$(\expression_1 =
  \expression_2' \wedge \expression_2 = \expression_1') \vee (\expression_1 \neq
  \expression_2' \wedge \expression_1 \neq \expression_2' \wedge
  \bBlock(\expression_1) > 0),$$ requiring that start- resp. end-address(es) of
  the list are equal or at least the first list node is allocated;

  \item for each pair of points-to predicates  $\phi_i \equiv
  \ptsto{\expression_1}{\varUps_1}$ and $\phi_j \equiv \ptsto{\expression_2}{\varUps_2}$
  with $i\neq j$, the formula $$\bBlock(\expression_1) \neq  \bBlock(\expression_2) \vee
  (\bBlock(\expression_1) = \bBlock(\expression_2) \wedge (\expression_1+\size(\varUps_1) \le \expression_2 \vee
  \expression_2+\size(\varUps_2) \le \expression_1)),$$ encoding the semantics of the separating
  conjunction, requiring that two pointers are either in different blocks or
  they belong to the same blocks but their associated memory does not overlap;

  \item for each pair of predicates $\phi_i \equiv \ptsto{\expression}{\varUps}$
  and $\phi_j \equiv \sll{\segment(\varX,\varY)}(\expression_1,\expression_2)$
  with $i\neq j$, the formula $$\bBlock(\expression)\neq \bBlock(\expression_1) \vee \expression_1
  = \expression_2,$$ requiring that either the list is empty or the list head
  and the pointer belong to different blocks (which follows from the requirement
  that $\segment$ is block-closed);

    for $\phi_i \equiv \ptsto{\expression}{\varUps}$ and $\phi_j \equiv
    \dll{\segment(\varX,\varY,\varZ)}(\expression_1,\expression_2,\expression_1',\expression_2')$,
    we derive a similar formula which we do not give here explicitly;

  \item for each pair of predicates $\phi_i \equiv
  \sll{\segment(\varX,\varY)}(\expression_1,\expression_2)$ and $\phi_j \equiv
  \sll{\segment'(\varX,\varY)}(\expression_1',\expression_2')$ with $i\neq j$,
  the formula $$\bBlock(\expression_1)\neq \bBlock(\expression_1') \vee \expression_1 = \expression_2
  \vee \expression_1 = \expression_2,$$ requiring that either at least one list
  is empty or the list heads belong to different blocks (which follows from the
  requirement that $\segment$ and $\segment'$ are block-closed);

    if either $\phi_i$ or $\phi_j$ or both are predicate of form
    $\dll{\segment(\varX,\varY,\varZ)}(\expression_1,\expression_2,\expression_1',\expression_2')$,
    we derive similar formulae, which we do not give here explicitly.

\end{itemize}

It is easy to verify the correctness of the reduction:
Every model $(\stack,\blocks,\mem)$ of $\psi = \phi_1 * \cdots * \phi_l$ gives rise to an interpretation of $\bBlock$, $\eBlock$ and the variables $\psi$ that satisfies the above constraints.
Hence, we get that $\mathit{UNSAT}_\mathit{BV} \Rightarrow \mathit{UNSAT}_\mathit{SL}$.
We further remark that the generated constraints are clearly incomplete (e.g., we are ignoring the segment predicate $\segment$ in our translation).
Hence, we cannot expect that $\mathit{SAT}_\mathit{BV} \Rightarrow \mathit{SAT}_\mathit{SL}$ holds in general.
We finally add that in our implementation we do not give the forall-quantified axioms to the SMT solver (the first two bullet points in the above list) but rather instantiations of the axioms for all expressions appearing on the left-hand side of points-to predicates and as head nodes of list-segment predicates;
this improves the performance of the solver but adds another source of incompleteness.

%===============================================================================
\subsection{Prototype Implementation}\label{sec:prototype}
%===============================================================================

% Candidate acronyms for the tool:
%
% BALL	        Bi Abduction Low Level
% BALLET	Bi-Abduction Low LEvel poinTer
% BULLET	Bi-Abduction Low LEvel poinTers
% BOLERO	Bi-Abduction Low lEvel pROgrams
% BOLERO	Bi-Abduction fOr Low lEvel pointeR manipulatiOn
% BROOM	        Bi-Abduction foR lOw pOinter Manipulation

We have implemented the proposed techniques in a prototype tool called Broom.
%
% (\underline{B}i-abduction fo\underline{r} L\underline{o}w-Level
% \underline{o}inter \underline{M}anipulation).
%
Its source code is publicly available\footnote{\url{https://pajda.fit.vutbr.cz/rogalew/broom}} under GNU GPLv3.
%
% https://pajda.fit.vutbr.cz/rogalew/broom need to rename repo for artifact
%
The tool itself is implemented in OCaml.
The SMT queries discussed above are answered using the Z3 solver \cite{Z3:08}.
The front-end of Broom is based on Code Listener \cite{CodeListener:11}, a
framework providing access to the intermediate code of a~compiler (as, e.g.,
\texttt{gcc}).

We note that, in the implementation of Broom, we relaxed the requirement put on
the \emph{ptrplus} operation of our minilanguage
(Sec.~\ref{sec:language-plus-semantics}), which requires that the pointer
resulting from the expression $y+z$ stays within the allocated block---i.e.,
$\bBlock(S_B(y))\leq S_B(y)+S_B(z) \leq \eBlock(S_B(y))$.
According to the C standard, the relaxation of this condition leads to undefined
behaviour, but it is often used in low-level system code as, e.g., in the Linux
list implementation.
In our implementation, we allow pointers to have values outside of the allocated
blocks, but we explicitly track their \emph{provenance} (i.e., the basis wrt
which they are defined) using the $\bBlock$ predicate.
%
%For instance, the formula $\bBlock(x)=x * \eBlock(x)=x+16 * y=x-8 *
%\bBlock(x)=\bBlock(x-8)$ is satisfiable, and the value of the pointer $y$ in
%each model is 8 bytes before the beginning of the block (modulo
%underflow)\footnote{Note that the formula does not require the location of $y$
%be allocated (nor does the code of Linux lists require that). Indeed, one never
%reads/writes from/to the location. It is used for pointer arithmetic only.
%Hence, there is no conflict with our axioms requiring all \emph{allocated}
%locations to fall into their respective blocks.}.

Broom comes with a number of parameters that can be set for the analysis, with
the most important being the following ones:
\begin{itemize}

  \item \emph{Solver timeouts:} Timeouts of the underlying solver can be set
  separately for symbolic execution, widening, and formula simplification.
  Using a timeout, one can balance between speed and precision.
  With a lower timeout, the analysis is faster, but some functions need not be
  fully analysed due to an abduction or widening failure.
  The default timeouts used in our later presented experiments are 2000ms for
  the symbolic execution, 200ms for widening, and 100ms for formula
  simplification. \footnote{Note that the tool sometimes
  produces an SMT query that may kill the solver’s heuristics.
  The solver then does not provide any result for a long time (tens of minutes) or
  even does not terminate. The user can disable the timeouts for the price of nontermination of
  the analysis of some examples. According to our experience, these \emph{bad} queries are usually related to
  infeasible branches of the computation. Properly set timeouts can bypass this issue for the
  price of failing on some abduction queries (which, however, does not compromise the
  soundness).}

  \item \emph{Number of loop unfoldings:} A limit on the number of loop
  unfoldings is used to stop the loop analysis when a fixpoint is not computed
  within a~given number of loop iterations.
  Then, either no contract or partial contracts are returned.
  The default value used in our experiments is 5.

  \item \emph{Abduction strategy:} The abduction strategy can be set as follows:
  In the standard configuration, it  follows the order of rules presented in
  Sec. \ref{sec:abduction-rules}.
  The tool also supports an alternative strategy where the
  \texttt{alias-weak/strong} rules are used to derive multiple abduction
  solutions as discussed in Sect.~\ref{sec:abduction-rules}.
  This may lead to an exponential blow-up in the number of contracts for
  particular functions (a lot of them useless) together with a blowup of the
  running time.
  On the other hand, this strategy allows us to fully verify some of our most
  complicated code fragments (namely, the intrusive lists discussed below).
  As a part of our future research, we would like to study some
  heuristically-driven application of this strategy that would not explore so
  many useless contracts.

\end{itemize}

The errors are detected as follows: When a precondition from a contract for \emph{free},
\emph{load}, or \emph{store} can not be satisfied, an error message is provided to the
user. Memory leaks are reported when a variable on the
left of some points-to or list-segment predicate is found to be unreachable (via points-to
assertions, including list-segments) from the global variables, the function parameters,
and the return variable.
Note that a contract computed for a used defined function underapproximates its behaviour
and, therefore, we can not easily flag as an error the situation, where a contract for
an user defined function is not applicable.

Finally, we would like to stress that Broom is now in a stage of a very early
prototype, intended mainly to illustrate the theoretical potential of our
technique, with huge space for performance optimizations.
As a primary source of possible optimisations, we see the way how Broom
interacts with the SMT solver (the cost of SMT queries represents a very
significant part of the cost of the entire analysis).
One way that we see as highly promising for optimisations in this direction is
to use static pre-evaluation of some SMT queries---if one can statically
evaluate a query, an expensive solver call can be avoided.
This can significantly limit the number of SMT queries and improve the running
time.
We have already partially implemented some static pre-evaluation for the $\phi
\models \expression = \expression' * \True$ queries within
\texttt{match}/\texttt{split} abduction rules, which alone reduced the running
time by 25~\% at some examples.
Further optimization possibilities then lie, e.g., in incremental solving,
caching solver results, and/or introducing heuristics to decrease the amount of
nondeterminism in the abduction rules.
As for the last mentioned possibility, especially in the case of the
\texttt{match}/\texttt{split} rules there can be several candidate predicates
$\ptsto{\expression}{\expressionAlt}$ on the LHS and several candidate
predicates $\ptsto{\expression'}{\expressionAlt'}$ on the RHS, which one needs
to consider, and it would be very helpful to have some guidance in this process.

%===============================================================================
\subsection{Experiments}\label{sec:Experiments}
%===============================================================================

We evaluate our tool Broom on a set of experiments in which we analyse various
fragments of list manipulating code.
Since Broom is in a highly prototypical stage, we do not venture into analysing
large code bases.
Instead, we concentrate on shorter but complex code highlighting what our
approach implemented in the tool can handle (and what other tools do typically
not manage).

The considered code was pre-processed in the following ways:
(1) All appearances of the so-far unsupported constructions
\verb|&var| and \verb|var.next|
were replaced by
\verb|p_var| and \verb|p_var->next|,
respectively, where
\verb|p_var = alloca(sizeof(*p_var))|.
(2) We replaced \texttt{for} loops with integer bounds by non-deterministic
\texttt{while} loops because our abstraction and entailment are currently very
limited when working with integers.
Both of the above is planned to be resolved within our future work.
Further, we analysed all the code assuming that heap allocation always succeeds.

The experiments were run on a machine with an Intel i7-4770 processor with 32
GiB of memory.
The current implementation of Broom uses a single core only.
We compare our results with those of Infer
v1.1.0\footnote{\url{https://github.com/facebook/infer/releases/tag/v1.1.0}} and
Gililan (PLDI'20
version)\footnote{\url{https://github.com/GillianPlatform/Gillian/releases/tag/PLDI20}},
which are the only tools we are aware of that can analyse at least some of the
code we are interested in.
We note that Infer was running with debug information enabled (using the command
\texttt{infer run -\hspace{-2mm} -debug}) as we wanted to manually check the
obtained contracts.
The debug option may increase the running time of Infer, but, as one can see in
Table \ref{table:dll-standard-strategy}, the running times are not an issue for
Infer.

\newcommand{\NE}{OK} % no error found
\newcommand{\IE}{IE} % internal error
\newcommand{\PE}{PE} % internal parsing error
\newcommand{\DF}{DF} % double free
\newcommand{\ML}{ML} % memory leak
\newcommand{\bm}[1]{\small{\fontfamily{lmr}\selectfont #1}} % benchmark

Table~\ref{table:dll-standard-strategy} presents a comparison of the results
obtained using Broom, Infer, and Gillian on our collection of list-manipulating
code fragments.\footnote{All the code is available together with our tool.}
To get the results, Broom was used with its standard abduction strategy where
the \texttt{alias-weak}/\texttt{alias-strong} rules are used only if no other
rule is applicable.
For each of the cases, the table gives first the total number of functions that
the benchmark consists of.
Next to it, separately for Broom, Infer, and Gillian, we give the time the tools
took for the analysis.
Further, we list the number of functions for which the respective tool produced
a non-trivial contract. There are up to three numbers in the form $a/b/c$ ($b$
or $c$ can be omitted), representing the number of functions for which the
respective tool computes (a) complete contracts, (b) sound but only partial
contracts, and (c) error contracts---i.e., preconditions under which a given
function is bound to fail, which are provided by Gillian only.
Finally, we also provide a remark whether the tool reported some error (or
whether it itself hit some internal error).
The expected and really obtained analysis results are encoded as follows
(including internal errors of an~analyser): \NE = \emph{no error
found}\footnote{We note that, as far as our experience reaches, Gillian produces
its error contracts whenever there is a risk of a null-pointer dereference. In
many cases, e.g., in the Linux list library, the error summaries provide a
correct result, which, however, does not take into account the fact that the
library is designed such that the appropriate functions are never called with a
null argument. At the same time, Gillian may miss real, higher-level errors
present in the code, which were those we expected to be reported. In such cases,
we say in the table in the column for obtained results that the (expected) error
was not found.}, \DF = \emph{double free}, \ML = \emph{memory leak}, \IE =
\emph{internal error}, \PE = \emph{internal parsing error}.

% infer
% link to real name				time	time by infer
% circ_dll_simple.c				0.5s	0.07s
% circ_dll_simple-err.c			0.5s	0.07s
% circ_dll_embeded_int.c		0.5s	0.08s
% predator-test-0156.c			1.5s	1.05s
% linux-list-t2.c				0.7s	0.2s
% linux-list-t2-err.c			0.6s	0.15s
% linux-list.c					1s		0.5s
% intrusive-list.c				0.7s	0.2s
% intrusive-list-minimal		0.7s	0.15s
% test_intrusive_single_file.c	0.97s	0.37s

\begin{table}

\caption{Experiments with the standard abduction strategy of Broom and a
comparison with results obtained from Infer and Gillian.
%
%We give the total
%number of all functions in the source code, the expected result, the runtime of
%the considered analysers, the number of functions for which the tool computed
%some contracts (wrt a~manual check) and the analysis result. The number of
%computed contracts is presented in the form $a/b/c$ ($b$ or $c$ can be omitted)
%representing the number of functions for which the tool computes (a) complete
%contracts, (b) sound but only partial contracts and (c) error contracts
%(preconditions leading to an erroneous execution; provided only by Gillian).
%The expected/analysis results are as follows (including internal errors of
%an~analyser): \NE = \emph{no error found}, \DF = \emph{double free}, \ML =
%\emph{memory leak}, \IE = \emph{internal error}, \PE = \emph{internal parsing
%error}.
%
}

\label{table:dll-standard-strategy}
\begin{tabular} {l | l | c | c | c | l | c | c | l | c | c | l }
\hline
Name & Exp. & Fncs & \multicolumn{3}{c|}{Broom} & \multicolumn{3}{c|}{Infer} & \multicolumn{3}{c}{Gillian} \\
 & result & total & T & Fncs & Res & T & Fncs & Res & T & Fncs & Res \\
 & & & [s] & contr & & [s] & contr & & [s] & contr & \\
\hline
\bm{circ-DLL} & \ML & 4 & 6 & 4 & \ML & 0.5 & 1/1 & \IE & 1.2 & 1/0/2 & \NE\\
\bm{circ-DLL-err} & \DF & 4 & 6 & 4 & \DF & 0.5 & 1/1 & \IE & 1.2 & 1/0/2 & \NE\\
\bm{circ-DLL-embedded} & \NE & 4 & 9 & 4 & \NE & 0.5 & 1/2 & \IE & 0.6 & 0 & \PE\\
\bm{Linux-list-1} & \ML & 11 & 56 & 10 & \ML & 1.5 & 2/3 & \NE & 0.6 & 0 & \PE\\
\bm{Linux-list-2} & \NE & 11 & 42 & 11 & \NE & 0.7 & 1/6 & \IE & 0.6 & 0 & \PE\\
\bm{Linux-list-2-err} & \ML & 11 & 28 & 11 & \ML & 0.6 & 1/6 & \IE & 0.6 & 0 & \PE \\
\bm{Linux-list-all} & \NE & 23 & 267 & 21/2 & \NE & 1.0 & 7/15 &
  \IE & 44 & 8/0/9 & \NE\\
\bm{intrusive-list} & \NE & 15 & 99 & 10/5 & \NE & 0.7 & 4/3 & \NE  & 0.6 & 0 & \PE\\
\bm{intrusive-list-min} & \NE & 9 & 45 & 6/2 & \NE & 0.7 & 1/3 &
  \IE & 0.6 & 0 & \PE\\
\bm{intrusive-list-smoke} & \NE & 20 & 133 & 10/5 & \NE & 0.9 & 4/3 & \NE & 0.6 & 0 & \PE\\
\hline
\end{tabular}
\end{table}

We now discuss the individual cases in more detail---when doing so, we
concentrate on comparing the results of Broom with those of Infer that can get
somewhat closer to the results of Broom:\begin{itemize}

  \item \texttt{circ-DLL}: This example deals with a simple implementation of
  \emph{circular doubly-linked lists} (whose part is, in fact, used as the
  running example in Fig.~\ref{fig:example1}).
  The code includes functions for inserting the first element, inserting another
  element after an existing one, and for removing elements.
  Apart from that there is a higher-level function that inserts the first
  element, the second element, and them removes one of them.\footnote{This
  function can be viewed as an analysis harness while we were stressing that our
  analysis does not need such a harness. Here, we would like to stress that this
  indeed holds---none of the considered tools needs (nor in any way uses) the
  top-level function to be able to analyze the other functions. We use the
  harness as a model of any higher-level code using the list. Moreover, it
  allows us to show that the contracts that got generated for the particular
  functions are not complete enough, which shows up in the inability of the
  appropriate tool to analyse the higher-level functions.}
  The code contains no pointer arithmetic nor any other advanced features.
  It is intended to show that even in such a case our abduction rules restated
  wrt \cite{BiAbd09, BiAbd11} can bring some advantage.
  Namely, this is a consequence of that our use of the per-field separation
  allows us to cover more shapes of the data structures within a single
  contract.
  Indeed, as discussed already in Sect.~\ref{sec:illustration}, it produces a
  single contract for insertion into a cyclic list with one element and with
  more elements (intuitively, the per-field separation allows us to merge or keep
  separate the elements before/after the one being inserted: merging is
  necessary in a one-element list where the element before and after are in fact
  the same element).
  Infer cannot use the same reasoning and since it primarily favours
  scalability, it will come with a contract for inserting into lists with at
  least two elements.
  Consequently, it then fails to analyse the top level function.
  As for the memory leak reported by Broom, it is a real error caused by that
  one of the introduced elements is not deleted.

  \item \texttt{circ-DLL-err} is a variation on \texttt{circ-DLL} into which we
  introduced a double-free error.

  \item \texttt{circ-DLL-embedded} is another variation on \texttt{circ-DLL} in
  which the list implementation from \texttt{circ-DLL} is used as a basis of a
  simple intrusive list in which the list structure with the linking fields from
  \texttt{circ-DLL} is nested into a larger data structure.
  Effectively, this leads to a list consisting of these larger structures where,
  however, the list manipulating functions have no information about that the
  linking structure is a part of some bigger block.
  Contracts generated for the list-manipulating functions are thus to be applied
  on larger structures, leading to an application of our block splitting rules.

%!!!!!!!!!!!!!!!!!!!!!!!!!!!!!!
\eject % For LaTeX to be happy
%!!!!!!!!!!!!!!!!!!!!!!!!!!!!!!

  \item \texttt{Linux-list-1} is our first experiment with intrusive lists in
  the form they are used in the Linux kernel (for some more impression about
  Linux lists, see Fig.~\ref{fig:linux-list}).
  This particular code comes in particular from the benchmark suite of the
  Predator analyser \cite{PredatorSAS13}.\footnote{We note here that Predator
  can analyse the code, but---unlike Broom, Infer, or Gillian---it entirely
  relies on that the code is closed, i.e., it comes with a main function and has
  no further inputs.}
  The code contains multiple different functions for initialisation of the
  lists, for inserting into it, and for traversing the lists.
  The top-level code that is present then creates a circular Linux list nested
  into another circular Linux list.
  As can be from Fig.~\ref{fig:linux-list},
  the code involves pointer
  arithmetic (even in a form not supported by the C standard), and the use of
  nested structures leads to an application of our block splitting rules.
  The only function that Broom fails to handle is the function for traversing
  the entire list---the reason is that our so-far quite simple implementation of
  list abstraction fails in this case, and the otherwise correct computation
  diverges (which we, however, believe to be solvable in the close
  future---indeed, we can get fully inspired by the abstraction used in
  Predator; the concrete list abstraction used is not specific for our
  approach).
  The memory leak reported is a real one---it comes from the top-level function
  that does not destroy the list.

%~~~~~~~~~~~~~~~~~~~~~~~~~~~~~~~~~~~~~~~~~~~~~~~~~~~~~~~~~~~~~~~~~~~~~~~~~~~~~~~
\begin{figure}[t]
  \centering
    \begin{minipage}{.7\textwidth}
        \hspace{-1em}
        \includegraphics[width=\textwidth]{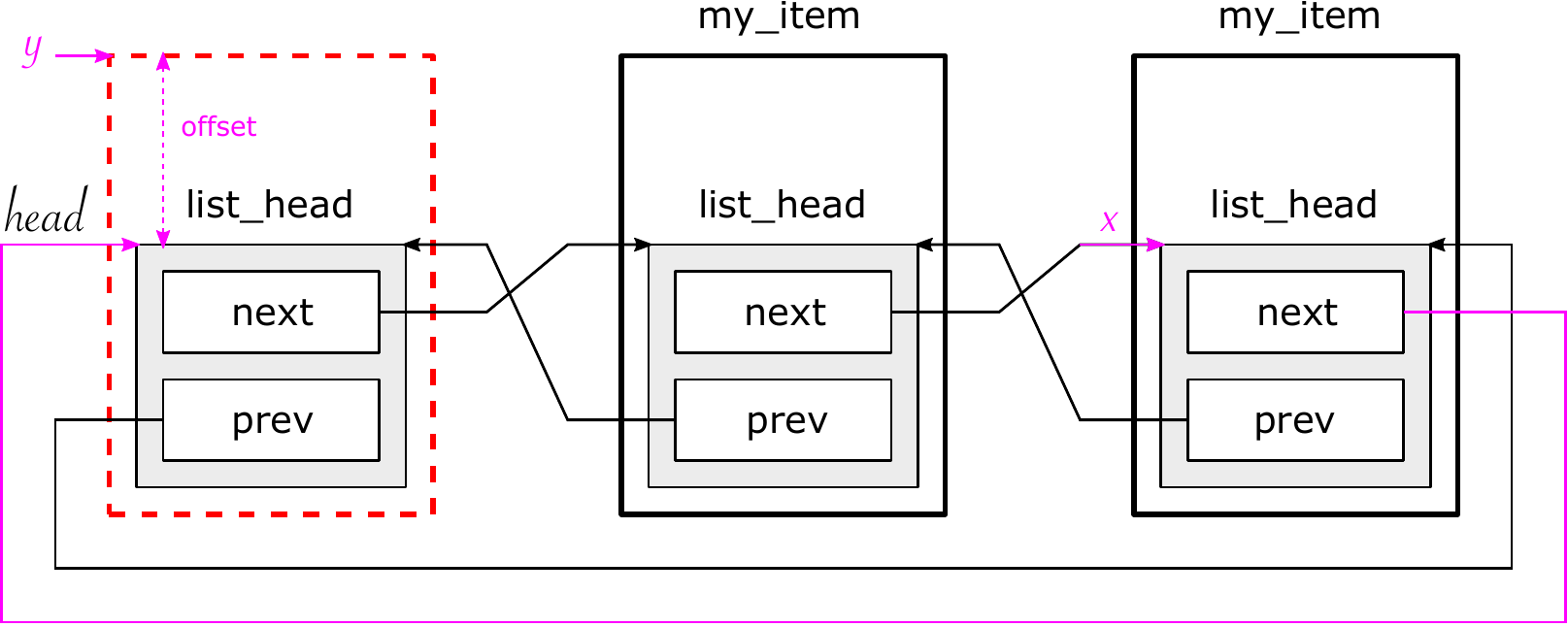}
    \end{minipage}%
    \begin{minipage}{.3\textwidth}
        \hspace{-1em}
        \centering
% \begin{lstlisting}[language=C,basicstyle=\scriptsize]
{\scriptsize
\begin{verbatim}
// moving to the next item
y = (struct my_item *)(
  (char *)x->next -
  offsetof(struct my_item,
           link)
  );
// testing, if it is head
if (head != &y->link) ...
\end{verbatim}}
% \end{lstlisting}
    \end{minipage}

  \caption{An illustration of the Linux list data structure. The gray boxes
  represent the linking structure \texttt{list\_head} that is nested into a
  user-defined structure \texttt{my\_item}, whose instances the user needs to be
  linked into a list. List-manipulating functions know nothing about the
  user-defined structures: they work with the linking structures only. The user
  data are accessible through pointer arithmetic only. Note that the head node
  of the list does \emph{not} have the user-defined envelope. The code shown on
  the right illustrates how the list is traversed. Note also that when one
  passes from the last element pointed by \texttt{x} to its successor (hence
  back to the head), the involved pointer arithmetic causes that the pointer
  \texttt{y} will be pointing out of the allocated space, which is, however,
  correct since it will never be dereferenced (just used for further pointer
  arithmetic).}

  \label{fig:linux-list}
\end{figure}
%~~~~~~~~~~~~~~~~~~~~~~~~~~~~~~~~~~~~~~~~~~~~~~~~~~~~~~~~~~~~~~~~~~~~~~~~~~~~~~~

  \item \texttt{Linux-list-2} is a variation on the above case. It contains
  functions for an initialisation of a Linux list, inserting elements at its
  tail, and for deleting the elements.
  The top-level function initializes the list, inserts several elements,
  traverses the elements one by one, and deletes them.
  Note that the sets of functions present in \texttt{Linux-list-1} and
  \texttt{Linux-list-2} have a non-empty intersection but are incomparable.

  \item \texttt{Linux-list-2-err} is a variation on \texttt{Linux-list-2} where
  one of the inserted elements is not deleted and hence a memory leak is caused.

  \item \texttt{Linux-list-all} contains the entire collection of functions
  defined for working with Linux lists without any top-level function.
  The collection includes functions for different kinds of insertion of
  elements, removal of elements, swapping of elements (both within a list and
  between lists), moving to the end or to another list, rotation, splicing, etc.
  We can see that Broom produced complete contracts for many more of the
  functions.
  The contracts from Infer often do not cover cases of lists of length
  0 or 1.
  In one of the remaining cases, Infer produced no result; and for the last one,
  it produced a partial result (that appears not to cover one of the branches of
  the function).

  \item \texttt{intrusive-list} is the intrusive list library\footnote{Described
  in the Patrick Wyatt's blog post “Avoiding game crashes related to linked
  lists”,
  \url{http://www.codeofhonor.com/blog/avoiding-game-crashes-related-to-linked-lists},
  on September 9th, 2012,
  and implemented in \url{https://github.com/robbiev/coh-linkedlist}.}.
  See Fig.~\ref{fig:intrusive-list} for an illustration how the data structure
  and the code looks like.
  Apart from features seen already above (pointer arithmetic and a need to deal
  with linking fields embedded into larger structures with a need to apply block
  splitting), the code contains also \emph{bit-masking}.
  In particular, one bit of the next pointers is used to mark pointers back to
  the head node, thus effectively marking the ``end'' of the circular list.
  A further intricacy of the code is that the insertion into the list touches
  three nodes that may be different but that may also collapse into a single
  node.
  In the case of the Linux list, we have mentioned a similar situation but with
  two nodes only.
  Having three nodes that possibly collapse is not only beyond the capabilities
  of Infer but also Broom if it is used with its basic abduction strategy.
  This is the reason why some of the contracts produced by Broom are also not
  complete in this case (e.g., effectively allowing insertion into a list with
  more than one node only).
  We will, however, show below that Broom can resolve the problem when using
  more power of the \texttt{alias-weak}/\texttt{alias-strong} rules, though for
  the price of quite increased runtime requirements.
  As for Infer, it is clearly visible that its coverage of the functions is much
  weaker (interestingly, we noticed that it completely ignored the bit-masking
  when deriving some of the contracts).

%~~~~~~~~~~~~~~~~~~~~~~~~~~~~~~~~~~~~~~~~~~~~~~~~~~~~~~~~~~~~~~~~~~~~~~~~~~~~~~~
\begin{figure}[t]
  \centering
    \begin{minipage}{.7\textwidth}
        \hspace{-1em}
        \includegraphics[width=\textwidth]{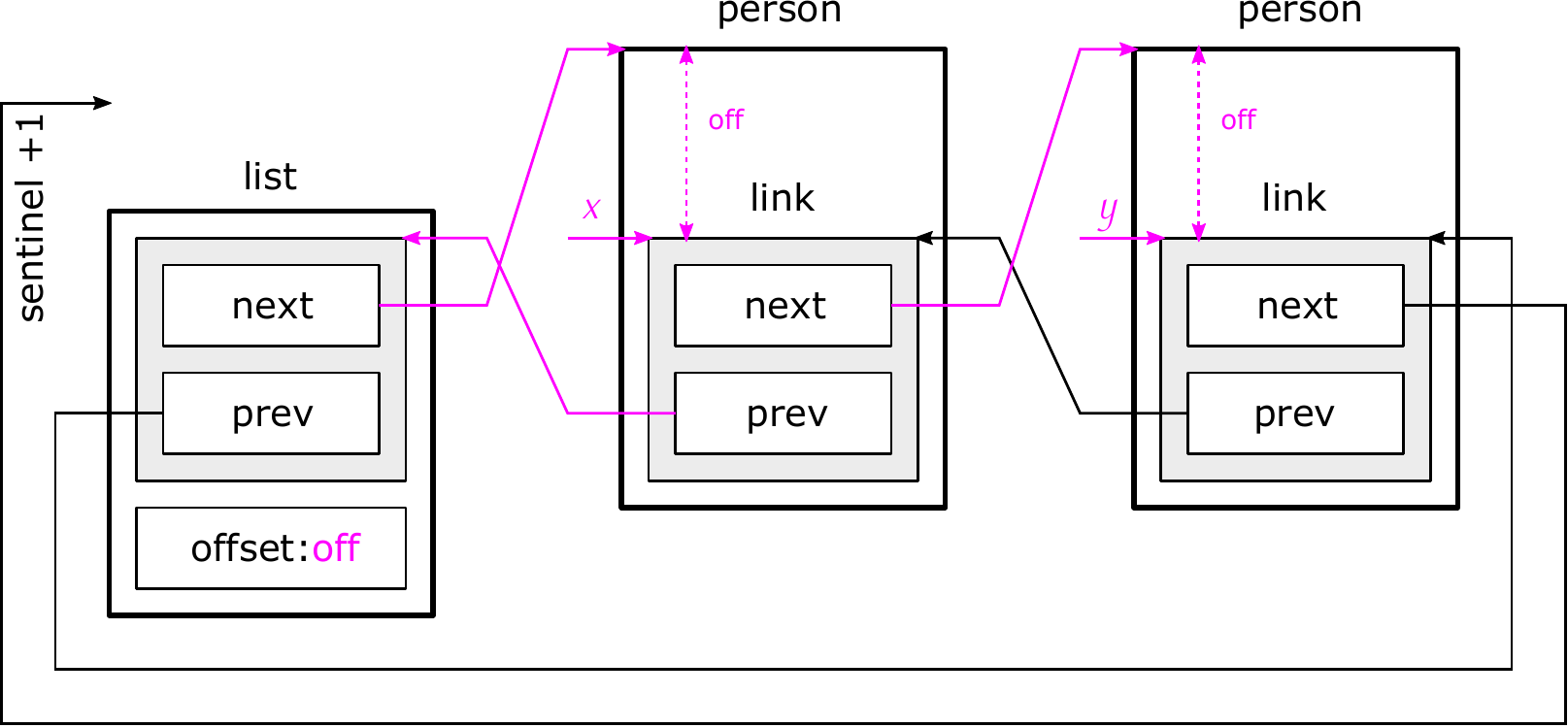}
    \end{minipage}%
    \begin{minipage}{.3\textwidth}
        \hspace{-1em}
        \centering
% \begin{lstlisting}[language=C,basicstyle=\scriptsize]
{\scriptsize
\begin{verbatim}
// offset from a node pointer
// to a link structure
size_t off = (size_t) x -
  ((size_t)x->prev->next & ~1);
// link of the next node
y = (link *)(
  (size_t)lnk->next & ~1 + off
  );
\end{verbatim}}
% \end{lstlisting}

    \end{minipage}

  \caption{An illustration of the intrusive list data structure. The code fragment shown to the right of the figure gives the code used
  in the function \texttt{link\_get\_next} to obtain the linking structure of
  the next node. Note the use of the pointer arithmetic including bit-masking
  (to clear the bit whose bit-masking on the next pointers is used to denote the
  sentinel node of the list).}

  \label{fig:intrusive-list}

\end{figure}
%~~~~~~~~~~~~~~~~~~~~~~~~~~~~~~~~~~~~~~~~~~~~~~~~~~~~~~~~~~~~~~~~~~~~~~~~~~~~~~~

  \item \texttt{intrusive-list-min} contains a subset of the functions
  considered above (for initializing a list, inserting an element, removing an
  element) together with a top-level function utilising these functions.
  Essentially, the intention here was to create an as small as possible example
  of the given kind already problematic for Infer.
  Again, even Broom cannot handle it fully under its standard abduction
  strategy.

%!!!!!!!!!!!!!!!!!!!!!!!!!!!!!!!!!!!!!!!
\enlargethispage{4mm}
%!!!!!!!!!!!!!!!!!!!!!!!!!!!!!!!!!!!!!!!

  \item \texttt{intrusive-list-smoke} contains the entire intrusive list library
  from above together with several top-level functions provided by the author to
  test the library.
  The tests use structures modelling some personal records to be linked into a
  list via the embedded linking structures.
  They create a few such records, link them into a list, traverse them
  (forward/backward), and destroy the list.

\end{itemize}

We now proceed to our experiments with the
\texttt{alias-weak}/\texttt{alias-strong} rules.
As we have said above, these rules involve a lot of guessing.
Hence, if they are used to explore various possible abduction solutions based on
different aliasing scenarios, the running time may grow considerably, but it may
resolve situations that are otherwise not resolved.
To confirm this, we have applied Broom with the strategy of using the
\texttt{alias-weak}/\texttt{alias-strong} rules to explore different possible
abduction solutions with different possible aliasing scenarios on the intrusive
list case study.
The results are shown in Table~\ref{tab:strategy2}.
The first row concerns the experiment \texttt{intrusive-list-min} discussed
already above.
At that time, we noted that Broom could not fully handle some of the intrusive
list functions since they required it to merge three possibly independent nodes
into a single one.
As can be seen in Table~\ref{tab:strategy2}, with the help of the
\texttt{alias-weak}/\texttt{alias-strong} rules, Broom does fully manage even
this problem (though the runtime grew a lot).
The next two rows---\texttt{intrusive-list-min-2} and
\texttt{intrusive-list-min-3}---are variations on the previous case where we
intentionally introduced some bugs, which were correctly discovered.
Finally, the last row shows a significant improvement even for the entire
library of intrusive lists together with its ``smoke'' tests.

\begin{table}
\caption{Experiments with \texttt{alias-weak}/\texttt{alias-strong} in Broom}
\label{tab:strategy2}
\begin{tabular} {l | l | c | c | c | l }
\hline
Name & Expected result & Fncs total & T [m] & Funcs contr & Res \\
\hline
\bm{intrusive-list-min}   & no error    &  9 &  46 &  9 & no error found\\
\bm{intrusive-list-min-2} & memory leak &  9 &  47 &  9 & memory leak\\
\bm{intrusive-list-min-3} & double free &  9 &  49 &  9 & double free\\
\bm{intrusive-list-smoke} & no error    & 20 & 505 & 16 & no error found\\
\hline
\end{tabular}
\end{table}

To sum up, we believe that, despite the highly prototypical nature of Broom, the
presented experiments show that the proposed approach is indeed capable of
handling code that is beyond the capabilities of other currently existing
approaches.

%%%%%%%%%%%%%%%%%%%%%%%%%%%%%%%%%%%%%%%%%%%%%%%%%%%%%%%%%%%%%%%%%%%%%%%%%%%%%%%%
\section{Conclusion and Future Works} \label{sec:conclusion}
%%%%%%%%%%%%%%%%%%%%%%%%%%%%%%%%%%%%%%%%%%%%%%%%%%%%%%%%%%%%%%%%%%%%%%%%%%%%%%%%

We have presented a new SL-based bi-abduction analysis capable of analysing
fragments of code that manipulates with various forms of dynamic linked lists
implemented using advanced low-level pointer operations.
This includes operations such as pointer arithmetic, bit-masking on pointers,
block operations, dealing with blocks of in-advance-unknown size, splitting them
into fields of not-fixed size, which can then be merged again, etc.
Although our approach builds on a body of previous research, especially,
\cite{SLNestedLists07, BiAbd09, BiAbd11, PredatorSAS13}, it extends it
significantly to handle the mentioned features.
In particular, to be able to handle the considered kind of code, we build on a
flavor of SL that uses a per-field separating conjunction instead of a
per-object separating conjunction, and we also introduce a number of new
abduction rules that allow us to deal with pointer arithmetic, block splitting
and merging, and so on.
We have implemented the proposed approach in a prototype tool Broom.
Despite Broom is a very early prototype, our experiments with it allowed us to
handle code fragments that are---to the best of our knowledge---out of the
capabilities of currently existing analysers.

We believe that there is a lot of space for further improvements of our results
in the future.
First, we would like to significantly optimize Broom to make it applicable to
larger code bases.
Here, we are thinking of applying many of the low-level optimisations applied in
other tools of a similar kind (replacing as many as possible of SMT queries by
answering them using simple static rules, using incremental SMT solving, caching
as much information as possible, etc.).
Next, we would like to explore possibilities how to reduce the amount of
non-determinism present in the abduction when the
\texttt{alias-weak}/\texttt{strong} rules are applied.
The goal is to preserve as much as possible of the power of these rules but
reduce the cost of applying them.
Perhaps, we could rely here partially on some pre-defined heuristics and
partially even on some techniques from machine learning, which are now being
applied even in SMT solvers and elsewhere.
Next, we would like to significantly improve our implementation of list
abstractions (inspired, e.g., by \cite{PredatorSAS13}) as well as numerical
abstractions.
Last but not least, we would also like to think of adding support for other
classes of dynamic data structures than lists.

%%%%%%%%%%%%%%%%%%%%%%%%%%%%%%%%%%%%%%%%%%%%%%%%%%%%%%%%%%%%%%%%%%%%%%%%%%%%%%%%
%%% BIBLIOGRAPHY
%%%%%%%%%%%%%%%%%%%%%%%%%%%%%%%%%%%%%%%%%%%%%%%%%%%%%%%%%%%%%%%%%%%%%%%%%%%%%%%%

\bibliography{main}

\vfill\eject

%%%%%%%%%%%%%%%%%%%%%%%%%%%%%%%%%%%%%%%%%%%%%%%%%%%%%%%%%%%%%%%%%%%%%%%%%%%%%%%%
\appendix
%%%%%%%%%%%%%%%%%%%%%%%%%%%%%%%%%%%%%%%%%%%%%%%%%%%%%%%%%%%%%%%%%%%%%%%%%%%%%%%%

%%%%%%%%%%%%%%%%%%%%%%%%%%%%%%%%%%%%%%%%%%%%%%%%%%%%%%%%%%%%%%%%%%%%%%%%%%%%%%%%
\section{An Illustration of the Analysis of Conditions and Loops}
\label{app:illustr-cond-loops}
%%%%%%%%%%%%%%%%%%%%%%%%%%%%%%%%%%%%%%%%%%%%%%%%%%%%%%%%%%%%%%%%%%%%%%%%%%%%%%%%

When the code of a function contains conditional branching, one splits the
computation for each particular branch, leading to a set of
$\{(\pre_i,\post_i)\}$ pairs at the end of the function.
When all the $\pre_i$ are mutually disjoint (i.e. the branching is determined by
the precondition), one can use directly all the thus obtained contracts.
This is illustrated by the analysis of the function $a$ in
Fig.~\ref{fig:ex-determ-cond}.
Note that the analysis leads to two contracts for the function $a$ determined by
the value of the parameter $x$.

%~~~~~~~~~~~~~~~~~~~~~~~~~~~~~~~~~~~~~~~~~~~~~~~~~~~~~~~~~~~~~~~~~~~~~~~~~~~~~~~
\begin{figure}[h]
\begin{minipage}{\textwidth-1em}
\small
{\fontfamily{lmr}\selectfont
struct sll \{ struct sll *next;\}; }

\begin{cfunction}{struct sll *\textbf{a}(struct sll *x)}{}
    \state{(\pre\equiv x=X, \quad\post\equiv x=X)}

  \LINE{if (x == NULL) \{}
    \state{\ \ \ \ (\pre\equiv X=\NULL * x=X, \quad\post\equiv X=\NULL * x=X)}

  \LINE{\ \ \ \ struct sll *y = malloc(sizeof(struct sll));}
    \state{\ \ \ \ (\pre\equiv X=\NULL * x=X, \quad
	  \post\equiv \exists Y. \ \ptstobyte{Y}{\top}{8} * \bBlock(Y)=Y * X=\NULL * x=X * y=Y)}
  \LINE{\ \ \ \ y-->next = NULL;}
    \state{\ \ \ \ (\pre\equiv X=\NULL * x=X, \quad
	  \post\equiv \exists Y. \ \ptstobyte{Y}{0}{8} * \bBlock(Y)=Y * X=\NULL * x=X * y=Y)}

  \LINE{\ \ \ \ x = y;}
    \state{\ \ \ \ (\pre\equiv X=\NULL * x=X, \quad
	  \post\equiv \exists Y. \ \ptstobyte{Y}{0}{8} * \bBlock(Y)=Y * X=\NULL * x=y=Y)}

  \LINE{\}}
    \state{(\pre_1\equiv X\neq \NULL * x=X, \quad
	  \post_1\equiv X \neq \NULL * x=X)}
    \state{(\pre_2\equiv X=\NULL * x=X, \quad
	  \post_2\equiv \exists Y. \ \ptstobyte{Y}{0}{8} * \bBlock(Y)=Y * X=\NULL * x=y=Y)}

  \LINE{return x;}
\end{cfunction}
\finalstate{
     (\pre_1\equiv X\neq \NULL * x=X, \quad
      \post_1\equiv X \neq \NULL * ret=x=X) \\
     (\pre_2\equiv X=\NULL * x=X, \quad
      \post_2\equiv \exists Y. \ \ptstobyte{Y}{0}{8} * \bBlock(Y)=Y * X=\NULL * ret=x=Y)}

\end{minipage}
  \caption{An illustration of the analysis of a deterministic condition}
  \label{fig:ex-determ-cond}
\end{figure}
%~~~~~~~~~~~~~~~~~~~~~~~~~~~~~~~~~~~~~~~~~~~~~~~~~~~~~~~~~~~~~~~~~~~~~~~~~~~~~~~

For an illustration of how a loop is analysed, see Fig.~\ref{fig:ex-loop}.
When analysing a loop, branching is encountered whenever the computation reaches
the loop header.
For instance, in the example, starting from $(\pre_{11},\post_{11})$, one
continues inside the loop with $(\pre_{21},\post_{21})$ and out of the loop with
$(\pre_{41},\post_{41})$.
At the end of the loop, the analysis returns to the location corresponding to
the loop header, which is a part of the set of the control-flow cut-points.
%
% $\footnote{Each loop must have at least one location in the $V_{cut}$}.
%
Therefore one checks whether the current pre-/post-pair is covered by some
so-far computed one related to the loop header.
If no, the abstraction is run, and the new pair is added to those so-far
computed for the loop header.
In particular, the pair $(\pre_{32},\post_{32})$ is not covered, and so we add
$(\pre_{32}^{\alpha},\post_{32}^{\alpha})$ to the line corresponding to the loop
header.
When the pair is covered by another one, the analysis discards the pair.
For example, $(\pre_{33},\post_{33})$ is covered by $(\pre_{13},\post_{13})$
because $(\exists L2,L3. \pre_{33})\models (\exists L_2. \pre_{13})$ and $(\exists
L2,L3. \post_{33})\models (\exists L_2. \post_{13})$.
For all the \emph{contract candidates} at the end of the loop, one has to use
the second analysis phase to prove that the abstraction applied on the
precondition did not introduce any bad configuration into the precondition.

%~~~~~~~~~~~~~~~~~~~~~~~~~~~~~~~~~~~~~~~~~~~~~~~~~~~~~~~~~~~~~~~~~~~~~~~~~~~~~~~
\begin{figure}[t]
\begin{minipage}{\textwidth-1em}
\small
{\fontfamily{lmr}\selectfont
struct sll \{ struct sll *next;\}; }

\begin{cfunction}{struct sll *\textbf{loop}(struct sll *x)}{}
    \state{(\pre_{11}\equiv x=X, \quad\post_{11}\equiv x=X)}
    \state{(\pre_{12}\equiv \ptsto{X}{l_1} * x=X, \quad
	  \post_{12}\equiv \ptsto{X}{L_1} * x=L_1)}
    \state{(\pre_{13}\equiv \sll{}(X,L_2) * x=X, \quad
	\post_{13}\equiv \sll{}(X,L_2) * x=L_2 )}

  \LINE{while (x != NULL) \{}
    \state{\ \ \ \ (\pre_{21}\equiv X\neq \NULL * x=X, \quad
	  \post_{21}: X\neq \NULL * x=X)}
    \state{\ \ \ \ (\pre_{22}\equiv \ptsto{X}{L_1} * L_1\neq \NULL, \quad
	  \post_{22}\equiv \ptsto{X}{L_1}  * L_1\neq \NULL * x=L_1)}
    \state{\ \ \ \ (\pre_{23}\equiv \sll{}(X,L_2) * L_2\neq \NULL * x=X, \quad
	\post_{23}\equiv \sll{}(X,L_2) * L_2 \neq \NULL * x=L_2)}

  \LINE{\ \ \ \ x = x-->next;}
    \state{\ \ \ \ (\pre_{31}\equiv \ptsto{X}{L_1} * x=X,\quad
       \post_{31}\equiv \ptsto{X}{L_1} * x=L_1)}
    \state{\ \ \ \ (\pre_{32}\equiv X\rightarrow L_1 *  \ptsto{L_1}{L_2} * x=X,\quad
	   \post_{32}\equiv \ptsto{X}{L_1} * \ptsto{L_1}{L_2} * x=L_2 )}
    \state{\ \ \ \ (\pre_{32}^{\alpha}\equiv \sll{}(X,l_2) * x=X,\quad
	  \post_{32}^{\alpha}\equiv \sll{}(X,L_2) * x=L_2 )}
    \state{\ \ \ \ (\pre_{33}\equiv \sll{}(X,L_2) * \ptsto{L_2}{L_3} * x=X,\quad
	  \post_{33}\equiv \sll{}(X,L_2) * \ptsto{L_2}{L_3} * x=L_3 )}
    \discardstate{(\pre_{33},\post_{33})$---entailment $(\exists
	L_3. \pre_{33})\models (\exists L_2. \pre_{13})$ and
	$(\exists L_3. \post_{33})\models (\exists L_2. \post_{13})}

  \LINE{\}}
    \state{(\pre_{41}\equiv X=\NULL * x=X, \quad\post_{41}\equiv X=\NULL * x=X)}
    \state{(\pre_{42}\equiv \ptsto{X}{L_1} * L_1=\NULL * x=X, \quad
	  \post_{42}\equiv \ptsto{X}{L_1} * x=L_1=\NULL)}
    \state{(\pre_{43}\equiv \sll{}(X,L_2) * L_2=\NULL * x=X,\quad
	  \post_{43}\equiv \sll{}(X,L_2) * x=L_2=\NULL)}
  \LINE{\dots}
\end{cfunction}

\end{minipage}
  \caption{An illustration of the analysis of a loop}
  \label{fig:ex-loop}
\end{figure}
%~~~~~~~~~~~~~~~~~~~~~~~~~~~~~~~~~~~~~~~~~~~~~~~~~~~~~~~~~~~~~~~~~~~~~~~~~~~~~~~

Finally, we provide in Fig.~\ref{fig:ex-nondet} an illustration of how the
analysis deals with non-deterministic branching---plus we also illustrate the use of the two analysis
phases.
The function $f$ from Fig.~\ref{fig:ex-nondet} has an if-statement with a non-deterministic condition, which is translated to two outgoing control-flow edges that are not labelled by any condition in our modelling.
The analysis of the \emph{then} branch leads to the candidate precondition
$\pre_1\equiv\ptsto{X}{L} * x=X$.
The analysis of the \emph{else} branch leads to the candidate precondition
$\pre_2\equiv  x=X$.
%
%(The conjunct $E_1\bowtie k$ can be removed because $E_1$ is an existential variable and the conjunct can always be satisfied.)
%
Then, we run the second phase which can be done in a quite similar way as the first phase.
However, whenever we need to strengthen the precondition, we immediately stop
with a failure.
The computation first starts with the pair $(\pre_1,\pre_1)$, and we compute two
postconditions $\post_{11}$ and $\post_{12}$.
Therefore, the final contract is $(\pre_1,\post_{11}\vee\post_{12})$.
Then we start the second phase from $(\pre_2,\pre_2)$, but we realize that the
precondition is not safe enough to execute all possible branches of the
function---we realize that we need to infer a stronger one by using the
\texttt{learn-finish} abduction rule.
Therefore we drop $\pre_2$.

%~~~~~~~~~~~~~~~~~~~~~~~~~~~~~~~~~~~~~~~~~~~~~~~~~~~~~~~~~~~~~~~~~~~~~~~~~~~~~~~
\begin{figure}[hp]
\begin{minipage}{\textwidth-1em}
\small
{\fontfamily{lmr}\selectfont
struct sll \{ struct sll *next;\}; }

~

{\fontfamily{lmr}\selectfont\color{gray}// The first analysis phase}

\begin{cfunction}{struct item *\textbf{f}(struct item *x)}{}
    \state{(\pre\equiv x=X, \quad\post\equiv x=X)}
  \LINE{if (random())}
    \state{\ \ \ \ (\pre\equiv x=X, \quad\post\equiv x=X)}
  \LINE{\ \ \ \ return x-->next;}
    \state{\ \ \ \ (\pre\equiv \ptsto{X}{L} * x=X, \quad
	  \post\equiv \ptsto{X}{L} *ret=L * x=X)}
  \LINE{else}
    \state{\ \ \ \ (\pre\equiv x=X, \quad\post\equiv  x=X)}
  \LINE{\ \ \ \ return x;}
    \state{\ \ \ \ (\pre\equiv x=X, \quad\post\equiv  ret=x=X)}
\end{cfunction}
\candidate{\pre_1\equiv \ptsto{X}{L} * x=X \\ \pre_2\equiv x=X}

~

{\fontfamily{lmr}\selectfont\color{gray}// The second analysis phase}

\begin{cfunction}{struct item *\textbf{f}(struct item *x)}{}
    \state{(\pre_1\equiv \ptsto{X}{L} * x=X, \quad
	  \post_1\equiv \ptsto{X}{L} * x=X)}
    \state{(\pre_2\equiv x=X, \quad\post_2\equiv x=X)}
  \LINE{if (random())}
    \state{\ \ \ \ (\pre_1\equiv \ptsto{X}{L} * x=X, \quad
	  \post_1\equiv \ptsto{X}{L}  * x=X)}
    \state{\ \ \ \ (\pre_2\equiv x=X, \quad\post_2\equiv x=X)}
  \LINE{\ \ \ \ return x-->next;}
    \state{\ \ \ \ (\pre_1\equiv \ptsto{X}{L} * x=X, \quad
	  \post_1\equiv \ptsto{X}{L} * ret=L * x=X)}
    \state{\ \ \ \ (\pre_2\equiv {\color{red}\ptsto{X}{L}} * x=X, \quad
	  \post_2\equiv \ptsto{X}{L} *  ret=L * x=X)}
    \discardstate{(\pre_2,\post_2)$ because we should learn a missing part of the
      precondition $\ptsto{X}{L}}
  \LINE{else}
    \state{\ \ \ \ (\pre_1\equiv \ptsto{X}{L} *  x=X, \quad
	  \post_1\equiv \ptsto{X}{L} * x=X)}
  \LINE{\ \ \ \ return x;}
    \state{\ \ \ \ (\pre_1\equiv \ptsto{X}{L} *  x=X, \quad
	  \post_1\equiv \ptsto{X}{L} * ret=x=X)}
\end{cfunction}
\finalstate{
    (\pre_{11}\equiv \ptsto{X}{L} * x=X, \quad
	  \post_{11}\equiv \ptsto{X}{L} * ret=L * x=X) \\
    (\pre_{12}\equiv \ptsto{X}{L} * x=X, \quad
	  \post_{12}\equiv \ptsto{X}{L} * ret=x=X)}
\end{minipage}
  \caption{An illustration of the analysis of a non-deterministic condition}
  \label{fig:ex-nondet}
\end{figure}
%~~~~~~~~~~~~~~~~~~~~~~~~~~~~~~~~~~~~~~~~~~~~~~~~~~~~~~~~~~~~~~~~~~~~~~~~~~~~~~~

\newpage

%%%%%%%%%%%%%%%%%%%%%%%%%%%%%%%%%%%%%%%%%%%%%%%%%%%%%%%%%%%%%%%%%%%%%%%%%%%%%%%%
\section{An Illustration of Bit-masking and Pointer Arithmentics}
\label{app:illustr-bitmasking}
%%%%%%%%%%%%%%%%%%%%%%%%%%%%%%%%%%%%%%%%%%%%%%%%%%%%%%%%%%%%%%%%%%%%%%%%%%%%%%%%

The Figure \ref{fig:example5} containes an additional example taken from our intrusive-list
experiment (see Section \ref{sec:Experiments}). This example uses a (i) bit-masking to mark an first node in
the circular doubly-linked list and (ii) pointer arithmetic.

\begin{figure}[h]
\begin{minipage}{\textwidth-1em}
\small
{\fontfamily{lmr}\selectfont
struct link \{ struct link *prev; void *next; \}; }

\begin{cfunction}{struct link *\textbf{link\_get\_next}(struct link *l)}{}
    \state{\pre\equiv l=L, \quad\post\equiv l=L}
  \LINE{struct link *a = l-->prev;}
  % MISS: L -(8)->A * l=L * a=A
  % CURR: L -(8)->A * l=L * a=A
    \state{\pre\equiv \ptsto{L}{A} * l=L, \\
       \post\equiv \ptsto{L}{A} * l=L * a=A}

  \LINE{void *b = a-->next;}
  % MISS: L -(8)->A * A+8 -(8)->B * l=L * a=A * b=B
  % CURR: L -(8)->A * A+8 -(8)->B * l=L * a=A * b=B

    \state{\pre\equiv \ptsto{L}{A} * \ptsto{A+8}{B} * l=L, \\
       \post\equiv \ptsto{L}{A} * \ptsto{A+8}{B} * l=L * a=A * b=B}

  \LINE{size\_t c = (size\_t) b \& $\sim$1;}
% MISS: L -(8)->A * A+8 -(8)->B * l=L * a=A * b=B
% CURR: L -(8)->A * A+8 -(8)->B * C=B & -2 * l=L * a=A * b=B * c=C

    \state{\pre\equiv \ptsto{L}{A} * \ptsto{A+8}{B} * l=L, \\
       \post\equiv \exists C.\ \ptsto{L}{A} * \ptsto{A+8}{B} * C=B~\&~$-2$ ~*~l=L * a=A * b=B * c=C}

  \LINE{size\_t o = (size\_t) l - c; {\color{gray}// offset from a node pointer to a link structure}}
  % MISS: L -(8)->A * A+8 -(8)->B * C=B & -2 * l=L * a=A * b=B * c=C
  % CURR: L -(8)->A * A+8 -(8)->B * C=B & -2 * o=L-C * l=L * a=A * b=B * c=C * o=O
    \state{\pre\equiv \ptsto{L}{A} * \ptsto{A+8}{B} * l=L, \\
       \post\equiv \exists C,O.\ \ptsto{L}{A} * \ptsto{A+8}{B} * C=B~\&~$-2$ ~*~O=L-C * l=L * a=A * b=B * c=C * o=O}

  \LINE{void *d = (size\_t) l-->next;}
  % MISS: L -(8)->A * L+8 -(8)->D * A+8 -(8)->B * C=B & -2 * o=L-C * l=L * a=A * b=B * c=C * o=O * d=D
  % CURR: L -(8)->A * L+8 -(8)->D * A+8 -(8)->B * C=B & -2 * o=L-C * l=L * a=A * b=B * c=C * o=O * d=D
    \state{\pre\equiv \ptsto{L}{A} * \ptsto{L+8}{D} * \ptsto{A+8}{B} * l=L, \\
       \post\equiv \exists C,O.\ \ptsto{L}{A} * \ptsto{L+8}{D} * \ptsto{A+8}{B} * C=B~\&~$-2$
       ~*~O=L-C * l=L * a=A * \\ b=B * c=C * o=O * d=D}

  \LINE{size\_t f = (size\_t) d \& $\sim$1; {\color{gray}// link field for the next node}}
  % MISS: L -(8)->A * L+8 -(8)->D * A+8 -(8)->B * C=B & -2 * o=L-C * l=L * a=A * b=B * c=C * o=O * d=D
  % CURR: L -(8)->A * L+8 -(8)->D * A+8 -(8)->B * C=B & -2 * o=L-C * f=D & -2* l=L * a=A * b=B * c=C * o=O * d=D *f=F
    \state{\pre\equiv \ptsto{L}{A} * \ptsto{L+8}{D} * \ptsto{A+8}{B} * l=L, \\
       \post\equiv \exists C,F,O.\ \ptsto{L}{A} * \ptsto{L+8}{D} * \ptsto{A+8}{B} * C=B~\&~$-2$ ~*~O=L-C *
       \\ F=D~\&~$-2$ ~*~ l=L * a=A * b=B * c=C * o=O * d=D * f=F}

  \LINE{size\_t e = f + o;}
  % MISS: L -(8)->A * L+8 -(8)->D * A+8 -(8)->B * C=B & -2 * o=L-C * f=D & -2* l=L * a=A * b=B * c=C * o=O * d=D * f=F
  % CURR: L -(8)->A * L+8 -(8)->D * A+8 -(8)->B * C=B & -2 * o=L-C * f=D & -2 * E=F+O * l=L * a=A * b=B * c=C * o=O * d=D * f=F * e=E
    \state{\pre\equiv \ptsto{L}{A} * \ptsto{L+8}{D} * \ptsto{A+8}{B} * l=L, \\
       \post\equiv \exists C,E,F,O.\ \ptsto{L}{A} * \ptsto{L+8}{D} * \ptsto{A+8}{B} * C=B~\&~$-2$ ~*~O=L-C *
       \\ F=D~\&~$-2$ ~*~ E=F+O ~*~ l=L * a=A * b=B * c=C * o=O * d=D * f=F * e=E}

  \LINE{return (struct link *) e;}
  % MISS: L -(8)->A * L+8 -(8)->D * A+8 -(8)->B * C=B & -2 * O=L-C * f=D & -2* l=L * a=A * b=B * c=C * o=O * d=D * f=F
  % CURR: L -(8)->A * L+8 -(8)->D * A+8 -(8)->B * C=B & -2 * O=L-C * f=D & -2 * E=F+O * l=L * a=A * b=B * c=C * o=O * d=D * f=F * ret=e=E
    \state{\pre\equiv \ptsto{L}{A} * \ptsto{L+8}{D} * \ptsto{A+8}{B} * l=L, \\
       \post\equiv \exists C,E,F,O.\ \ptsto{L}{A} * \ptsto{L+8}{D} * \ptsto{A+8}{B} * C=B~\&~$-2$ ~*~O=L-C *
       \\ F=D~\&~$-2$ ~*~ E=F+O ~*~ l=L * a=A * b=B * c=C * o=O * d=D * f=F * ret=e=E}

\end{cfunction}
  % MISS: L -(8)->A * L+8 -(8)->D * A+8 -(8)->B * C=B & -2 * O=L-C * f=D & -2* l=L * a=A * b=B * c=C * o=O * d=D * f=F
  % CURR: L -(8)->A * L+8 -(8)->D * A+8 -(8)->B * C=B & -2 * O=L-C * f=D & -2 * E=F+O * l=L * a=A * b=B * c=C * o=O * d=D * f=F * ret=e=E
\finalstate{\pre\equiv \ptsto{L}{A} * \ptsto{L+8}{D} * \ptsto{A+8}{B} * l=L, \\
       \post\equiv \exists C,E,F,O.\ \ptsto{L}{A} * \ptsto{L+8}{D} * \ptsto{A+8}{B} * C=B~\&~$-2$ ~*~O=L-C *
       \\ F=D~\&~$-2$ ~*~ E=F+O ~*~ l=L * ret=E}

\end{minipage}
  \caption{An illustrative example of a code from Fig.~\ref{fig:intrusive-list} using pointer arithmetic and bit-masking}
  \label{fig:example5}
\end{figure}
%~~~~~~~~~~~~~~~~~~~~~~~~~~~~~~~~~~~~~~~~~~~~~~~~~~~~~~~~~~~~~~~~~~~~~~~~~~~~~~~

%%%%%%%%%%%%%%%%%%%%%%%%%%%%%%%%%%%%%%%%%%%%%%%%%%%%%%%%%%%%%%%%%%%%%%%%%%%%%%%%
\end{document}